\newtheorem{Proposition}{Proposition}
  \newtheorem{Remark}{Remark}
  \newtheorem{Corollary}[Proposition]{Corollary}
  \newtheorem{Lemma}[Proposition]{Lemma}
  \newtheorem{Theorem}{Theorem}
 \newtheorem{Note}[Remark]{Note}
\newcommand {\z}{{\noindent}}
\def\CC{\mathbb{C}}
 \def\RR{\mathbb{R}}
 \def\NN{\mathbb{N}}
\def\Re{\mathrm{Re}}
\def\Im{\mathrm{Im}}
\def\ds{\displaystyle}
 \def\({\left(} \def\){\right)} \makeindex
\author{O. Costin, M. Huang} \address{Mathematics Department\\The Ohio State University\\Columbus, OH 43220} \title[Gamow vectors and Borel summability]{Gamow Vectors and   Borel summation}
\begin{document}
\begin{abstract}
  We analyze
 the detailed time dependence of the wave function
  $\psi(x,t)$ for one dimensional Hamiltonians $H=-\partial_x^2+V(x)$
  where $V$ (for example modeling barriers or wells) and $\psi(x,0)$
  are {\em compactly supported}.

  We show that the dispersive part of $\psi(x,t)$, its asymptotic
  series in powers of $t^{-1/2}$, is Borel summable.  The remainder,
  the difference between $\psi$ and the Borel sum, is a convergent
  expansion of the form $\sum_{k=0}^{\infty}g_k
  \Gamma_k(x)e^{-\gamma_k t}$, where $\Gamma_k$ are the Gamow vectors
  of $H$, and $\gamma_k$ are the associated resonances; generically,
  all $g_k$ are nonzero. For large $k$, $\gamma_{k}\sim const\cdot
  k\log k +k^2\pi^{2}i/4$. The effect of the Gamow vectors is visible
  when time is not very large, and the decomposition defines
  rigorously resonances and Gamow vectors in a nonperturbative regime,
 in a physically relevant way.

  The decomposition allows for calculating
  $\psi$  for moderate and large $t$, to any prescribed
  exponential accuracy, using optimal truncation of power series plus finitely many Gamow vectors contributions.

  The analytic structure of $\psi$ is perhaps surprising: in general
  (even in simple examples such as square wells), $\psi(x,t)$ turns
  out to be $C^\infty$ in $t$ but nowhere analytic on $\RR^+$. In
  fact, $\psi$ is $t-$analytic in a sector in the lower  half plane and has
 the whole of $\RR^+$ a natural boundary.

 Extension to other types of potentials, for instance
  analytic at infinity, is briefly discussed, and in the process we
  study the singularity structure of the Green's function in a
  neighborhood of zero, in energy space.

\end{abstract}
\maketitle

\tableofcontents
\section{Introduction}

\label{first}

\z Resonances play a major role in the physics of metastable states
and their decay.  From a mathematical standpoint, there is a good
number of definitions of resonances and resonant states. In most
approaches, they are based on the properties of the scattering matrix,
on Gelfand triples (rigged Hilbert spaces), or on the complex analytic
singular structure of the Green's function beyond the spectrum of the
resolvent.  The pole positions of the Green's function,
``resonances''  are pseudo-eigenvalues, and their residues (Gamow vectors)
are pseudo-eigenvectors of the Hamiltonian with ``purely growing''
conditions at infinity. There is a vast literature on the subject, see
e.g. the concise overview \cite{Madrid1} and the references
therein. See also \cite{Tumulka} for a surprising consequence of
resonant states, and for a clear description of the physical relevance
of Gamow vectors.

By and large, the different mathematical definitions provide
equivalent objects. However, there are conceptual difficulties in all
rigorous  approaches, and these lie in connecting (a) the mathematical
definition, (b) the natural properties of the underlying quantum
Hamiltonian, and (c) the physical phenomenon. In fact, {\em Howland's
  Razor}, a principle so dubbed by B. Simon, cf. \cite{Simon1}, states
that {\em no satisfactory definition of resonance can depend on the
  structure of a single operator on an abstract Hilbert
  space}. Slightly oversimplifying the argument, the reason is that
the analytic  structure of the  Green's
function, or of quantities obtained through dilation-analyticity, needed
in most approaches, are by
no means unitarily invariant.  Unitary invariance plays of course an
important role in quantum mechanics since observables are represented
by self-adjoint operators on Hilbert spaces, the isomorphisms of which
are precisely the family of all unitary transformations.

 A concise and  very illuminating  critical
analysis of the various mathematical attempts at rigorous
definitions is found in \cite{Simon1}.

\subsection{Resonances and asymptotic expansions}
We note however that many relevant physical quantities are not and
need not be defined in a unitarily-equivalent way. As already
mentioned, resonances are used in measuring the time decay of the
probability distribution in physical space. In any interpretation of
quantum mechanics, $L^2(\RR^3)$ plays a distinguished role, when
$\RR^3$ is a representation of the space where we, and macroscopic
apparatuses, lie.

A definition based on time behavior is natural to the underlying
physics and  avoids Howland's razor since it rests on (i) a
particular representation of the Hamiltonian--as an operator on $L^2$ of our
$\RR^3$, (ii) on a second observable, say $\mathbf{1}_{A}$, the
characteristic function of the set $A\in\RR^3$ and (iii) on a specific
mathematical question--the time decay of $\langle\psi
|\mathbf{1}_{A}|\psi\rangle$. This triad is not (at least not
manifestly) a property of a single operator. Nonetheless,
$L^2(\RR^3)$ and dependence on time are canonical objects in
analyzing scattering or decay problems.

At the present time however rigorous definitions based on  time
behavior only exist in a perturbative regime, \cite{Skibsted},
\cite{Madrid2}; see also below.

In this paper, for compactly supported
potentials in one dimension, we show that the difference
between the wave function and the Borel sum of its asymptotic series
in powers of $t^{-1/2}$ is a convergent expansion in Gamow
vectors. The resonances thus defined turn out to be independent of the
initial condition. Gamow vectors are not $L^2$ functions; neither is
the Borel sum (see \S\ref{transth}) of the power series. The expansion
is valid uniformly on compact sets instead.

The representation as a Borel summed series plus Gamow vectors
expansion is valid not only for large $t$, but, in fact,  simply
for $t>0$, though the convergence rate of the whole expansion is
rapid enough only  if  $t$ is not too small.

After completing a manuscript we found that
decompositions in energy space in terms of Gamow vectors
and a continuous part have been proposed in the physics literature,
see \cite{Calderon}, to our knowledge without completely rigorous, mathematical,  proofs or
study of Borel summability, and with a different interpretation and
suggested physical meaning; cf.   Note \ref{N6} below.  Without
Borel summability, uniqueness of a decomposition in terms of a
continuum integral and a sum of exponentials generally does not
hold, see \S\ref{transth}.

The $t^{-1/2}$ power series expansion roughly corresponds to the decay
of a free particle \footnote{The influence of the potential --however
  distant-- is still present in the ``initial state'' at some very late
  time $t_i\gg 1$, {\em from which} the almost free particle decays;
  the state at $t_i$ is responsible for the generic disappearance of the
zero
  energy resonance.}. Indeed, if time is very long and the point
spectrum of $H$ is empty, then, eventually, the overlap between the
wave function and the support of the potential becomes negligible. The
specifics of the potential are seen while the particle has a fair
probability of it being near the potential. This is why it is natural to
subtract out the power series, ``free'' decay. But, generally, the series has zero
radius of convergence\footnote{If the potential is
  unbounded, such as a dipole $V(x)=Ex$, then the power series may be
  identically zero, see \cite{Herbst}, \cite{Rokhlenko} and references
  therein. Another exception is $V=0$, for which the asymptotic $t^{-1/2}$ series
  converges on compact sets in $x$.}.

If parameters are such that a resonance (complex generalized
eigenvalue, \cite{Madrid2}) is at a small distance $\epsilon$ to the
spectrum of $H$, the setting is called perturbative and there is a
time scale, roughly given by $e^{-\epsilon t}\gg t^{-3/2}$, during
which in a finite spatial interval, the decay of the position
probability follows an exponential law. This corresponds to a
transient, metastable state. The Gamow vector corresponding to such a
resonance describes the wave function on increasingly larger spatial
regions, see \cite{Tumulka}, \S 9. Only metastable states with long
enough survival time are captured however in this way. (Rigorously
speaking, we are dealing with a double limit, in which time goes to
infinity and an external parameter goes to zero in some correlated
fashion.)  Borel summation provides an exact representation for all
$t>0$, as well as practical ways to calculate the wave function for
times of order one, see \S\ref{numcalc}; the influence of resonances
which are not necessarily close to the spectrum is measurable.

For showing Borel summability, perhaps the most delicate part is the
analysis of the Green's function in the fourth quadrant in the energy
parameter, where infinitely many poles recede rapidly to infinity;
sharp estimates are needed in order to control a needed Bromwich
contour integral.

Extension to other potentials with sufficient analyticity and decay is discussed
in \S\ref{Ap}.

\section{Setting and main results}
\z We consider the one-dimensional Schr\"odinger
equation \[ i\hbar\dfrac{\partial}{\partial
  t}\psi(x,t)=-\dfrac{\hbar}{2m}\dfrac{\partial^{2}}{\partial
  x^{2}}\psi(x,t)+V(x)\psi(x,t)\] where:

(a) The nonzero potential $V$ is independent of time, compactly
supported and $C^2$ on its support. ($V$ is allowed to be
discontinuous at the endpoints provided that it is one-sided $C^2$
at the endpoints.)

(b) The initial condition $\psi_0(x)$ is compactly supported and
$C^2$ on its support.

 We normalize the equation to
\begin{equation}
  i\dfrac{\partial}{\partial
    t}\psi(x,t)=-\dfrac{\partial^{2}}{\partial
    x^{2}}\psi(x,t)+V(x)\psi(x,t)=(H\psi)(x,t)\label{eq:ori}\end{equation}
where
$\textrm{supp}(V)\subset[-1,1]$.
Under the assumptions above, we have the following results.
\begin{Proposition} For large $t$, the wave function
  $\psi(x,t)$ is $O(t^{-1/2})$ (in the generic case of absence
of zero energy resonance \cite{Goldberg}, $\psi(x,t)=O(t^{-3/2})$), and  $\psi(x,t)$ has a Borel summable
asymptotic series $\tilde{\psi}(x,t)$ in powers of $t^{-1/2}$.
\end{Proposition}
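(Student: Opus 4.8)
The plan is to start from the resolvent representation of the propagator. Writing $\psi(\cdot,t)=e^{-iHt}\psi_0$ via Stone's formula as a Bromwich-type integral in the spectral parameter, I would express
\[
\psi(x,t)=\frac{1}{2\pi i}\int_{\Gamma} e^{-i\lambda t}\,\big(R(\lambda)\psi_0\big)(x)\,d\lambda ,
\qquad R(\lambda)=(H-\lambda)^{-1},
\]
where $\Gamma$ encircles the spectrum. Because $\mathrm{supp}(V)\subset[-1,1]$, the kernel of $R(\lambda)$ is built explicitly from the two Jost solutions $f_\pm(x,k)$ of $-f''+Vf=k^2f$ normalized to $e^{\pm ikx}$ at $\pm\infty$, divided by their Wronskian $W(k)$; for compactly supported $V$ both $f_\pm$ and $W$ are \emph{entire} in $k$, and the zeros of $W$ in the lower half $k$-plane are exactly the resonances. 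This gives complete analytic control of the integrand, the starting point for everything that follows.

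Next I would unfold the branch cut by the change of variable $\lambda=k^2$, turning the oscillatory factor into $e^{-ik^2t}$, and extract the large-$t$ behavior from a neighborhood of the threshold $k=0$. A Watson's-lemma argument applied to the resulting Laplace-type integral converts the expansion of the spectral density in integer powers of $k$ into an asymptotic series in half-integer powers of $t$; this is the formal series $\tilde{\psi}(x,t)\sim\sum_n a_n(x)\,t^{-n/2}$. The order of the leading term is governed by the value of the spectral density at $k=0$: generically it vanishes and one gets $O(t^{-3/2})$, whereas a nonzero value — a zero-energy resonance — leaves the $O(t^{-1/2})$ term, matching \cite{Goldberg}.

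For Borel summability I would rotate the Bromwich contour into the fourth quadrant of the $\lambda$-plane, where $\mathrm{Re}(-i\lambda t)<0$ so that the integrand decays; this simultaneously exhibits $\psi$ as analytic in a sector of complex $t$. Setting $z=t^{-1/2}$, the finite radius of analyticity of the integrand in $k$ near $k=0$, combined with the $\Gamma$-factors produced termwise by Watson's lemma, yields Gevrey-$1$ bounds $|a_n(x)|\le C\,A^n n!$, while the rotated-contour representation supplies the uniform remainder estimates $|\psi-\sum_{n<N}a_n z^n|\le C\,\sigma^N N!\,|z|^N$ on the sector. The Nevanlinna–Sokal theorem then identifies $\psi$ as the Borel sum of $\tilde{\psi}$, with the singularities of the Borel transform sitting at the resonances.

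The hard part is the contour estimate in this last step. Rotating into the fourth quadrant forces the contour past the resonance poles, which — as recorded in the abstract — recede to infinity like $c\,k\log k+k^2\pi^2 i/4$, i.e. ever more densely. The main obstacle is therefore to choose the Bromwich contour so as to thread between these poles and to obtain \emph{sharp}, uniform bounds on $1/W(k)$ and on $f_\pm$ in the intervening regions, so that the rotated integral both converges and obeys the Gevrey remainder bounds required by Nevanlinna–Sokal. This is precisely the fourth-quadrant analysis of the Green's function flagged in the introduction, and essentially all the analytic delicacy of the proposition is concentrated there.
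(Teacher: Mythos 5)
Your setup is sound and in fact runs parallel to the paper's: the resolvent/Stone's-formula representation with Jost solutions and their Wronskian is the same object as the paper's $t$-Laplace transform $\hat{\psi}(x,p)$ built from $y_\pm$ and $W_p$ (with $p=-i\lambda$), your Watson's-lemma extraction of the $t^{-n/2}$ series at the threshold matches, as does the genericity dichotomy and your identification of the fourth-quadrant pole-threading estimates as the analytic core. The genuine gap is the final step. You invoke Nevanlinna--Sokal to conclude that ``$\psi$ \emph{is} the Borel sum of $\tilde{\psi}$'' --- but that conclusion is false, and the proposition does not assert it: by the paper's Theorem 1, $\psi-\mathcal{LB}\tilde{\psi}$ equals the bound-state terms plus the convergent Gamow series $\sum_k g_k\Gamma_k(x)e^{-\gamma_k t}$, which is generically nonzero. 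The proposition claims only that the asymptotic series \emph{admits} a Borel sum, not that $\psi$ coincides with it. Correspondingly, the hypotheses of Nevanlinna--Sokal (or of Watson's sectorial version) cannot be verified: they require analyticity and uniform Gevrey remainder bounds in a disc $\{\Re(1/z)>1/R\}$ (or a sector of opening $>\pi$) containing $\RR^+$ in its closure on both sides, whereas rotating the contour into the fourth quadrant yields analyticity of $\psi$ only in a one-sided sector in the \emph{lower} half $t$-plane; on $\RR^+$ itself $\psi$ is $C^\infty$ but nowhere analytic ($\RR^+$ is a natural boundary, precisely because of the lacunary Gamow Dirichlet series). The exponentially small Gamow terms are exactly the obstruction your criterion is structurally unable to accommodate.

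What is needed instead --- and what the paper does --- is a constructive argument: deform the Bromwich inversion contour into the left half $p$-plane along curves threading between the poles (your ``hard part,'' the bounds on $1/W_p$ and $y_\pm$), \emph{collect the pole residues separately} (these become the Gamow terms), and observe that the remaining contribution, collapsed onto the two sides of the cut along $\RR^-$, is manifestly of the form $\int_0^\infty e^{-pt}F(x,p)\,dp$ with $F$ ramified-analytic at $0$ (powers of $\sqrt{p}$), analytic near $\RR^+$, and subexponentially bounded. That integral is a Borel sum by definition, and Watson's lemma identifies its expansion with $\tilde{\psi}$, the pole contributions lying beyond all orders; this proves the proposition without any abstract summability criterion. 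A minor further inaccuracy: the coefficients of $t^{-n/2}$ grow like $\Gamma(n/2+O(1))$, not $n!$, so the natural summation is with respect to $t$ with a ramified Borel transform (as the paper defines it), rather than Gevrey-$1$ summation in $z=t^{-1/2}$.
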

We denote as usual by $\mathcal{LB}$ the Borel summation operator.
Let $t^{-1/2}\varphi(x,t)=\mathcal{LB}\tilde{\psi}(x,t)$ where
$\varphi(x,\cdot)$ is bounded. As seen below,
$\psi(x,t)-t^{-1/2}\varphi(x,t)$ is nonzero, and is a convergent
combination of Gamow vectors, the residues at the poles of the
analytic continuation of the resolvent of $H$.

 Let $\{E_k\}_{k=1,...,N}$ be the eigenvalues of $H$
and  $\{\psi_k\}_{k=1,...,N}$  be the corresponding 
eigenfunctions.
(We convene to set  $N=0$ if these two sets are empty.) Let also
$\gamma_k, \Re\gamma_k>0$ be  the generalized
eigenvalues (resonances) corresponding to the Gamow vectors $\Gamma_{k}(x)$.

\begin{Theorem} \label{T1}
(i) For all  $t>0$ we have

\begin{equation}
  \psi(x,t)-t^{-1/2}\varphi(x,t)=\sum_{k=1}^{N}b_k\psi_{k}(x)e^{-E_{k}it}+\sum_{k=1}^{\infty}g_k\Gamma_{k}(x)e^{-\gamma_{k}t}\label{eq:dec}\end{equation}
The infinite sum in (\ref{eq:dec}) is uniformly convergent on compact sets in $x$ --rapidly
so if $t$ is large.
  (The coefficients $b_k$ and $g_k$ depend on
$\psi$  and
typically  $g_k\ne 0$ for all $k$.)

\smallskip

(ii) $\psi_{k}(x),\Gamma_{k}(x),\varphi(x,t)$ are twice
differentiable in $x$.

\smallskip

(iii) We have
\begin{equation}
  \label{eq:lmbd}
  \gamma_{k}\sim
 const\cdot k\log k +k^2\pi^{2}i/4
\text{ as }k\rightarrow +\infty
\end{equation}
(Higher orders depend on $V$, see Proposition \ref{asy}.)
The $\gamma_k$ are independent of $\psi_0$, and  the  constant
depends on the endpoint behavior of $V$.
\end{Theorem}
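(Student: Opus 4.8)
The plan is to represent the propagator through a Laplace transform in time together with the resolvent of $H$. Writing the solution as $\psi(\cdot,t)=e^{-iHt}\psi_0$ and Laplace transforming (\ref{eq:ori}) in $t$ turns it into the resolvent equation, so the inverse transform gives the Bromwich representation
\begin{equation}
\psi(x,t)=\frac{1}{2\pi i}\int_{c-i\infty}^{c+i\infty}e^{pt}\,\hat\psi(x,p)\,dp,\qquad \hat\psi(\cdot,p)=-i\,(H-ip)^{-1}\psi_0,
\end{equation}
with $c>0$ so that the energy $E=ip$ lies in the upper half plane, where the resolvent is analytic. Since $V$ is compactly supported, I would construct the kernel of $(H-E)^{-1}$ explicitly from the two Jost solutions, which are pure exponentials outside $[-1,1]$ and, inside, are given by the convergent Volterra series in $V$. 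This renders the analytic structure of $\hat\psi$ transparent: a square-root branch point at $E=0$, finitely many poles on the negative real axis at the bound states $E_k$, and a discrete set of second-sheet poles, the resonances, located at the zeros in the fourth quadrant of a Jost/Wronskian ``structure'' function $F(k)$, $E=k^2$.

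The next step is to deform the Bromwich contour. Collapsing it onto the cut issuing from $E=0$ produces the dispersive contribution: expanding the Green's function about the square-root branch point generates exactly the asymptotic series $\tilde\psi$ in powers of $t^{-1/2}$, while the integral hugging the cut defines its Borel sum $t^{-1/2}\varphi(x,t)$. The residues crossed during the deformation furnish the remaining terms of (\ref{eq:dec}): each bound-state pole yields $b_k\psi_k(x)e^{-E_kit}$ with $b_k$ the overlap $\langle\psi_k,\psi_0\rangle$, and each resonance pole at $\gamma_k=ik^2$ yields $g_k\Gamma_k(x)e^{-\gamma_k t}$, where $\Gamma_k$ is the residue of the resolvent kernel---the Gamow vector, purely outgoing at infinity---and $g_k$ is the associated overlap functional evaluated on $\psi_0$.

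For the Borel summability underpinning the definition of $\varphi$ (the preceding Proposition) I would check the hypotheses of a Watson--Nevanlinna--Sokal type theorem: analyticity of the Borel transform of $\tilde\psi$ in a suitable sector or strip, together with uniform factorial bounds on the series coefficients. Both follow from the Volterra representation of the Jost solutions, which simultaneously provides $C^2$ control in $x$ and the required estimates in $E$ near the origin. Differentiating the explicit integral representations twice under the integral sign, justified by the uniform convergence afforded by compact support, then gives part (ii): $\psi_k$, $\Gamma_k$ and $\varphi(\cdot,t)$ are twice differentiable in $x$.

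Parts (iii) and the convergence of the Gamow series are where the real work lies. To locate the resonances I would write $F(k)$ as a free exponential sum, reflecting the width $2$ of $\mathrm{supp}\,V$, plus a correction whose leading term is fixed by the values of $V$ and its derivatives at the endpoints $x=\pm1$; a Rouch\'e argument then places its zeros asymptotically along $k_n\sim\tfrac{\pi n}{2}-\tfrac{i}{2}\,c\log n$, and squaring together with $\gamma_k=ik^2$ yields $\gamma_k\sim\mathrm{const}\cdot k\log k+k^2\pi^2i/4$, the constant being governed by the endpoint behavior of $V$; independence of $\psi_0$ is automatic since $F$ does not involve the initial data. To close the convergence of $\sum_k g_k\Gamma_k(x)e^{-\gamma_k t}$ uniformly on compact sets in $x$ I need matching upper bounds on $g_k$ and on $\Gamma_k(x)$ that are dominated by $e^{-\Re\gamma_k t}$ once $\Re\gamma_k\to\infty$. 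The main obstacle, and the crux of the whole argument, is precisely the analysis in the fourth quadrant of the energy plane, where the resonance poles recede rapidly to infinity: one must establish sharp enough bounds on the Green's function there both to discard the large arcs in the contour deformation and to control the residues uniformly, so that interchanging the summation with the tail of the Bromwich integral is legitimate.
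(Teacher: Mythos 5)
Your outline reproduces the paper's strategy essentially step for step: Laplace transform in $t$, the resolvent built from the two Jost-type solutions with their Wronskian $W_p$, contour deformation collecting bound-state and resonance residues, the branch point at $p=0$ producing the Borel-summed $t^{-1/2}$ series, and asymptotics of the Wronskian zeros for part (iii). The problem is that the proposal stops exactly where the proof has to start: you state that ``one must establish sharp enough bounds on the Green's function'' in the region where the poles recede to infinity, and never establish them. This is not a routine verification; it is the content of the theorem. In particular, the picture of ``discarding large arcs'' fails here: the poles $p_k\sim -\pi k\log k-\tfrac{\pi^2 i}{4}k^2$ accumulate in angle toward the negative imaginary $p$-axis, so every large arc joining the Bromwich line to the negative real axis passes arbitrarily close to infinitely many poles, and no smallness of $\hat\psi$ on such arcs is available. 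The paper's resolution is constructive: a uniform WKB fundamental system $f_{1,2}$ with controlled error terms (Proposition \ref{fff}, proved by a contraction argument), giving the explicit form $W_p=\frac{h_2(p)}{p^2h_3(p)}\left(e^{-4\sqrt{-ip}}h_1(p)/h_2(p)+1\right)$, from which one builds curves $p_k(s)=-i\left(k\pi/2-\tfrac14\arg c_0\right)^2(1-is)^2$ threading \emph{between} consecutive poles, on which $\left|e^{-4\sqrt{-ip}}h_1/h_2+1\right|\ge 1$ and hence $1/W_p$ is polynomially bounded uniformly in $k$; the factor $e^{pt}$ decays like $e^{-kt}$ along these curves, and the deformation is carried out as a limit over $k$ of contours assembled from them. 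The same quantitative work is what makes the Gamow series converge for \emph{all} $t>0$: the poles are simple, $1/W_p'(p_k)$ grows at most polynomially, and $y_\pm=O\left(\sqrt{p}\,e^{2|\Re\sqrt{-ip}|(|x|+2)}\right)$ (Lemma \ref{y12}), so the residues grow only like $e^{O(\sqrt{|p_k|})}=e^{O(k)}$, which is beaten by $e^{-\Re\gamma_k t}=e^{-\mathrm{const}\,t\,k\log k}$ for every positive $t$. You correctly name these bounds as what is needed, but naming them is not producing them.

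A secondary but genuine misdirection is the plan to get Borel summability from a Watson--Nevanlinna--Sokal criterion. Uniform factorial bounds on the coefficients give only Gevrey-1 asymptotics, not summability; and the remainder-type hypotheses of such theorems would have to be verified for the sum $\varphi$ --- which does not exist until the contour work is done --- not for $\psi$ itself, which (as the paper points out) is $C^\infty$ but nowhere analytic on $\RR^+$, so it cannot satisfy the sectorial analyticity those criteria demand. The paper instead obtains summability for free from the construction: wrapping the contour around $\RR^-$ exhibits $t^{-1/2}\varphi(x,t)$ as the Laplace transform of the jump of $\hat\psi$ across the cut, a ramified-analytic, exponentially bounded function, which is precisely the operative definition of the Borel sum; this in turn rests on the decomposition $\hat\psi(x,p)=\frac{1}{p}\psi_0(x)+p^{-3/2}G_2(x,p)$ with $G_2$ bounded in $\Omega_0$, proved by a contractive integral equation (Proposition \ref{pro}), which also supplies the bounds justifying the deformation in the upper half plane.
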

The series in (\ref{eq:dec}), though valid for all $t$, converges
poorly if $t\to 0$: this is not the regime it is intended for.

Let

\begin{equation}\label{defE}
  \mathcal{E}(u,t)=\sqrt{\frac{x}{t}}+e^{-u^2 t} \left( u^2 \sqrt{\pi t} E_{\frac{1}{2}}(-u^2 t)+u E_1(-u^2 t)\right)
\end{equation}
where {\rm E}$_n$ is the $n$-exponential integral and $\arg u\in
(-\pi,0)\cup (0,\pi)$, \cite{Abramowitz} \footnote{There seem to be inconsistent definitions in the
literature. We use \cite{Abramowitz};  since the definition is not
spelled out in one place, we state it again: $\mathrm
E_n(z)=\int_1^{\infty}t^{-n} e^{-zt}dt$ for $\Re \, z >0$,
analytically   continued to $\CC\setminus \RR^-$, and extended by continuity to the two sides of the cut.}

\begin{Proposition}[$o(e^{-Mt})$ accuracy, for arbitrary $M$]\label{approx2} For any $M$ there exists (explicit) $m$ and $m_1\le m$,
so that
  \begin{equation}
    \label{eq:eqapprox}
     \psi(x,t)= \sum_{k=1}^{N}b_k\psi_{k}(x)e^{-E_{k}it}+\sum_{k=1}^{m_1}g_k\Gamma_{k}(x)e^{-\gamma_{k}t}-\sum_{k=1}^m r_k \mathcal{E}(\tilde{\gamma}_kt)+\psi_M(x,t)
  \end{equation}
  Here $\{\tilde{\gamma}_k\}_{k\le m}$ are the poles of the Green's
  function (resonances) on the first and second Riemann sheet with $|
  \tilde{\gamma}_k|\le M$, $\{\gamma_k\}_{k\le m_1}$ are the subset of them on the
  first Riemann sheet, $r_k$ are the corresponding residues, and
  $\psi_M$  differs  by $o(e^{-Mt})$ from its (relatively explicit) power
  series in $1/t^{1/2}$, optimally truncated (see \S\ref{numcalc}).
\end{Proposition}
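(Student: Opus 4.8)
The plan is to obtain (\ref{eq:eqapprox}) as a computable, finite-term refinement of the exact decomposition (\ref{eq:dec}) of Theorem~\ref{T1}, in which the two infinite objects there --- the full Gamow series and the Borel sum $t^{-1/2}\varphi$ --- are traded for explicit terms up to an $o(e^{-Mt})$ remainder. The bound-state part $\sum_{k=1}^{N}b_k\psi_k(x)e^{-E_kit}$ is already a finite sum and is carried over unchanged. For the Gamow series I would first record size bounds for the coefficients $g_k$ and for $\Gamma_k(x)$ on a fixed compact $x$-interval, and combine them with the resonance asymptotics (\ref{eq:lmbd}), which force $\Re\gamma_k\to+\infty$ and $|\gamma_k|\sim k^2\pi^2/4$. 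This shows the convergent series is dominated by its first omitted term, and lets me retain only the finitely many first-sheet resonances relevant at the target accuracy, namely those with $|\gamma_k|\le M$ (indices $k\le m_1$), folding the rest into $\psi_M$.

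The substance is the reorganization of the Borel sum $t^{-1/2}\varphi$. I would represent it as the Laplace (Bromwich) integral of the jump of the analytically continued resolvent across the zero-energy branch cut; its Borel transform then inherits singularities at the resonances on \emph{both} Riemann sheets, since the jump is built from continuations $R_+,R_-$ to the first and second sheet respectively. Applying optimal truncation (Watson's lemma at the threshold $p=0$) to the regular part produces the relatively explicit truncated $t^{-1/2}$ series, while the Borel-plane singularities inside the disk $|\tilde\gamma_k|\le M$ each contribute an isolated residue term. Evaluating the local Laplace integral of a simple pole dressed by the $p^{-1/2}$ threshold factor yields precisely the special function of (\ref{defE}) --- this is the origin of the $\sqrt{\pi t}\,E_{1/2}$ and $E_1$ pieces and of the $\sqrt{x/t}$ term --- with coefficient the residue $r_k$ and a sign fixed by the orientation of the deformation, giving $-\sum_{k=1}^{m}r_k\mathcal{E}(\tilde\gamma_k t)$. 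Note that the second-sheet poles enter only here, through the Borel transform, and never appear in the convergent Gamow expansion of Theorem~\ref{T1}; this is exactly why the $\mathcal{E}$-sum runs over $m\ge m_1$ terms while the Gamow sum runs over only the $m_1$ first-sheet ones.

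Collecting the finite bound-state sum, the truncated Gamow sum, and the subtracted $\mathcal{E}$ corrections, and defining $\psi_M$ as what remains, gives (\ref{eq:eqapprox}) as an identity; the real content is then that $\psi_M$ agrees with its optimally truncated $1/t^{1/2}$ series to $o(e^{-Mt})$, which reduces to estimating the leftover contour integral sitting beyond the level $|\tilde\gamma_k|=M$. I expect two linked difficulties to be the crux, both flagged in the introduction. First is the calibration of the cutoff: one must show that accounting for exactly the Borel-plane singularities with $|\tilde\gamma_k|\le M$, together with the truncated Gamow tail, leaves a remainder genuinely beyond $e^{-Mt}$; because $\Re\gamma_k\sim const\cdot k\log k$ grows far more slowly than $|\gamma_k|\sim k^2$, relating the disk radius to the achieved exponential rate requires sharp bounds on the residues $r_k$ and on the spacing of the poles, not merely their leading asymptotics. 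Second, and most delicate, is the second-sheet analysis in the fourth energy quadrant, where infinitely many poles recede like $k\log k+ik^2\pi^2/4$: one needs uniform estimates on the continued Green's-function kernel along the deformed Bromwich contour to justify the deformation, to identify each crossed singularity as a simple pole with residue $r_k$, and to bound the far-contour remainder by $o(e^{-Mt})$ uniformly on compact $x$-sets. Once these kernel estimates near the zero-energy branch point and along the rapidly accumulating second-sheet poles are in hand, the remaining steps are bookkeeping organized by (\ref{eq:lmbd}).
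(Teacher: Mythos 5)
Your overall strategy coincides with the paper's: extract the poles of $\hat\psi$ (equivalently, of the Borel transform of the dispersive part) on both sheets of $\sqrt{p}$, identify each extracted pole's cut contribution $r_k\int_0^\infty e^{-pt}(\sqrt{p}-p_k)^{-1}dp$ with the special function $\mathcal{E}$, keep the first-sheet residues as Gamow exponentials, and control the pole-free remainder by least-term truncation; this is exactly what the paper does via Lemmas \ref{L18} and \ref{L19}, and routing the argument through Theorem \ref{T1} instead of directly through $\hat\psi$ is cosmetic.

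There is, however, one genuine gap: your calibration of the Gamow cutoff. You truncate the Gamow series at $|\gamma_k|\le M$ and fold the tail into $\psi_M$, arguing that the tail is controlled by its first omitted term. True, but insufficient: by (\ref{eq:lmbd}), $\Re\gamma_k\sim \mathrm{const}\cdot k\log k$ while $|\gamma_k|\sim k^2\pi^2/4$, so the first omitted resonance has $|\gamma_k|\gtrsim M$ but only $\Re\gamma_k\sim \mathrm{const}\cdot\sqrt{M}\log M$, i.e. its contribution is of size $e^{-c\sqrt{M}\log M\, t}\gg e^{-Mt}$. Since a pure exponential has identically zero asymptotic power series in $1/t^{1/2}$, such a term can never be reproduced by the optimally truncated series of $\psi_M$; with your cutoff, $\psi_M$ fails the claimed $o(e^{-Mt})$ property. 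Note also that no sharpening of residue bounds or pole-spacing estimates (your proposed remedy) can repair this, because the obstruction is the size mismatch $e^{-c\sqrt{M}\log M\,t}$ versus $e^{-Mt}$ itself, not a lack of precision in the asymptotics. The repair is to decouple the two cutoffs, since they answer different questions: an exponential $e^{-\gamma_k t}$ may be absorbed into $\psi_M$ only if $\Re\gamma_k>M$, whereas a Borel-plane pole may be left unextracted only if its modulus exceeds $M$. Concretely, extract all poles with $|\tilde{\gamma}_k|\le M'$, with $M'$ chosen so large that every omitted first-sheet resonance has $\Re\gamma_k>M$; by (\ref{eq:lmbd}), $M'\sim (M/\log M)^2$ suffices. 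Then $m_1\le m$ as stated, the omitted exponentials sum to $o(e^{-Mt})$, and Lemma \ref{L18} applied to the remainder (analytic in the Riemann-surface disk of radius $M'>M$) gives the optimal-truncation accuracy. This calibration is what the paper's phrase ``after extracting a suitable number of poles'' is silently doing, and it is precisely what makes $m$ and $m_1$ ``explicit''.
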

\begin{Note}{\rm
(i) We see that, as exponential contributions, only the resonances
on the first Riemann sheet appear but both sheets contribute
to the dispersive part.

(ii) The expression (\ref{defE}) is not analytic on the Riemann
surface of the log: it has a jump on $\RR^+$, compensated by an
opposite jump of $\psi_M$. These jumps are mandated by
least term summability requirements.}

\end{Note}
\begin{Corollary} {\em Any number of resonances can be
    calculated from $\psi(x,t)$, if $\psi$ is known with
    correspondingly high accuracy. Conversely, $\psi$ can be
    calculated in principle with arbitrary accuracy from the
    contribution of a finite number of bound states, resonances,
    exponential integrals and optimal truncation of power series.}
\end{Corollary}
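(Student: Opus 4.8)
The plan is to prove the Corollary as a direct consequence of Theorem~\ref{T1} and Proposition~\ref{approx2}, which together give both directions of the stated equivalence. For the forward direction---that resonances can be extracted from $\psi$---I would fix a point $x_0$ at which none of the finitely many relevant Gamow vectors vanishes (generic, and possible since each $\Gamma_k$ is $C^2$ and not identically zero), and study the function $t\mapsto\psi(x_0,t)$. By Proposition~\ref{approx2}, after subtracting the optimally truncated power series part $\psi_M(x_0,t)$ (which is $o(e^{-Mt})$ away from its explicit asymptotic expansion and whose contribution is dominated by the leading resonances once $M$ is taken large enough), the remainder is a finite exponential sum $\sum b_k\psi_k(x_0)e^{-E_kit}+\sum g_k\Gamma_k(x_0)e^{-\gamma_k t}$ up to an $o(e^{-Mt})$ error. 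The essential point is that distinct resonances $\gamma_k$ have distinct real parts (or, more precisely, the pairs $(\Re\gamma_k,\Im\gamma_k)$ are distinct), so the exponentials $e^{-\gamma_k t}$ form an asymptotically ordered, linearly independent family; a finite exponential sum determines its exponents and coefficients uniquely, and these can be read off iteratively by peeling away the slowest-decaying mode, dividing out, and repeating.

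For the quantitative statement---``any number of resonances can be calculated if $\psi$ is known with correspondingly high accuracy''---I would make the peeling argument effective. The $k$-th resonance contributes a term of size comparable to $e^{-(\Re\gamma_k)t}$; by~\eqref{eq:lmbd} the real parts $\Re\gamma_k$ grow like $k\log k$, so to resolve the first $m$ resonances it suffices to know $\psi(x_0,\cdot)$ with absolute accuracy better than the size of the $m$-th term on the relevant $t$-window, i.e. to accuracy $o(e^{-(\Re\gamma_m)t})$ for $t$ in a suitable range. Choosing $M>\Re\gamma_m$ in Proposition~\ref{approx2} guarantees that the truncation and Borel-summed dispersive contributions are negligible against the $m$-th exponential, so the extraction of $\gamma_1,\dots,\gamma_m$ (and the associated $g_k\Gamma_k(x_0)$) is well-posed. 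This makes precise the correspondence between the number of recoverable resonances and the accuracy with which $\psi$ is known.

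The converse direction---that $\psi$ can be reconstructed with arbitrary accuracy from finitely many bound states, resonances, exponential integrals, and an optimally truncated power series---is simply a restatement of Proposition~\ref{approx2}: given a target accuracy $o(e^{-Mt})$, choose the corresponding $m$ and $m_1$, and~\eqref{eq:eqapprox} exhibits $\psi$ as exactly such a finite combination plus an error term $\psi_M$ controlled to within $o(e^{-Mt})$ of its explicit optimally truncated series. Since $M$ is arbitrary, the claimed arbitrary accuracy follows. Thus the Corollary requires no genuinely new estimate; it is an unpacking of the two preceding results into an input/output statement about information content.

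The main obstacle I anticipate is rigorously justifying the uniqueness in the forward direction, namely that the decomposition into exponentials plus the dispersive part is genuinely invertible. This rests on the distinctness and asymptotic separation of the $\{\gamma_k\}$ and on the fact (flagged in the introduction, \S\ref{transth}) that without Borel summability a decomposition into a continuum integral plus a sum of exponentials need not be unique. Here Borel summability of the dispersive part is exactly what removes that ambiguity: it pins down $t^{-1/2}\varphi(x,t)$ canonically, so that the exponential sum is whatever remains, and the standard theory of finite (generalized) Dirichlet series then yields unique exponents and coefficients. Ensuring that the chosen evaluation point $x_0$ avoids the zeros of the finitely many $\Gamma_k$ under consideration, and that the residual power-series part does not mimic a spurious exponential on the working $t$-window, are the points that need care but are handled by the size estimate~\eqref{eq:lmbd} and the $o(e^{-Mt})$ control in Proposition~\ref{approx2}.
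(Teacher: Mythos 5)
Your proposal is correct and follows essentially the route the paper intends: the paper offers no separate proof of this Corollary, treating it as an immediate unpacking of Theorem \ref{T1} and Proposition \ref{approx2}, with the invertibility of the decomposition resting on the uniqueness of the transseries representation established in \S\ref{transth} (which, as you note, handles even finitely many $\gamma_k$ sharing the same real part). Your effective peeling argument and the accuracy bookkeeping via (\ref{eq:lmbd}) simply make explicit what the paper leaves implicit.
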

\begin{Note}[Analytic structure of $\psi(x,t),\text{ in
  }t\in\RR^+$]{\rm It follows from the proof that $\ln |g_k\Gamma_k|=O(\sqrt{p_k})$. Thus, since $\varphi$ is manifestly analytic
    for $\Re\,\, t>0$, it follows immediately from (\ref{eq:dec}) that
    $\psi$ is $C^\infty$ in $t$. Now, since $\Im\,\,-\gamma_k\sim
    -k^2$, near $\RR^+$, $\psi$ equals a function analytic in the right
    half plane (the Laplace transform) plus a lacunary Dirichlet
    series, convergent for $\Im\,\, t<0$ (the ``heat-like''
    direction).  For generic $x$, the coefficients of the Dirichlet
    series are bounded below by $e^{-\text{const}(x)\sqrt{|p|}}$ (all
    functions involved are of exponential order $1/2$; the lower
    bounds follow relatively easily from the proofs, but we omit the
    details). Then the Dirichlet series does not converge past
    $\RR^+$; general theorems on lacunary series, see
    e.g. \cite{Mandelbrojt} imply then that $\RR^+$ is a natural
    boundary. See also Proposition \ref{P8}. Using similar estimates
    it can be checked that the Taylor coefficients of $\psi$ at a
    point $t_0$ behave roughly like $\displaystyle
    \left(\frac{\pi^4k^2}{4e^2 t_0^2\ln^2 k}\right)^k$, showing once
    more that there is no point of analyticity on $\RR^+$.  This is
    another way to see the contribution of the Gamow vectors to the
    properties of $\psi$. (More details about this are part of a future
    paper.)}

\end{Note}
\section{Proofs of Main Results}

\subsection{Integral reformulation of the problem}\label{main}
$H$ satisfies the
     assumptions of Theorem X.71, \cite{Reed-Simon} v.2 pp 290. Thus,
for any $t$, $\psi(t,\cdot)$ is in the domain of $-d^2/dx^2$. This implies
     continuity in $x$ of $\psi(t,x)$ and of its $t-$Laplace
     transform. It also follows that the unitary propagator $U(t)$ is
     strongly differentiable in $t$.
Existence of a strongly differentiable unitary propagator for
(\ref{eq:ori}) implies  existence of  the Laplace
transform
$$\hat{\psi}(x,p)=  \int_0^{\infty}e^{-pt}\psi(x,t)dt=\left(\int_0^{\infty}e^{-pt}U(t)dt\right)\psi_0(x)$$
for $\Re \,p>0$. Taking the Laplace transform of (\ref{eq:ori}) we obtain
\begin{equation}
ip\hat{\psi}(x,p)-i\psi_{0}(x)=-\dfrac{\partial^{2}}
  {\partial
    x^{2}}\hat{\psi}(x,p)+V(x)\hat{\psi}(x,p)\label{eq:lap0}
\end{equation}
where $\psi_{0}(x)$ is the initial condition. Treating $p$ as a parameter, we
write $\psi(x,p)=y(x;p)=:y(x)$, and obtain
\begin{equation}
  y''(x)-\left(V(x)-ip\right)y(x)=i\psi_{0}(x)\label{eq:lap}
\end{equation}
where $y(x)\in L^2(\RR)$. The associated homogeneous equation is
\begin{equation}
  y''(x)=\left(V(x)-ip\right)y(x)\label{eq:hom}
\end{equation}
If $y_{+}(x),y_{-}(x)$ are two linearly independent solutions of
(\ref{eq:hom}) with the additional restrictions (and the usual branch of the log)
\begin{eqnarray}
  \label{eq:defypm}
  y_{+}(x)=e^{-\sqrt{-ip}x}\:\mathrm{when\:}x>1\nonumber \\
y_{-}(x)=e^{\sqrt{-ip}x}\:\mathrm{when\:}x<-1
\end{eqnarray}
then, for $\Re\,p>0$, the
$L^2$  solution of (\ref{eq:lap0}) (or equivalently of 
(\ref{eq:lap})) is
\begin{equation}
  \hat{\psi}(x,p)=\frac{i}{W_{p}}\left(y_{-}(x)\int_{+\infty}^{x}y_{+}(s)\psi_{0}(s)ds-y_{+}(x)\int_{-\infty}^{x}y_{-}(s)\psi_{0}(s)ds\right)\label{eq:ps1}
\end{equation} where the Wronskian
$W_{p}=y_{+}(x)y_{-}'(x)-y_{-}(x)y_{+}'(x)$ is easily seen to be
independent of $x$.

As we shall see, this solution is meromorphic in $p$ except for a
possible branch point at 0, and for fixed $x$ it has
sub-exponential bounds in the left half $p$-plane (when not close to
poles). The function $\psi$ is the inverse Laplace transform of
$\hat{\psi}$, and it can be written in the form
$\psi(x,t)=\frac{1}{2\pi
  i}\int_{a_0-i\infty}^{a_0+i\infty}\hat{\psi}(x,p)e^{pt}dp$. We show that
the contour of integration can be pushed through the left half plane;
collecting the  contributions from poles and branch points, the
decomposition follows.
\begin{Note}{\rm The domain of interest in $p$ is a sector on the Riemann
surface of the square root, centered on $\RR^+$ and of opening slightly
more than $2\pi$, which, in the variable $\sqrt{-ip}$ translates into a sector of opening
more that $\pi$ centered at $\sqrt{-i}$.
}\end{Note}

\subsection{Analyticity of $\hat{\psi}$ on the Riemann surface of $\sqrt{p}$}
We start with the  analyticity properties of $\hat{\psi}$.
The more delicate analysis of the asymptotic behavior
of the analytic continuation of $\hat{\psi}$ on the Riemann
surface of the log at zero is done in \S\ref{ilt1}. The
existence of a square root branch point at zero is typical
in this type of problems. For our analysis, in proving Borel
summability, we need to show that $\hat{\psi}$ is meromorphic in $\sqrt{p}$,

\begin{Proposition} $\hat{\psi}(x,p)$ is meromorphic in $p$ on the Riemann surface of the square root at zero, $\mathbb{C}_{1/2;0}$ and zero is
  a possible square root branch point.
\end{Proposition}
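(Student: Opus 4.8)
The plan is to show that, for each fixed $x$, the Laplace transform $\hat\psi(x,p)$ is a meromorphic function of the single variable $k:=\sqrt{-ip}$, and then to read off the conclusion from the observation that $p\mapsto k$ is precisely the square-root covering. Substituting $-ip=k^2$, the homogeneous equation (\ref{eq:hom}) becomes $y''=(V(x)+k^2)y$, whose coefficient is a polynomial---hence entire---in $k$.

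First I would construct $y_+$ by posing initial data at the right edge of the support: from (\ref{eq:defypm}), $y_+(1)=e^{-k}$ and $y_+'(1)=-k\,e^{-k}$, both entire in $k$, and integrate (\ref{eq:hom}) leftward across $[-1,1]$. By the standard theorem on analytic dependence of solutions of linear ODEs on a parameter (the Picard iterates converge locally uniformly in $k$), the resulting $y_+(x;k)$ is, for each $x\in[-1,1]$, entire in $k$, with bounds locally uniform in $k$. Continuing past $x=-1$, where $V\equiv 0$, I write $y_+(x;k)=A(k)e^{kx}+B(k)e^{-kx}$; matching $y_+$ and $y_+'$ at $x=-1$ determines $A,B$ as entire functions of $k$. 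The construction of $y_-$ is identical, starting from the left edge. Since the matching uses only continuity of $y$ and $y'$, the permitted discontinuity of $V$ at the endpoints causes no difficulty.

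Next, the Wronskian $W_p=W(k)=y_+y_-'-y_-y_+'$ is built from these entire functions and is therefore itself entire in $k$; it is not identically zero, because for $\Re\,p>0$ the pair $y_\pm$ is linearly independent (this is exactly the regime in which (\ref{eq:ps1}) produces the unique $L^2$ solution). Since $\psi_0$ is compactly supported, the two integrals in (\ref{eq:ps1}) run over a bounded $s$-range and, by the locally uniform $k$-bounds on $y_\pm$, are entire in $k$ as well. Hence the bracket in (\ref{eq:ps1}) is entire, and $\hat\psi(x,\cdot)$ is a ratio of entire functions of $k$, i.e. meromorphic in $k$, with poles confined to the zeros of $W(k)$.

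Finally I would translate this back to the $p$-plane. Because $k=\sqrt{-ip}$ is a two-valued function of $p$ ramified only at $p=0$, a function meromorphic in $k$ lifts to a function meromorphic on the Riemann surface $\CC_{1/2;0}$ of the square root, and the one point over which the two sheets can fail to join analytically is $k=0$, that is $p=0$. Nothing worse than a square-root branch point can occur there: since $\hat\psi$ is meromorphic---in particular free of essential singularities and of higher ramification---in $k$ near $k=0$, the only obstruction to single-valuedness in $p$ is the parity of $\hat\psi$ under $k\to -k$, which is exactly a possible square-root branch point. I expect the one genuinely technical step to be the uniform-in-$k$ control of the Picard iterates that guarantees entireness of $y_\pm$ (and hence of $W$), together with the verification that $W(k)\not\equiv 0$; both are standard, but must be stated carefully to license the ``ratio of entire functions'' conclusion.
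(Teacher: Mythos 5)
Your argument is correct and is essentially the paper's own proof in different clothing: both rest on analytic parameter dependence (Picard iteration) of solutions of (\ref{eq:hom}) with analytic initial data, and both then read meromorphy of $\hat{\psi}$ off (\ref{eq:ps1}), with poles confined to the zeros of $W_p$ --- the paper packages this by introducing $f_1,f_2$ with $p$-independent data at $x=-1$ (hence entire in $p$) and isolating all $\sqrt{-ip}$-dependence in explicit coefficients, while you work directly in the uniformizing variable $k=\sqrt{-ip}$ and pull back through the square-root covering, which amounts to the same thing. One small repair: solving the matching conditions at $x=-1$ for $A(k),B(k)$ requires dividing by the Wronskian $\mp 2k$ of $e^{\pm kx}$, so these are a priori only meromorphic (possible simple pole at $k=0$) rather than entire; this is harmless for your conclusion, and is fixed either by noting that only the entire combinations $A+B$ and $k(A-B)$ actually enter $y_+$, or, more simply, by applying the analytic-dependence theorem on all of $\RR$ at once (the coefficient $V(x)+k^2$, with $V\equiv 0$ outside $[-1,1]$, is entire in $k$ for every $x$), which gives entireness of $y_+(x;\cdot)$ without any matching step.
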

\begin{proof}
This
follows from the following simple argument. Note first that
continuity of $y$ and $y'$ imply  the following matching conditions:
   $$\begin{cases}
    y_{+}(1)=e^{-\sqrt{-ip}}\\
    y_{+}'(1)=-\sqrt{-ip}e^{-\sqrt{-ip}}\\
y_{-}(-1)=e^{\sqrt{-ip}}\\
    y_{-}'(-1)=\sqrt{-ip}e^{\sqrt{-ip}}
   \end{cases}$$
Consider now the solutions $f_1$ and $f_2$ of (\ref{eq:hom}) with
{\em initial conditions }  $f_1(-1)=1$, $f'_1(-1)=0$ and
 $f_2(-1)=0$, $f'_2(-1)=1$. By standard  results on analytic
 parametric-dependence of solutions of differential equations (see, e.g. \cite{Hille}), we see that $f_1$ and $f_2$ are defined on $\RR$
 and for fixed $x$ they are entire in $p$.
We note that, by construction, the Wronskian $[f_1,f_2]$ is one. Then,
$$y_{+}(x)=C_{1}f_{1}(x)+C_{2}f_{2}(x),\,
  y_{-}(x)=C_{3}f_{1}(x)+C_{4}f_{2}(x)$$
where
\[
C_{1}=\sqrt{-ip}e^{-\sqrt{-ip}}\left(f_{2}(1)-f_{2}'(1)\right)\]
\[
C_{2}=-\sqrt{-ip}e^{-\sqrt{-ip}}\left(f_{1}(1)-f_{1}'(1)\right)\]
\[
C_{3}=-\sqrt{-ip}e^{-\sqrt{-ip}}\]
\[
C_{4}=\sqrt{-ip}e^{-\sqrt{-ip}}\]
Furthermore,
\begin{equation}
  % \begin{equation}
  W_{p}=-e^{-2\sqrt{-ip}}
\bigg(ip(f_{2}(1))+f_{1}'(1))
 -\sqrt{-ip}(f_{1}(1)+f_{2}'(1))\bigg)\label{eq:wp}
  % \end{equation}
\end{equation}\label{lemma1}
\z Thus $y_{\pm}$ and $W_p$ are analytic in $\mathbb{C}_{1/2;0}$ with a possible
branch point at zero. The same follows for $\hat{\psi}$, by inspection, if
we rewrite its expression as
  \begin{multline}
    % \begin{equation}
    \hat{\psi}(x,p)=\frac{i}{W_{p}}\left(y_{-}(x)\bigg(\int_{1}^{x}y_{+}(s)\psi_{0}(s)ds+\int_{+\infty}^{1}e^{-\sqrt{-ip}s}\psi_{0}(s)ds\right)\\-y_{+}(x)\left(\int_{-1}^{x}y_{-}(s)\psi_{0}(s)ds+\int_{-\infty}^{-1}e^{\sqrt{-ip}s}\psi_{0}(s)ds\right)\bigg)\label{eq:psi}
    % \end{equation}
  \end{multline}
\end{proof}
% \begin{Remark}
%   We may also use the Fredholm alternative to show the analyticity
%   of $\hat{\psi}$, see reference?
% \end{Remark}

\subsection{The poles for large $p$ in the left half plane}

To effectively calculate the asymptotic position of poles as
$p\rightarrow\infty$ in the left half plane, we need a more
convenient choice for $f_{1},f_{2}$. In the previous subsection they
were chosen to be analytic in $p$. Here we choose a new pair of
$f_{1},f_{2}$ for which the asymptotic behavior as
$p\rightarrow\infty$ is manifest.
\begin{Note} {\rm It is straightforward to check that if $f_{1}(x)$ and $f_{2}(x)$ are solutions of
  (\ref{eq:hom}), and their Wronskian $W_{f;p}=[f_1,f_2]$
is nonzero, then in the decomposition
  $y_{+}(x)=C_{1}f_{1}(x)+C_{2}f_{2}(x),\,
  y_{-}(x)=C_{3}f_{1}(x)+C_{4}f_{2}(x)$ we have

\[
C_{1}=\sqrt{-ip}\dfrac{e^{-\sqrt{-ip}}}{W_{f;p}}\left(f_{2}(1)-f_{2}'(1)\right)\]
\[
C_{2}=-\sqrt{-ip}\dfrac{e^{-\sqrt{-ip}}}{W_{f;p}}\left(f_{1}(1)-f_{1}'(1)\right)\]
\[
C_{3}=-\sqrt{-ip}\dfrac{e^{-\sqrt{-ip}}}{W_{f;p}}\left(f_{2}(-1)+f_{2}'(-1)\right)\]
\[
C_{4}=\sqrt{-ip}\dfrac{e^{-\sqrt{-ip}}}{W_{f;p}}\left(f_{1}(-1)+f_{1}'(-1)\right)\]
Furthermore,  $W_p=[y_+,y_-]=(C_1C_4-C_2C_3)[f_1,f_2]$ is given by
\begin{multline}
  % \begin{equation}
  W_{p}=-\frac{e^{-2\sqrt{-ip}}}{W_{f;p}}\bigg(\sigma(f_{1}(1)f_{2}(-1)-f_{1}(-1)f_{2}(1))-f_{1}'(1)f_{2}'(-1)+\\f_{1}'(-1)f_{2}'(1)
  -\sqrt{-ip}(-f_{1}'(-1)f_{2}(1)-f_{1}'(1)f_{2}(-1)+f_{1}(1)f_{2}'(-1)+f_{1}(-1)f_{2}'(1)\bigg)\label{eq:wp}
  % \end{equation}
\end{multline}\label{lemma1}}
\end{Note}
\begin{Proposition}[WKB solutions]\label{fff} In $S_+=\{p:\Re(\sqrt{-ip})\ge 0\}$
there exist two linearly independent solutions of (\ref{eq:hom}) of
the form
 \begin{equation}
    f_{1}(x)=e^{-\sqrt{-ip}x}\left(1-\frac{1}{2\sqrt{p}}\int_{0}^{x}\sqrt{i}V(s)ds+\frac{1}{p}g_{1}(x)\right)\label{eq:f1}\end{equation}
\begin{equation}
  f_{2}(x)=e^{\sqrt{-ip}x}\left(1+\frac{1}{2\sqrt{p}}\int_{0}^{x}\sqrt{i}V(s)ds+\frac{1}{p}g_{2}(x)\right)\label{eq:f2}\end{equation}
where $g_1(x),g'_1(x),g_2(x),g'_2(x)$  are  bounded in $p$ as
$p\rightarrow\infty$ in $S_+$. A similar statement
holds  $S_-=\{p:\Re(\sqrt{-ip})\le 0\}$\end{Proposition}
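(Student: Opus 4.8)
The plan is to linearize \eqref{eq:hom} about the free exponentials and solve the resulting equation by successive approximations, with every estimate uniform in $p$ up to and including the boundary of $S_+$. Writing $k=\sqrt{-ip}$ and seeking $f_1(x)=e^{-kx}u(x)$ turns \eqref{eq:hom} into $u''-2ku'=Vu$; the substitution $f_2=e^{kx}u$ gives $u''+2ku'=Vu$, so it suffices to treat one sign and obtain the other by $k\mapsto -k$. Multiplying $u''-2ku'=Vu$ by the integrating factor $e^{-2kx}$ and integrating once yields the exact first–order relation $u'(x)=-\int_x^1 e^{2k(x-s)}V(s)u(s)\,ds$, the upper limit being $1$ since $\mathrm{supp}\,V\subset[-1,1]$ (in particular $u'\equiv 0$ for $x>1$, so $f_1$ is a constant multiple of $e^{-kx}$ there). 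Integrating once more and imposing $u(0)=1$ produces the closed Volterra equation $u(x)=1-\int_0^x\!\int_t^1 e^{2k(t-s)}V(s)u(s)\,ds\,dt$. On $S_+$ one has $\Re k\ge 0$, and since the inner integration runs over $s\ge t$ the kernel satisfies $|e^{2k(t-s)}|\le 1$ throughout, so all integrals are over $\mathrm{supp}\,V$ and are finite.

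Next I would solve this equation by the contraction mapping principle, the values for $x\notin[-1,1]$ being explicit and bounded. Starting from $u\equiv 1$, the next iterate satisfies $u'(x)\approx-\int_x^1 e^{2k(x-s)}V(s)\,ds$; one integration by parts in $s$, legitimate because $V\in C^2$, gives $-\tfrac{1}{2k}V(x)+O(k^{-2})$, whence $u(x)=1-\tfrac{1}{2k}\int_0^x V+O(k^{-2})$. Since $\tfrac{1}{2k}=\tfrac{\sqrt i}{2\sqrt p}$ and $k^{-2}=O(p^{-1})$, this is exactly the two–term form of \eqref{eq:f1}, and the remainder multiplied by $p$ defines $g_1$; differentiating the relation for $u'$ and integrating by parts once more controls $g_1'$. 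The analogous computation with $k\mapsto -k$ delivers \eqref{eq:f2}.

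The \emph{main difficulty} is uniform smallness of the integral operator throughout $S_+$ up to its boundary $\Re k=0$. For $\Re k>0$ the kernel $e^{2k(t-s)}$ is genuinely small off the diagonal, but on the boundary rays it is purely oscillatory of modulus $1$, so the naive bound $\|K\|\lesssim\|V\|_{L^1}$ is merely $O(1)$ and need not be contractive. The required smallness must instead be extracted from oscillation by integrating by parts in $s$, transferring a derivative onto $V$ (hence the hypothesis $V\in C^2$) and producing an explicit factor $k^{-1}$. Because $(Vu)'$ reintroduces $u'$, this is cleanest on the coupled pair $(u,u')$: after one integration by parts the first–order relation reads
\[
u'(x)=\frac{1}{2k}\Big(e^{2k(x-1)}V(1)u(1)-V(x)u(x)\Big)-\frac{1}{2k}\int_x^1 e^{2k(x-s)}\big(V'(s)u(s)+V(s)u'(s)\big)\,ds,
\]
in which every term carries a factor $k^{-1}$, exhibiting $u'=O(k^{-1})$. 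This makes the map $(u,u')\mapsto(\,\cdot\,,\,\cdot\,)$ a contraction on $C[-1,1]\times C[-1,1]$ uniformly for $p\in S_+$ with $|p|\ge p_0$, so the Neumann series converges and each iterate is analytic in $p$. One must additionally check that the endpoint contributions at $s=\pm1$, where $V$ may jump, enter only through the bounded $O(k^{-1})$ boundary terms displayed above, which is where the one–sided $C^2$ hypothesis at $\pm1$ is used.

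Finally, $f_1$ and $f_2$ are linearly independent for $|p|$ large because their Wronskian is $f_1f_2'-f_1'f_2=2k\,(1+O(k^{-1}))\neq 0$, and the statement on $S_-=\{p:\Re(\sqrt{-ip})\le 0\}$ follows verbatim after exchanging the roles of $f_1$ and $f_2$ and reversing the sign of $k$, the kernel bound $|e^{2k(t-s)}|\le 1$ then holding for $s\le t$.
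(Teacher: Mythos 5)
Your reduction and mechanism are the same as the paper's: pass to $u=e^{kx}f_1$ with $k=\sqrt{-ip}$, convert to a Volterra relation based at $x=1$ so that the kernel obeys $|e^{2k(x-s)}|\le 1$ on $S_+$, and generate the missing $1/k$ smallness by integration by parts (which is where $V\in C^2$ enters). However, your normalization contains a genuine error. You fix the solution by $u'(1)=0$, i.e.\ $f_1$ is exactly a multiple of $e^{-kx}$ for $x\ge 1$ (the Jost solution). For that solution the conclusion of the Proposition is \emph{false} whenever $V(1)\neq 0$. Indeed, writing $u=1-\frac{1}{2k}\int_0^x V+\frac{1}{p}g_1$ (recall $\frac{\sqrt{i}}{2\sqrt{p}}=\frac{1}{2k}$) gives $g_1'(x)=p\left(u'(x)+\frac{V(x)}{2k}\right)$, so your boundary condition forces
\begin{equation*}
g_1'(1)=\frac{p\,V(1)}{2k}=\frac{\sqrt{i}}{2}\,\sqrt{p}\,V(1),
\end{equation*}
which grows like $\sqrt{|p|}$. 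The step at fault is the assertion that one integration by parts yields $u'(x)=-\frac{1}{2k}V(x)+O(k^{-2})$: the boundary term $\frac{1}{2k}e^{2k(x-1)}V(1)u(1)$ produced at $s=1$ is only $O(k^{-1})$ (of modulus $\approx |V(1)|/(2|k|)$ at $x=1$ throughout $S_+$, and for all $x$ on the boundary rays $\Re k=0$). In physical terms, the Jost solution contains a wave reflected at the edge of the support, of relative amplitude $\sim V(1)/(4k^2)$; it contributes $O(1)$ to $g_1$ but $O(\sqrt{|p|})$ to $g_1'$. This is not a negligible corner case: the paper allows $V$ to jump at $\pm 1$ (the square barrier has $V(1)=1$), and boundedness of $g_{1,2}'(\pm 1)$ is used essentially later, e.g.\ in $h_1(p)=\left(\sqrt{p}\frac{\sqrt{i}}{2}V(1)-g_1'(1)\right)\left(\sqrt{p}\frac{\sqrt{i}}{2}V(-1)+g_2'(1)\right)$ and hence in the resonance asymptotics of Proposition \ref{asy}; for your $f_1$ the first factor would lose its leading term identically.

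The repair is exactly the paper's choice: do not take the Jost solution. The paper imposes the boundary condition on the remainder, $g_1'(1)=0$ --- equivalently $u'(1)=-V(1)/(2k)$ --- and runs the contraction directly on the single unknown $g_1'$ (with $g_1=\int_0^x g_1'$ substituted inline), so the bounded fixed point is by construction of the claimed form; after integration by parts the dangerous oscillatory term becomes $e^{2k(x-1)}V(1)\bigl(u(1)-1\bigr)/(2k)=O(k^{-2})$ and the inhomogeneity of the fixed-point equation is $O(1)$. The resulting $f_1$ then carries, for $x>1$, a small admissible $e^{+kx}$ component, which is harmless because $y_\pm$ are afterwards expressed through arbitrary independent $f_1,f_2$ via the coefficients $C_1,\dots,C_4$. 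A further, minor point: your coupled map on $C[-1,1]\times C[-1,1]$ is not literally a contraction in the product sup norm (the component $u=1+\int_0^x u'$ has Lipschitz constant $2$ in $u'$); one should use a weighted norm $\|u\|+\lambda\|u'\|$, iterate the map twice, or eliminate $u$ altogether as the paper does.
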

\begin{Note}
  This is in a sense standard WKB; however, since details about the
  regularity of the terms expansion are needed we provide
  a complete proof.
\end{Note}
\begin{proof} We will only prove the conclusion for $g_{1}$, since the
  proof for $g_{2}$ follows analogously.
  Substituting (\ref{eq:f1}) into (\ref{eq:hom}), we obtain the
  equation for $g_{1}$:
  \begin{multline}
    g_{1}''(x)-2\sqrt{-ip}g_{1}'(x)-V(x)g_{1}(x)+i\frac{\sqrt{-ip}}{2}\left(\int_{0}^{x}V(s)ds-V'(x)\right)=0
  \end{multline}
We rewrite this equation as an integral equation for $g_{1}'$:
\begin{multline}
  g_{1}'(x)=e^{2\sqrt{-ip}x}\\\int_{x_{0}}^{x}e^{-2\sqrt{-ip}s}\left[V(s)\int_{0}^{s}g_{1}'(u)du-\frac{\sqrt{i}}{2}\sqrt{p}\left(\int_{0}^{s}V(u)du-V'(s)\right)\right]ds
\end{multline} where $x_{0}=1$ if $-\pi/2<\arg p<3\pi/2$, and $x_{0}=-1$ if $-5\pi/2<\arg
p<-\pi/2$. Note that $|e^{-2\sqrt{-ip}(s-x)}|\leqslant1$ for all $s$
between $x_{0}$ and $x$. Using integration by parts we obtain
\begin{multline}
g_{1}'(x)=\ds-\frac{1}{2\sqrt{-ip}}\left(V(x)\int_{0}^{x}g_{1}'(u)du-e^{-2\sqrt{-ip}(x_0-x)}V(x_0)\int_{0}^{x_0}g_{1}'(u)du\right)\\
+\ds\frac{1}{2\sqrt{-ip}}\int_{x_{0}}^{x}e^{-2\sqrt{-ip}(s-x)}\left(V'(s)\int_{0}^{s}g_{1}'(u)du+V(s)g_{1}'(s)\right)ds\\
-\ds\frac{1}{4i}\left(\int_{0}^{x}V(u)du-V'(x)-e^{-2\sqrt{-ip}(x_0-x)}\left(\int_{0}^{x_0}V(u)du-V'(x_0)\right)\right)\\
+\ds\frac{1}{4i}\int_{x_{0}}^{x}e^{-2\sqrt{-ip}(s-x)}\left(
V(s)-V''(s)\right)ds
\end{multline}

For large $p$, under the norm $||f||=\sup_{x\in[-1,1]}|f(x)|$ the
above integral equation is easily seen to be contractive inside the
ball \[ ||f||\leqslant\sup_{-1\leqslant
  x\leqslant1}\left(\bigg|V(x)\bigg|+\bigg|V'(x)\bigg|+\bigg|V''(x)\bigg|\right)\]

Therefore $g_{1}'(x)$ and $g_{1}(x)=\int_{0}^{x}g_{1}'(u)du$ are both
bounded in $p$ as $p\rightarrow\infty$.
\end{proof}

\begin{Remark}
  Higher order terms in the asymptotic expansion of $f_{1},f_2$ can be similarly obtained, provided that $V$ is sufficiently smooth.
\end{Remark}

Recalling (\ref{eq:psi}), we see that for large $p$, the poles of
$\psi$ can only come from the zeros of $W_{p}$.
%Let
%\begin{equation}
%  h_{1}(p)=\left(\frac{i\sqrt{-ip}}{2}V(1)-g_{1}'(1)\right)\left(i\frac{\sqrt{-ip}}{2}V(-1)+g_{2}'(1)\right)\label{eq:h1}\end{equation}
%and \begin{equation}
%  h_{2}(p)=4ip^3-2\sqrt{-i}\left(V(1)-V(-1)\right)p^{5/2}+O(p^{2})\label{eq:h2}\end{equation}
Substituting (\ref{eq:f1}) and (\ref{eq:f2}) into
  (\ref{eq:wp}), we see that
\begin{equation}
  W_{p}=\frac{1}{p^{2}h_3(p)}
\left(e^{-4\sqrt{-ip}}h_{1}(p)+h_{2}(p)\right)\label{eq:wp1}\end{equation}
where\begin{equation}
  h_{1}(p)=\left(\sqrt{p}\frac{\sqrt{i}}{2}V(1)-g_{1}'(1)\right)\left(\sqrt{p}\frac{\sqrt{i}}{2}V(-1)+g_{2}'(1)\right)\label{eq:h1}\end{equation}
 \begin{equation}
  h_{2}(p)=4ip^{3}+2i\sqrt{i}\left(V(1)-V(-1)\right)p^{5/2}+O(p^{2})\label{eq:h2}\end{equation}

and \begin{equation} h_3(p)=-2\sqrt{-ip}+o(1) \label{eq:h33}\end{equation}

\begin{Proposition} \label{asy}In the generic case
when $h_1\not\equiv 0$,
$W_{p}$ has infinitely many zeros in the left half
  plane. Their asymptotic behavior is
  \begin{equation}
    \label{posp}
    p=\left\{
               \begin{array}{ll}
                 -\frac{\pi^{2}i}{4}k^2-\pi k\log k+a_vk+o(k), & \hbox{$V(1)V(-1)\neq 0$;} \\
                 -\frac{\pi^{2}i}{4}k^2-\frac{5\pi}{4} k\log k+b_vk+o(k), & \hbox{exactly one of $V(\pm1)$ is zero;} \\
                 -\frac{\pi^{2}i}{4}k^2-\frac{3\pi}{2} k\log k+c_vk+o(k), & \hbox{$V(1)=V(-1)=0$.}
               \end{array}
             \right.
  \end{equation} where $k\in\mathbb{N}$ and $k\to\infty$, and $a_v,b_v,c_v$ are constants.
\end{Proposition}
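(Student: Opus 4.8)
The plan is to reduce the location of the zeros of $W_p$ to a single transcendental equation and solve it asymptotically. Starting from (\ref{eq:wp1}), the prefactor $1/(p^2 h_3(p))$ is analytic and nonvanishing for large $p$, since $h_3(p)=-2\sqrt{-ip}+o(1)$, so the zeros of $W_p$ coincide with those of $e^{-4\sqrt{-ip}}h_1(p)+h_2(p)$. First I would extract the leading behavior of $h_1$ and $h_2$ from (\ref{eq:h1})--(\ref{eq:h2}), using the WKB bounds of Proposition \ref{fff} (in the appropriate half-sector $S_\pm$) to conclude that $h_1(p)\sim c_1 p^{\alpha}$ with $\alpha=1$ when $V(1)V(-1)\ne 0$, $\alpha=1/2$ when exactly one endpoint value vanishes, and $\alpha=0$ when both vanish, while $h_2(p)\sim 4ip^3$ in every case. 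The single exponent $\alpha$ is what separates the three regimes of the Proposition.

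Next I would set $w=\sqrt{-ip}$, so $p=iw^2$, and rewrite $e^{-4\sqrt{-ip}}h_1+h_2=0$ as $e^{-4w}=-h_2/h_1$, i.e.
\[
-4w=\log\!\left(-\frac{h_2(iw^2)}{h_1(iw^2)}\right)+2\pi i k,\qquad k\in\ZZ .
\]
Because $-h_2/h_1\sim \mathrm{const}\cdot p^{3-\alpha}=\mathrm{const}\cdot w^{2(3-\alpha)}$, the right-hand side equals $2(3-\alpha)\log w+C_\alpha+o(1)+2\pi i k$ with $C_\alpha$ an explicit $V$-dependent constant. I would solve by successive approximation: the leading balance $-4w\approx 2\pi i k$ gives $w\approx -\pi i k/2$, and reinsertion produces the logarithmic correction $w=-\pi i k/2-\tfrac{3-\alpha}{2}\log k+d_\alpha+o(1)$ with $d_\alpha$ again constant. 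Converting back through $p=iw^2$ and expanding gives
\[
p=-\frac{\pi^{2}i}{4}k^2-\frac{(3-\alpha)\pi}{2}\,k\log k+\pi d_\alpha\,k+o(k),
\]
which reproduces the coefficients $-\pi,\ -\tfrac{5\pi}{4},\ -\tfrac{3\pi}{2}$ for $\alpha=1,\tfrac12,0$ respectively, with $a_v,b_v,c_v=\pi d_\alpha$.

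The genuinely delicate step is upgrading this formal root to a rigorous count: showing that for each large $k$ there is \emph{exactly one} zero near the predicted point and that none are missed. I would do this by an argument-principle (Rouch\'e) estimate on a family of small boxes in the $w$-plane centered at the approximate roots, comparing $|e^{-4w}h_1+h_2|$ with its dominant term on the boundary. This demands uniform control, across the relevant sector, of the $o(1)$ in $h_3$, the $O(p^2)$ remainder in $h_2$, and the bounded WKB corrections $g_1',g_2'$ supplied by Proposition \ref{fff}, together with care that the logarithm is taken on the correct branch as $k$ varies. The main obstacle is exactly the sharp, uniform bounds in the sector where the poles recede rapidly, flagged in the Introduction: establishing that the error terms are negligible on contours of width $\sim 1$ in $w$ (hence $\sim k$ in $p$) is where the real work lies. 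Infinitude of zeros then follows automatically, since the construction yields one root for every sufficiently large $k$.
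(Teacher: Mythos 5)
Your proposal is correct and is essentially the paper's own argument: the paper likewise reduces $W_p=0$ to $e^{-4\sqrt{-ip}}=-h_2(p)/h_1(p)$, notes that roots must lie near $p\sim -k^2\pi^2 i/4$, substitutes $p=-i(k\pi/2+z)^2$ (your $w=\sqrt{-ip}$ recentered at the approximate root), and extracts the three $k\log k$ coefficients by iterating the resulting logarithmic equation, exactly as you do. The only organizational difference is the counting step: instead of your Rouch\'e boxes, the paper observes that the fixed-point map $z\mapsto \frac{1}{4i}\log\left(-h_2(-i(k\pi/2+z)^2)/h_1(-i(k\pi/2+z)^2)\right)$ is contractive for large $k$, so existence and uniqueness of one root per $k$, together with the asymptotics, follow at once from the contraction principle---the step you single out as ``where the real work lies'' is thereby made nearly automatic.
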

\begin{proof}
The equation $W_p=0$ reads
\begin{equation}
  \label{eq:zeros}
  e^{-4\sqrt{-ip}}=-\frac{h_{2}(p)}{h_{1}(p)}
\end{equation}
A simple analysis shows that this can only happen if $p$ is near the
negative imaginary line with $p\sim-k^{2}\pi^{2}i/4$ where
$k\in\mathbb{N}$. We let $p=-i(k\pi/2+z)^{2}$ and rewrite (\ref{eq:zeros}) in terms of $z$:
\begin{equation}
  z=\frac{1}{4i}\log\left(-\frac{h_{2}(-i(k\pi/2+z)^{2})}{h_{1}(-i(k\pi/2+z)^{2})}\right)\label{eq:con}\end{equation}

Recalling (\ref{eq:h1}) and (\ref{eq:h2}), we easily see that the
right hand side of the above equation is contractive for large $k$.

One can find the asymptotic behavior of $z$ by iteration. First
assume $V(1)V(-1)\neq 0$. It is easy to see that
$$-\frac{h_{2}(-i(k\pi/2)^{2})}{h_{1}(-i(k\pi/2)^{2})}=\frac{\pi^4
k^4}{4V(1)V(-1)}(1+O(1/k))$$ Therefore $z\sim-i\log k$. Further
iteration implies $z=-i\log k+\tilde{a}_v+o(1)$.

Similarly, if exactly one of $V(\pm1)$ is zero, then
$z=-\frac{5}{4}i\log k+\tilde{b}_v+o(1)$. If $V(1)=V(-1)=0$ then
$z=-\frac{3}{2}i\log k+\tilde{c}_v+o(1)$; Eq. (\ref{posp}) follows.

\end{proof}

The above analysis shows that all zeros of $W_p$ for large $p$ are
in the left half plane. Thus we have
\begin{Corollary}
There are only finitely many bound states (this, of course
can be simple shown by standard spectral techniques).
\end{Corollary}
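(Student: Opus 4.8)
The plan is to translate the count of bound states into a count of zeros of the Wronskian $W_p$ and then to confine those zeros to a bounded region using Proposition \ref{asy}. First I would observe that an $L^2$ eigenfunction of $H$ with eigenvalue $E$ satisfies $\psi''=(V-E)\psi$ together with decay at both ends; comparing with the homogeneous equation (\ref{eq:hom}), this is precisely the case $p=-iE$ in which the distinguished solutions $y_\pm$ of (\ref{eq:defypm}), decaying at $+\infty$ and $-\infty$ respectively, become proportional, i.e. $W_p=0$ with $\Re\sqrt{-ip}>0$. Conversely, each such zero yields a genuine $L^2$ solution. Hence bound states are in bijection with the zeros of $W_p$ at the points $p=-iE_k$ corresponding to the discrete spectrum, and these are visible directly in the explicit formula (\ref{eq:ps1}), where $W_p$ sits in the denominator.

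Next I would invoke self-adjointness of $H$: its eigenvalues $E_k$ are real, so the associated parameter values $p=-iE_k$ lie on the imaginary axis, $\Re p=0$. (For compactly supported $V$ the essential spectrum is $[0,\infty)$, so in fact $E_k<0$ and $p=i|E_k|$ lies on the positive imaginary axis of the physical sheet; but only $\Re p=0$ is used.) By the remark immediately preceding the statement, which follows from Proposition \ref{asy}, every zero of $W_p$ with $|p|$ large has $\Re p\sim-\pi k\log k\to-\infty$, and so lies strictly in the open left half plane, bounded away from the imaginary axis. Consequently no bound-state value $p=-iE_k$ can be large: all of them are trapped in a fixed disk $\{|p|\le R\}$.

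To finish I would use that $W_p$ is analytic on $\CC_{1/2;0}$ away from the branch point $p=0$ and, by (\ref{eq:wp1})--(\ref{eq:h2}), is not identically zero; hence its zeros are isolated and the finitely many inside $\{|p|\le R\}$ that meet the imaginary axis give the bound states, so $N<\infty$. The one point requiring care --- and the only real obstacle --- is the branch point $p=0$ itself, the threshold $E=0$ at the bottom of the essential spectrum: a priori the imaginary-axis zeros could accumulate there. This is exactly where genuine $L^2$ bound states ($E<0$, so $p$ bounded away from $0$) must be separated from a possible zero-energy resonance, and ruling out accumulation at threshold is handled either by the local analysis of the Green's function near zero carried out in \S\ref{ilt1}, or, independently of all of the above, by the standard short-range estimate: $V$ is bounded and compactly supported, so $\int(1+|x|)|V(x)|\,dx<\infty$ and the one-dimensional Bargmann bound $N\le 1+\int_{-\infty}^{\infty}|x||V(x)|\,dx$ already gives finiteness.
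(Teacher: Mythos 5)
Your proof is correct and takes essentially the same route as the paper, whose entire argument is the single observation (from Proposition \ref{asy}) that all zeros of $W_p$ with $|p|$ large lie in the left half plane, so that the imaginary-axis zeros corresponding to bound states are confined to a bounded set. The extra care you take at the threshold $p=0$ is a refinement the paper leaves implicit (accumulation there is impossible because $W_p$ is analytic and not identically zero as a function of $\sqrt{-ip}$, so its zeros are isolated even at the branch point), and your Bargmann-bound fallback is precisely the ``standard spectral techniques'' the paper's parenthetical alludes to.
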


We may now proceed to consider the order
of these poles as well as their residues.

\begin{Proposition}\label{P8} The poles of $\hat{\psi}$ for large $p$ are
  simple, and the residues grow sub-exponentially. The residues of $1/W_p$ grow at most polynomially. (In fact, they
  grow exactly polynomially, since the asymptotic expansions in
  (\ref{eq:wp1})--(\ref{eq:h33}) are
  differentiable.) \end{Proposition}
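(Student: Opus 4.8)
The plan is to trace every pole of $\hat\psi$ back to a zero of the Wronskian $W_p$ and to read off both the pole order and the residue growth from the (differentiated) asymptotics (\ref{eq:wp1})--(\ref{eq:h33}). I write (\ref{eq:psi}) as $\hat\psi(x,p)=(i/W_p)\,N(x,p)$, where $N(x,p)$ is the bracketed combination of $y_\pm(x)$ and the $\psi_0$-integrals. Since $N$ is assembled from the solutions $f_1,f_2$ (entire in $p$) and the factors $e^{\pm\sqrt{-ip}s}$, and the $\psi_0$-integrals run over the compact support of $\psi_0$, the function $N(x,\cdot)$ is analytic on $\CC_{1/2;0}$. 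Hence for large $p$ the only poles of $\hat\psi$ sit at the zeros $p_k$ of $W_p$ located by Proposition \ref{asy}, and the order of the pole at $p_k$ is at most the order of the zero of $W_p$ there. It therefore suffices to show these zeros are simple.

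For simplicity I factor out the nonvanishing analytic prefactor: by (\ref{eq:wp1}) the zeros of $W_p$ coincide with those of $F(p):=e^{-4\sqrt{-ip}}h_1(p)+h_2(p)$, since $p^2h_3(p)\sim-2\sqrt{-ip}\,p^2$ is analytic and bounded away from $0$ near $p_k$. Using $\frac{d}{dp}e^{-4\sqrt{-ip}}=\frac{2i}{\sqrt{-ip}}e^{-4\sqrt{-ip}}$ gives
\[
F'(p)=e^{-4\sqrt{-ip}}\Bigl(\tfrac{2i}{\sqrt{-ip}}\,h_1+h_1'\Bigr)+h_2'.
\]
At a zero I substitute the defining relation $e^{-4\sqrt{-ip_k}}=-h_2/h_1$ from (\ref{eq:zeros}) to eliminate the exponential, obtaining $F'(p_k)=-\frac{2i}{\sqrt{-ip_k}}h_2-\frac{h_2h_1'}{h_1}+h_2'$ (all evaluated at $p_k$). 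With the orders from (\ref{eq:h1})--(\ref{eq:h2}) the first term is $\sim c\,|p_k|^{5/2}$ with $c\neq0$, while the other two are $O(|p_k|^2)$ in each of the three regimes of (\ref{posp}) (there $h_1$ has order $p,p^{1/2},p^0$ respectively, but $h_2h_1'/h_1=O(|p_k|^2)$ throughout). Hence $F'(p_k)\neq0$ for large $k$, the zeros of $W_p$ are simple, and so are the poles of $\hat\psi$.

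Since $F(p_k)=0$, the residue of $1/W_p$ at $p_k$ is $1/W_p'(p_k)$ with $W_p'(p_k)=F'(p_k)/(p_k^2h_3(p_k))$. From $|F'(p_k)|\sim c|p_k|^{5/2}$ and $|p_k^2h_3(p_k)|\sim 2|p_k|^{5/2}$ one gets $|W_p'(p_k)|\to$ a nonzero constant, so the residues of $1/W_p$ are bounded above and below; this is the ``at most (in fact exactly) polynomial'' claim, the two-sided bound being available precisely because (\ref{eq:wp1})--(\ref{eq:h33}) are differentiable asymptotics rather than mere size estimates. For $\hat\psi$ itself $\mathrm{Res}_{p_k}\hat\psi=iN(x,p_k)/W_p'(p_k)$, so the growth is that of $N(x,p_k)$. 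At $p_k$ one has $\Re\sqrt{-ip_k}\sim-\mathrm{const}\cdot\log k<0$ (forced by $|e^{-4\sqrt{-ip_k}}|=|h_2/h_1|$ growing polynomially), so on the compact support of $\psi_0$ the factors $e^{\pm\sqrt{-ip_k}s}$ are polynomially bounded in $k$; feeding the WKB forms (\ref{eq:f1})--(\ref{eq:f2}) into $y_\pm$ then shows $N(x,p_k)$ grows at most polynomially in $k$ uniformly for $x$ on compact sets (so of exponential order $\le\tfrac12$ in $p$, i.e. sub-exponential). This is the $\ln|g_k\Gamma_k|=O(\sqrt{p_k})$ bound quoted after the Corollary, and it is what makes the series in (\ref{eq:dec}) converge after multiplication by $e^{-\gamma_kt}$.

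The main obstacle I anticipate is the differentiability of the asymptotic expansions used above: both the simple-zero conclusion and the \emph{exactly} polynomial lower bound hinge on differentiating (\ref{eq:wp1}) in $p$ and controlling $\partial_p g_1,\partial_p g_2$, not merely the sup-norm bounds of Proposition \ref{fff}. I would obtain these by differentiating in $p$ the contractive integral equation for $g_1'$ in the proof of Proposition \ref{fff} and rerunning the fixed-point estimate for $\partial_p g_1'$, and similarly for $g_2$; this also pins down the convergence rates of $g_i'(\pm1)$ that keep the $h_2h_1'/h_1$ term $O(|p_k|^2)$ in the degenerate case $V(1)=V(-1)=0$. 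A secondary, routine point is to check in each line of (\ref{posp}) that the $|p_k|^{5/2}$ leading term of $F'(p_k)$ is never cancelled, so that simplicity and the nonvanishing residues of $1/W_p$ hold without genericity assumptions.
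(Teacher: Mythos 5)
Your proof is correct and takes essentially the same route as the paper's own (much terser) proof, which writes $W_p=W_p'(p_k)(p-p_k)(1+o(1))$, states that $W_p'(p_k)\ne 0$ ``can be easily checked'' (your computation of $F'(p_k)$ after substituting $e^{-4\sqrt{-ip_k}}=-h_2/h_1$ is exactly that check, and correctly yields $W_p'(p_k)$ tending to a nonzero constant), and then bounds the residues of $\hat{\psi}$ by $O(e^{|\Re\sqrt{-ip}|(|x|+2)})$ using the bounds on $y_{\pm}$ from Lemma \ref{y12}, just as you do via the WKB forms and $|\Re\sqrt{-ip_k}|\sim\log k$. The differentiability of the expansions (\ref{eq:wp1})--(\ref{eq:h33}), which you rightly single out as the one point needing extra work, is asserted without proof in the paper as well, so your plan of differentiating the contractive integral equation for $g_1'$ supplies a detail the paper leaves implicit rather than diverging from its approach.
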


\begin{proof}  Recalling (\ref{eq:wp1}), we notice that
 \[
W_{p}=W'_p(p_k)(p-p_k)(1+o(1))\] where it can be easily checked that
$W'_p(p_k)\ne 0$. Then  $1/W_p'(p_k)$ grows at most polynomially,
and this together with the bounds on $ y_{\pm}$ (see Lemma
\ref{y12}) show that the residues of $\hat{\psi}$ are bounded by
$O(e^{|\Re\sqrt{-ip}|(|x|+2)})$.
\end{proof}

The polynomial growth of residues, along with the analyticity of
$\hat{\psi}$, show convergence of the sum in (\ref{eq:dec}) as
well as its Borel-summability.

\subsection{Asymptotics of $\hat{\psi}$} \label{ilt1}
We will show that $\hat{\psi}$ has sufficient decay to allow for
inverse Laplace transform as well as the desired bending of contour
leading to Borel summation. First we rewrite (\ref{eq:ps1}) as
\begin{equation}
  -iW_{p}\hat{\psi}(x,p)=y_{-}(x)\int_{M}^{x}y_{+}(s)\psi_{0}(s)ds-y_{+}(x)\int_{-M}^{x}y_{-}(s)\psi_{0}(s)ds\label{eq:pso}
\end{equation}
assuming supp\,$\psi_0\in[-M,M]$.

\begin{Lemma}\label{y12}
$y_{\pm}=O\left(\sqrt{p}e^{2|\Re{\sqrt{-ip}}|(|x|+2)}\right)$ for
large $p\in\mathbb{C}$.
\end{Lemma}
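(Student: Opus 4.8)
The plan is to derive the bound directly from the WKB representation of Proposition \ref{fff}. Writing $y_{+}=C_1f_1+C_2f_2$ and $y_{-}=C_3f_1+C_4f_2$ with the coefficients $C_i$ given explicitly in the Note preceding Proposition \ref{fff}, the estimate for $y_{\pm}$ reduces to three ingredients: pointwise bounds on $f_1,f_2$ (and their derivatives), a uniform lower bound on the Wronskian $W_{f;p}$, and bounds on the four coefficients $C_i$. I work throughout in $S_+=\{p:\Re\sqrt{-ip}\ge 0\}$; the region $S_-$ is handled identically with the companion solutions of Proposition \ref{fff}, and since $S_+\cup S_-$ exhausts the plane this gives the claim for all large $p$. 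The appearance of $|\Re\sqrt{-ip}|$ rather than $\Re\sqrt{-ip}$ in the target bound is exactly what lets the two half-plane estimates merge into a single uniform one.

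For the pointwise bounds, note first that since $V$ is supported in $[-1,1]$, outside this interval $f_1,f_2$ solve $y''=-ipy$ and are therefore exact linear combinations of $e^{\pm\sqrt{-ip}x}$, while inside $[-1,1]$ the forms \eqref{eq:f1}--\eqref{eq:f2} apply with $g_1,g_1',g_2,g_2'$ bounded in $p$. In every case $|f_j(x)|\le C\,e^{|x|\,|\Re\sqrt{-ip}|}$ and $|f_j'(x)|\le C\,|\sqrt{-ip}|\,e^{|x|\,|\Re\sqrt{-ip}|}$ for large $p$, uniformly on $\RR$. For the Wronskian I use that $W_{f;p}=[f_1,f_2]$ is independent of $x$, since \eqref{eq:hom} has no first-order term, so it may be evaluated anywhere convenient; the leading WKB terms give $W_{f;p}=2\sqrt{-ip}\,(1+O(p^{-1/2}))$, whence $|W_{f;p}|\ge c\,|p|^{1/2}$ for $p$ large.

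With these in hand I estimate the coefficients. Consider for instance $C_1=\sqrt{-ip}\,e^{-\sqrt{-ip}}\left(f_2(1)-f_2'(1)\right)/W_{f;p}$: the potentially large factor $e^{-\sqrt{-ip}}$ is compensated by $f_2(1),f_2'(1)=O\!\left(|\sqrt{-ip}|\,e^{\Re\sqrt{-ip}}\right)$, and dividing by $|W_{f;p}|\ge c|p|^{1/2}$ leaves $|C_1|\le C\,|p|^{1/2}e^{2|\Re\sqrt{-ip}|}$. The identical bookkeeping applied to $C_2,C_3,C_4$, using the values of $f_j,f_j'$ at $\pm1$, yields $|C_i|\le C\,|p|^{1/2}e^{2|\Re\sqrt{-ip}|}$ for every $i$. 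Combining with $|f_j(x)|\le C\,e^{|x|\,|\Re\sqrt{-ip}|}$ gives $|y_{\pm}(x)|\le(|C_1|+|C_2|)\max_j|f_j(x)|=O\!\left(\sqrt{p}\,e^{(|x|+2)|\Re\sqrt{-ip}|}\right)$, comfortably within the stated $O\!\left(\sqrt{p}\,e^{2|\Re\sqrt{-ip}|(|x|+2)}\right)$; the generous factor $2$ and shift $+2$ leave ample room, reflecting that one need not even track the exact cancellation of the endpoint exponentials.

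The main obstacle is the third ingredient: controlling the $C_i$ so that the a priori dangerous factors $e^{\mp\sqrt{-ip}}$ occurring in their definitions are absorbed, uniformly in $p$, by the endpoint values of the WKB solutions together with the lower bound on $W_{f;p}$. This is precisely where the sharpness of Proposition \ref{fff}---boundedness of the derivatives $g_j'$, not merely of $g_j$---is needed, since $f_j'(\pm1)$ enters each coefficient. Once these cancellations are made explicit, extending the estimate from $[-1,1]$ to all of $\RR$ and patching $S_+$ with $S_-$ are routine.
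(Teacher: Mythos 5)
Your proof is correct and takes essentially the same route as the paper's: both rest on the WKB solutions of Proposition \ref{fff}, the matching-condition formulas for the coefficients, and the Wronskian asymptotics $W_{f;p}\sim 2\sqrt{-ip}$, with the regions outside $[-1,1]$ handled by matching to pure exponentials $e^{\pm\sqrt{-ip}x}$. The only differences are organizational: the paper bounds $y_+$ piecewise in three regions (obtaining the sharper $C_{1,2}=O(\sqrt{p})$ by using the exact cancellation of endpoint exponentials), whereas you bound $f_{1,2}$ globally on $\RR$ and settle for the cruder $|C_i|\le C\sqrt{|p|}\,e^{2|\Re\sqrt{-ip}|}$, which the generous exponent $2|\Re\sqrt{-ip}|(|x|+2)$ in the statement indeed absorbs.
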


%\begin{Lemma}\label{abc} For all $x\in \mathbb{R}$, $y_{1,2}(x)$
%are of the form
%$$y_{1}(x)=\left(\sqrt{p}a_{1}(x)+\frac{e^{-2-\sqrt{-ip}}}{\sqrt{p}}b_{1}(x)\right)e^{\sqrt{-ip}x}+\left(\sqrt{p}c_{1}(x)+\frac{e^{2-\sqrt{-ip}}}{\sqrt{p}}d_{1}(x)\right)e^{-\sqrt{-ip}x}$$
%$$y_{2}(x)=\left(\sqrt{p}a_{2}(x)+\frac{e^{2-\sqrt{-ip}}}{\sqrt{p}}b_{2}(x)\right)e^{\sqrt{-ip}x}+\left(\sqrt{p}c_{2}(x)+\frac{e^{-2-\sqrt{-ip}}}{\sqrt{p}}d_{2}(x)\right)e^{-\sqrt{-ip}x}$$

%where $a_{1,2}, b_{1,2}, c_{1,2}, d_{1,2}$ are bounded in $p$
%(large) and $x\in\mathbb{R}$.

%\end{Lemma}
\begin{proof}
We will prove the lemma for $y_+$ using matching conditions. The
proof for $y_-$ follows analogously.

The result is obviously true for $x>1$, where
$y_{+}(x)=e^{-\sqrt{-ip}x}$.

For $-1\leq x\leq 1$, we have
$$y_{+}(x)=C_{1}f_{1}(x)+C_{2}f_{2}(x)$$ where

\[
C_{1}=\sqrt{-ip}\dfrac{e^{-\sqrt{-ip}}}{W_{f;p}}\left(f_{2}(1)-f_{2}'(1)\right)\]
\[
C_{2}=-\sqrt{-ip}\dfrac{e^{-\sqrt{-ip}}}{W_{f;p}}\left(f_{1}(1)-f_{1}'(1)\right)\]

and $W_{f;p}=f_{1}(x)f_{2}'(x)-f_{1}'(x)f_{2}(x)$.

It is easy to see, using Proposition \ref{fff}, that
$C_{1,2}=O(\sqrt{p})$. The bounds follow from (\ref{eq:f1}) and
(\ref{eq:f2}).

For $x<-1$ we have
$$y_+(x)=C_5e^{\sqrt{-ip}x}+C_6e^{-\sqrt{-ip}x}$$
where
$$C_5=\frac{1}{2\sqrt{p}}e^{\sqrt{-ip}}(\sqrt{p}y_+(-1)+\sqrt{i}y_+'(-1))$$
$$C_6=\frac{1}{2\sqrt{p}}e^{-\sqrt{-ip}}(\sqrt{p}y_+(-1)-\sqrt{i}y_+'(-1))$$

The result is shown by estimating $y_+(-1)$ and $y_+'(-1)$ with
Proposition \ref{fff}.
\end{proof}

\begin{Lemma}
$p^{-1}W_{p}\hat{\psi}(x,p)=O\(e^{2|\Re{\sqrt{-ip}}|(|x|+M+2)}\)$
 for large
$p\in\mathbb{C}$.

%Thus,
%$\hat{\psi}(x,p)=O\(e^{-2-\sqrt{-ip}(|x|+M+1)}\)+O\(e^{2-\sqrt{-ip}(|x|+M+1)}\)+O(1)$
%for large $p\in\mathbb{C}$ away from the poles.
\end{Lemma}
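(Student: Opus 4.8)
The plan is to read the bound directly off the integral representation (\ref{eq:pso}) using the pointwise estimates of Lemma \ref{y12}; the substantive analysis is already contained in that lemma, so this statement is essentially a corollary and the only work is careful bookkeeping of the exponential constants. First I would solve (\ref{eq:pso}) for $W_p\hat\psi$, writing it as a sum of two structurally identical terms, $i\,y_-(x)\int_M^x y_+(s)\psi_0(s)\,ds$ and $-i\,y_+(x)\int_{-M}^x y_-(s)\psi_0(s)\,ds$, each a boundary factor $y_\pm(x)$ multiplied by an integral of the complementary solution against $\psi_0$.

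For the boundary factor I would invoke Lemma \ref{y12} directly, $y_\pm(x)=O\big(\sqrt p\,e^{2|\Re\sqrt{-ip}|(|x|+2)}\big)$. For the integral, the decisive observation is that $\psi_0$ is supported in $[-M,M]$, so the integration variable satisfies $|s|\le M$; hence Lemma \ref{y12} gives $y_\mp(s)=O\big(\sqrt p\,e^{2|\Re\sqrt{-ip}|(M+2)}\big)$ uniformly over the support, and since $\psi_0$ is $C^2$ with compact support it lies in $L^1$, so the integral inherits the same bound $O\big(\sqrt p\,e^{2|\Re\sqrt{-ip}|(M+2)}\big)$. Multiplying, each term is $O\big(p\,e^{2|\Re\sqrt{-ip}|(|x|+M+c)}\big)$ for some constant $c$, and the two factors of $\sqrt p$ combine to the factor $p$ that the prefactor $p^{-1}$ in the statement removes; the resulting estimate is uniform for large $p\in\CC$, covering both $S_+$ and $S_-$ as required for the contour deformation in \S\ref{ilt1}.

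The only subtlety, and the reason this is worth stating separately rather than folding into Lemma \ref{y12}, is tracking the additive constant in the exponent. The crude bound of Lemma \ref{y12} carries an overhead ``$+2$'' that measures the cost of propagating a solution across the matching region $[-1,1]$ and out to infinity; inserting it mechanically into both factors of the product would overshoot the claimed exponent. To obtain exactly $(|x|+M+2)$ I would retrace the matching for the specific product appearing in (\ref{eq:pso}), noting that this boundary overhead is incurred once rather than twice and that, since the estimates of Lemma \ref{y12} are already generous for $|s|\le M$, the combined exponential factor tightens to $e^{2|\Re\sqrt{-ip}|(|x|+M+2)}$. I do not expect any genuine obstacle here: no new estimate is needed beyond Lemma \ref{y12}, only a careful accounting of where each power of $\sqrt p$ and each exponential constant originates.
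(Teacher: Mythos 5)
Your proposal is correct and takes exactly the paper's route: the paper's entire proof of this lemma is the one line ``a straightforward estimate from (\ref{eq:pso}) and Lemma \ref{y12},'' i.e.\ precisely your decomposition into $y_\mp(x)$ times an integral of $y_\pm(s)\psi_0(s)$ over $|s|\le M$. Your worry about the additive constant (naive multiplication giving $|x|+M+4$ instead of $|x|+M+2$) is more care than the paper takes, and is harmless in any case: the factor $2$ in the exponent of Lemma \ref{y12} is itself slack (the matching argument really gives coefficient $1$ in $|\Re\sqrt{-ip}|$), so the doubled additive constant absorbs the overshoot, and nothing downstream depends on the precise constant.
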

\begin{proof}
A straightforward estimate from (\ref{eq:pso}) and Lemma \ref{y12}.
\end{proof}

%\begin{Lemma} If $p_k\sim -\frac{\pi^{2}i}{4}k^2-c\pi k\log k$ is a zero of
%$W_p$, then
%\end{Lemma}

%\begin{proof}
%Let $p_k+\delta$ be another zero of $W_p$. Note that
%$$e^{-4\sqrt{-ip_k}}h_{1}(p_k)+h_{2}(p_k)=0$$ and
%$$e^{-4\sqrt{-i(p_k+\delta)}}h_{1}(p_k+\delta)+h_{2}(p_k+\delta)=0$$
%imply
%$$\exp\left({\frac{4i\delta}{\sqrt{-i(p_k+\delta)}+\sqrt{-ip_k}}}\right)=\frac{h_{1}(p_k)h_{2}(p_k+\delta)}{h_{1}(p_k+\delta)h_{2}(p_k)}$$

%Assuming $\delta=o(p_k)$, we get using the asymptotic expressions of
%$h_{1,2}$

% \begin{equation}
%h_{1}(p)=\left(\sqrt{p}\frac{\sqrt{i}}{2}V(1)-g_{1}'(1)\right)\left(\sqrt{p}\frac{\sqrt{i}}{2}V(-1)+g_{2}'(1)\right)\label{eq:h1}\end{equation}
% \begin{equation}
%  h_{2}(p)=4ip^{3}+2i\sqrt{i}\left(V(1)-V(-1)\right)p^{5/2}+O(p^{2})\label{eq:h2}\end{equation}

%\end{proof}

\begin{Lemma} (i) There exists a set of curves $p_k(s)$ parameterized by $s\in[0,1]$
with $p(0)$ on the negative imaginary axis, $p(1)$ on the negative
real axis, $|p_k(s)|\geq k$, so that $1/W_p$ is bounded uniformly in
$k$ by a polynomial in $p$ along these curves. Here $k\in\NN$ can be
chosen to be arbitrarily large.

(ii)Moreover, $1/W_p$ is bounded by a polynomial in the region $\{p:
\arg p\in[-\pi,\alpha]$  and $|p|>P_{\alpha}\}$ where
$-\pi<\alpha<-\pi/2$ and $P_{\alpha}>0$ depends only on $\alpha$.

\end{Lemma}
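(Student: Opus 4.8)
The plan is to reduce everything to a lower bound on the bracketed factor of $W_p$. Writing $w=\sqrt{-ip}$ and $u=\Re w$, formula (\ref{eq:wp1}) gives
\[
\frac{1}{W_p}=\frac{p^{2}h_3(p)}{e^{-4w}h_1(p)+h_2(p)},
\]
and since by (\ref{eq:h33}) one has $|p^{2}h_3(p)|\asymp|p|^{5/2}$, bounded above and below by powers of $|p|$, it suffices to bound $G(p):=e^{-4w}h_1(p)+h_2(p)$ below by a power of $|p|$. Using (\ref{eq:h1})--(\ref{eq:h2}) in the generic case $h_1\not\equiv0$, the two terms have sizes $|e^{-4w}h_1|\asymp e^{-4u}|p|$ and $|h_2|\asymp|p|^{3}$ (the degenerate cases $V(\pm1)=0$ only change the power of $|p|$ in $|h_1|$ and are handled identically). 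Hence they balance exactly along the resonance curve $u=u_\ast(p):=-\tfrac14\log|h_2/h_1|\sim-\tfrac12\log|p|$, which is where the zeros of $W_p$ sit, while off this curve one term dominates.

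I would first dispose of part (ii). In the sector $\arg p\in[-\pi,\alpha]$ with $\alpha<-\pi/2$ an explicit computation gives $u=\Re\sqrt{-ip}\le-c_\alpha\sqrt{|p|}$ for some $c_\alpha>0$. Since $u_\ast(p)$ decays only logarithmically, for $|p|>P_\alpha$ we have $u<u_\ast(p)-1$, so the exponential term dominates: $|G|\ge(1-e^{-4})\,e^{-4u}|h_1|\ge\tfrac12 e^{4c_\alpha\sqrt{|p|}}|h_1|$. Therefore $|1/W_p|\le C|p|^{5/2}e^{-4c_\alpha\sqrt{|p|}}$, which is not merely polynomially bounded but exponentially small, proving (ii).

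For part (i) the curves I would use are circular arcs $p_k(s)=R_k\,e^{i(-\pi/2-\tfrac{\pi}{2}s)}$, $s\in[0,1]$, running from $-iR_k$ on the negative imaginary axis to $-R_k$ on the negative real axis, with $R_k\ge k$; these automatically satisfy $|p_k(s)|=R_k\ge k$. The radii $R_k$ must be chosen so that the arc threads the gaps between the (simple, by Proposition \ref{P8}) zeros. By Proposition \ref{asy} the large zeros accumulate only near $\arg p=-\pi/2$, and in the variable $w$ they occupy the $\tfrac{\pi}{2}$-spaced heights $\Im w=v_n$ fixed by $e^{-4w}=-h_2/h_1$; the midpoint of a gap is precisely where the oscillatory factor $e^{-4w}$ is aligned with, rather than opposite to, $h_2/h_1$. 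I would thus pick $R_k$ with $\sqrt{R_k}$ equal to such a gap-midpoint value, possible for arbitrarily large $k$ since the midpoints are $\tfrac{\pi}{2}$-spaced and the drift of $\arg(-h_2/h_1)$ over one period is negligible. Along the arc one then checks $|G|\gtrsim|p|^{3}$ uniformly: near the start $u\approx0$ and $h_2$ dominates; past the single crossing of the resonance curve the term $e^{-4w}h_1$ dominates; and at the crossing the phase alignment gives $|e^{-4w}+h_2/h_1|=e^{-4u}+|h_2/h_1|\ge|h_2/h_1|$, so the two comparable terms add instead of cancelling. This yields $|1/W_p|\lesssim|p|^{-1/2}$ on the arcs, uniformly in $k$, which is (i).

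The main obstacle is the transition strip $|u-u_\ast(p)|\le O(1)$ near the negative imaginary axis, where the two terms of $G$ are of the same order $|p|^{3}$ and the zeros of $W_p$ live. Everything hinges on crossing this strip, uniformly in $k$, at a point a fixed $w$-distance $\ge\tfrac{\pi}{4}-o(1)$ from every zero; this is exactly the gap-midpoint and phase-alignment bookkeeping above, and it relies quantitatively on the $\tfrac{\pi}{2}$ spacing from Proposition \ref{asy} together with the simplicity of the zeros from Proposition \ref{P8} (so that no zero is ``wide''). The $o(1)$ corrections, coming from $v=\sqrt{R_k-u^2}$ differing from $\sqrt{R_k}$ and from the slow $p$-dependence of $\arg(-h_2/h_1)$ and of $u_\ast$, must be verified to be smaller than the fixed gap half-width; this holds because $u=O(\log|p|)$ throughout the strip while the spacing is of order one in $w$.
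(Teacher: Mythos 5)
Your proposal is correct and follows essentially the same route as the paper's proof: the same factorization $W_p=\frac{h_2}{p^{2}h_3}\bigl(e^{-4\sqrt{-ip}}h_1/h_2+1\bigr)$, the same phase-alignment choice of curves threading midway between the zeros clustered near the negative imaginary axis, and the same exponential-dominance argument away from that axis. The only differences are in execution: the paper takes curves along which $\Im\sqrt{-ip}$ is exactly constant (so the phase of $e^{-4\sqrt{-ip}}$ is frozen and your $o(1)$-drift estimate for circular arcs is not needed), and it deduces part (ii) by letting the curve parameter $k$ range over all large real values, whereas your direct sector estimate $\Re\sqrt{-ip}\le-c_\alpha\sqrt{|p|}$ gives the same conclusion with the stronger, exponentially small, bound on $1/W_p$.
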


\begin{proof}
We rewrite (\ref{eq:wp1}) as
$$W_{p}=\frac{h_2(p)}{p^{2}h_3(p)}\left(e^{-4\sqrt{-ip}}\frac{h_{1}(p)}{h_{2}(p)}+1\right)$$

We only need to show that
$$\left|e^{-4\sqrt{-ip}}\frac{h_{1}(p)}{h_{2}(p)}+1\right|\geq1$$ on a
chosen set of curves.

Recalling the asymptotic expressions for $h_{1,2}$, we have
$\frac{h_{1}(k^2\pi^{2}i/4)}{h_{2}(k^2\pi^{2}i/4)}\sim c_0k^n$ where
$c_0, n$ are constants.

Let $p_k(s)=-i(k\pi/2-\text{$\frac{1}{4}$}\arg c_0)^{2}(1-is)^2$ and we have
$-4\sqrt{-ip_k}=(2k\pi i-i\arg c_0)(1-is)=(2k\pi -\arg c_0)s+2k\pi
i-i\arg c_0$.

Thus for $s\in[0,1/\sqrt{k}]$ we have
$$e^{-4\sqrt{-ip}}\frac{h_{1}(p)}{h_{2}(p)}\sim e^{2k\pi
s}|c_0|k^n$$ for all $k\in\mathbb{N}$, while for $s\in[1/\sqrt{k},1]$ we have
$$\left|e^{-4\sqrt{-ip}}\frac{h_{1}(p)}{h_{2}(p)}\right|\geqslant e^{2\sqrt{k}\pi}$$
for all $k>0$.

The second part of the lemma follows from the above inequality since
$k$ may be taken to any large real number. Note also that
$\Re(-\sqrt{-ip})=(k\pi/2 -\text{$\frac{1}{4}$}\arg c_0)s$.

It is easy to see that $p_k(s)$ also satisfy the other conditions
specified in the lemma.

\end{proof}

\begin{figure}[ht!] \label{fig:3}
\includegraphics[scale=0.5]{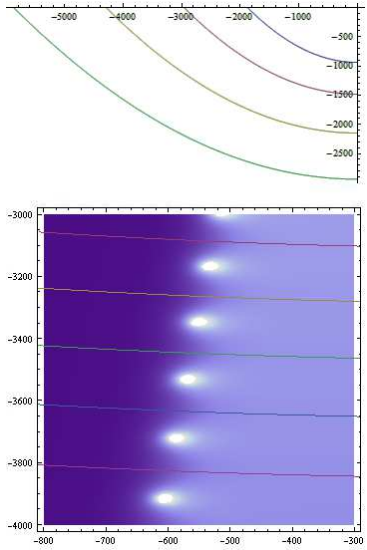}
\caption{Curves $p_k(s)$ passing between poles (plotted with square
barrier potential)}
\end{figure}

Collecting the above results we obtain
\begin{Lemma}
$\hat{\psi}(x,p)=O\(e^{2|\Re\sqrt{-ip}|(|x|+M+2)}\)$ for large p, in
any given sector $\arg p\in[-\pi,\alpha],|p|>P_{\alpha}$ where
$-\pi<\alpha<-\pi/2$ as well as along curves $p_k(s)$ as shown in
the previous lemma.
\end{Lemma}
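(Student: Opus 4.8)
The plan is to read this statement as an essentially immediate consequence of the three preceding lemmas, so that no new estimate is required: the work consists in multiplying together bounds already in hand and checking that the polynomially growing factors are subordinate to the exponential. I would begin from the trivial factorization
$$\hat\psi(x,p)=\frac{p}{W_p}\,\Big(p^{-1}W_p\,\hat\psi(x,p)\Big),$$
in which the bracketed factor is controlled by the bound $p^{-1}W_p\hat\psi=O\big(e^{2|\Re\sqrt{-ip}|(|x|+M+2)}\big)$ established above for large $p$. This is where the full exponent --- in particular its dependence on the support radius $M$ of $\psi_0$ and on the evaluation point $x$ --- is produced; note that, through (\ref{eq:pso}), it rests on the $y_\pm$ bounds of Lemma \ref{y12}.

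It then remains only to bound the prefactor $p/W_p$, and this is exactly the content of the curve lemma. On the sector $\arg p\in[-\pi,\alpha]$, $|p|>P_\alpha$ with $-\pi<\alpha<-\pi/2$, its part (ii) gives $1/W_p=O(\mathrm{poly}(p))$; along the curves $p_k(s)$, part (i) gives the same polynomial bound, uniformly in $k$. Hence $p/W_p=O(\mathrm{poly}(p))$ on both the sector and the curves, and multiplying the two factors yields
$$\hat\psi(x,p)=O\big(\mathrm{poly}(p)\,e^{2|\Re\sqrt{-ip}|(|x|+M+2)}\big)$$
on precisely the domain asserted.

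The one point that needs care --- and the only genuinely delicate step remaining --- is the passage from this product bound to the clean $O\big(e^{2|\Re\sqrt{-ip}|(|x|+M+2)}\big)$. Away from the negative imaginary axis one has $|\Re\sqrt{-ip}|\ge c|p|^{1/2}$, so $\mathrm{poly}(p)=e^{O(\log|p|)}$ is dwarfed by the exponential and is absorbed at the cost of an arbitrarily small enlargement of the coefficient; since $|x|+M+2$ is in no way sharp this is harmless. The problematic region is exactly where $\Re\sqrt{-ip}\to 0$, i.e. near $\arg p=-\pi/2$ and near the endpoints $p_k(0)$ of the curves, where the exponential weight degenerates to $1$ and cannot dominate a polynomial. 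I expect this to be the main obstacle, and it is handled not by a new estimate but by the geometry already built into the curve lemma: the curves are routed between consecutive zeros of $W_p$ and satisfy $|p_k(s)|\ge k$, while the sector is taken with $\alpha<-\pi/2$ so as to stay bounded away from the imaginary axis. Thus on the actual contour the polynomial factor is either absorbed into the exponential or retained but harmless, since only the exponential \emph{order} of $\hat\psi$, not sharp polynomial prefactors, is used in the subsequent contour-bending and Borel-summation argument. Collecting these observations gives the stated bound.
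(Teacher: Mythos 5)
Your proposal is correct and is essentially the paper's own argument: the paper proves this lemma only by the phrase ``Collecting the above results we obtain,'' which amounts precisely to your factorization $\hat\psi=(p/W_p)\cdot\bigl(p^{-1}W_p\hat\psi\bigr)$, bounding the second factor by the preceding lemma and the first by parts (i) and (ii) of the curve lemma. Your extra discussion of the polynomial prefactor --- absorbed into the exponential in the sector where $|\Re\sqrt{-ip}|\ge c_\alpha|p|^{1/2}$, and harmless but not literally absorbable near the curve endpoints on the negative imaginary axis where the exponent degenerates --- is in fact more careful than the paper, which silently drops that factor; as you note, only the exponential order matters for the subsequent contour deformation against $e^{pt}$.
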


\subsection{The inverse Laplace transform} \label{ilt}
To obtain the transseries \hyphenation{trans-series} of $\psi$ from
our  $\hat{\psi}$, we take the inverse Laplace transform,
$\frac{1}{2\pi
i}\int_{a_0-i\infty}^{a_0+i\infty}e^{pt}\hat\psi(p)dp$ and push the
contour into the left half plane. We will justify this procedure in
this section.

First we rewrite (\ref{eq:lap}) as an integral equation
$$y=\mathcal{T}(Vy+i\psi_0)$$
where
\begin{multline}\label{ttt}
  \mathcal{T}(f)(x):=\frac{1}{2\sqrt{-ip}}e^{\sqrt{-ip}x}\int_{\infty}^x
  e^{-\sqrt{-ip}s}f(s)ds\\
  -\frac{1}{2\sqrt{-ip}}e^{-\sqrt{-ip}x}\int_{-\infty}^x
  e^{\sqrt{-ip}s}f(s)ds
\end{multline}

We further let $y(x)=\mathcal{T}(i\psi_0)(x)+p^{-3/2}h(x)$ and
rewrite the integral equation as
\begin{equation}\label{eq:hhh}
  h=p^{3/2}\mathcal{T}(V\cdot\mathcal{T}(i\psi_0))+\mathcal{T}(Vh)
\end{equation}
We start with a simple observation.
\begin{Remark}
$e^{-\sqrt{-ip}}$ is bounded in the region
$\Omega:=\{p\in\mathbb{C}:-\pi/2\leq\arg p\leq
\pi\}\bigcup\{p\in\mathbb{C}:-\Im p>{\rm const}\,\,(\Re p)^2$. (In
the following, we will choose ${\rm const}=1/9$)
\end{Remark}
 Note that
$$\Re (-\sqrt{-ip})=-\frac{1}{2\Im p}\sqrt{\Re p+|p|}(\Im p-\Re
p+|p|)$$
We denote $\mu=\sup_{p\in\Omega}|e^{-\sqrt{-ip}}|$.

\begin{Lemma}\label{fg}
Assume $f$ and $g$ are locally bounded functions and $fg$ is
compactly supported, with supp$\,(fg)\in[-b,b]$ where $b>0$. Let
$a\geq b$ be an arbitrary number, $\Omega'=\Omega\bigcup
\{p\in\mathbb{C}:|p|>p_v>1\}$. We then have
$$|\mathcal{T}(fg)|\leqslant\frac{2b\mu^b \sup_{x\in[-b,b]}
|g(x)|}{\sqrt{p_v}}||f||$$ where
$||f||:=\sup_{p\in\Omega',x\in[-a,a]}|f(x,p)|$

\end{Lemma}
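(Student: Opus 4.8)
The plan is to use the fact that $\mathcal{T}$ is just the free resolvent, so the two integrals in (\ref{ttt}) assemble into a convolution with an exponentially controlled kernel, after which the bound is pure bookkeeping. Writing $\kappa=\sqrt{-ip}$ and reversing the orientation of the first integral in (\ref{ttt}), the prefactor $e^{\kappa x}e^{-\kappa s}$ becomes $e^{-\kappa(s-x)}$ on the range $s\ge x$, while $e^{-\kappa x}e^{\kappa s}$ becomes $e^{-\kappa(x-s)}$ on $s\le x$; both equal $e^{-\kappa|x-s|}$. Hence $\mathcal{T}(fg)(x)=-\frac{1}{2\sqrt{-ip}}\int_{\RR}e^{-\sqrt{-ip}\,|x-s|}f(s)g(s)\,ds$, and since $\mathrm{supp}(fg)\subseteq[-b,b]$ the integral runs only over $[-b,b]$. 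This is the representation I would estimate.

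Next I would bound the three ingredients separately. The prefactor is handled by $|p|>p_v$, which gives $|\sqrt{-ip}|=|p|^{1/2}>\sqrt{p_v}$, so $\frac{1}{2|\sqrt{-ip}|}<\frac{1}{2\sqrt{p_v}}$. For the kernel I would use the definition of $\mu$ directly: from $|e^{-\sqrt{-ip}}|\le\mu$ on $\Omega$ and $t\ge0$ one gets $|e^{-\sqrt{-ip}\,t}|=|e^{-\sqrt{-ip}}|^{t}\le\mu^{t}$, and on the support $|x-s|$ stays within the width of $[-b,b]$, so $|e^{-\sqrt{-ip}\,|x-s|}|\le\mu^{b}$. (Since $\Re\sqrt{-ip}\ge0$ on the principal sheet the kernel is in fact already $\le1$, and $\mu^{b}\ge1$ is only a safe over-estimate that survives passage onto the second sheet.) Finally $|f(s)|\le\|f\|$ because $[-b,b]\subseteq[-a,a]$, and $|g(s)|\le\sup_{[-b,b]}|g|$.

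Assembling these, I would estimate each of the two terms of (\ref{ttt}) by integrating the bound $\frac{1}{2\sqrt{p_v}}\mu^{b}(\sup_{[-b,b]}|g|)\|f\|$ over an interval of length at most $2b$; this gives $\frac{b\mu^{b}\sup_{[-b,b]}|g|}{\sqrt{p_v}}\|f\|$ per term, and summing the two terms yields exactly $\frac{2b\mu^{b}\sup_{[-b,b]}|g|}{\sqrt{p_v}}\|f\|$, uniformly in the admissible $x$. The only step requiring care is the kernel bound: it is precisely here that the geometry of $\Omega$ (equivalently, the lower bound on $\Re\sqrt{-ip}$ encoded in $\mu$) is used, so that $e^{-\sqrt{-ip}\,|x-s|}$ cannot grow across the support. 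Everything else is the triangle inequality together with the compact support of $fg$.
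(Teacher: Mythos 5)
Your route is the same as the paper's: rewrite $\mathcal{T}(fg)$ via (\ref{ttt}) as an integral of the kernel $e^{-\sqrt{-ip}\,|x-s|}$ over the support $[-b,b]$, bound the prefactor by $(2\sqrt{p_v})^{-1}$, the kernel by $\mu^b$, and multiply by the support length $2b$ (the paper does the substitution $u=s-x$ instead of writing a convolution kernel, but it is the same computation). The kernel representation, the prefactor step, and the length count are all fine (like the paper, you tacitly read $\Omega'$ as $\Omega\cap\{|p|>p_v\}$; with the literal union even the prefactor bound would fail). The genuine gap is the kernel bound, exactly the step your parenthetical tries to wave away. Only $s$ is confined to $[-b,b]$; the point $x$ ranges over $[-a,a]$ with $a\ge b$ \emph{arbitrary}, so $|x-s|$ can be as large as $a+b$, and the claim that ``$|x-s|$ stays within the width of $[-b,b]$'' is false. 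What your argument actually yields is $|e^{-\sqrt{-ip}\,|x-s|}|\le\mu^{|x-s|}\le\mu^{a+b}$, not $\mu^{b}$. This is not cosmetic, because the only reason $\mu>1$ at all is the second-sheet part of $\Omega$: the set $-\Im p>(\Re p)^2/9$ with $\Re p<0$, reached by continuation downward from the fourth quadrant, on which $\Re\sqrt{-ip}<0$ (by the paper's formula for $\Re(-\sqrt{-ip})$ it tends to $-3/2$ along the parabola). For such $p$ and $x>b$ one has $\mathcal{T}(fg)(x)=-\frac{e^{-\sqrt{-ip}x}}{2\sqrt{-ip}}\int_{-b}^{b}e^{\sqrt{-ip}s}(fg)(s)\,ds$, which genuinely grows like $e^{|x|\,|\Re\sqrt{-ip}|}$; no bound independent of $a$ can hold on that part of $\Omega$. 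So ``$\mu^{b}$ survives passage onto the second sheet'' is precisely the assertion that needs proof, and it is false there for large $|x|$.

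In fairness, the paper's own proof commits the same oversight: after substituting $u=s-x$ it integrates over $u\in[0,b]$ and $u\in[-b,0]$, whereas the ranges on which the integrand is supported are $[0,b-x]$ and $[-b-x,0]$, which drift away from the origin as $|x|$ grows and force the factor $\mu^{b+|x|}$. So you have faithfully reproduced the paper's argument together with its defect. A correct version either restricts $x$ to $[-b,b]$ (or any fixed compact set, with constant $\mu^{a+b}$), or retains the $x$-dependent factor $\sup_{0\le s\le b+|x|}|e^{-\sqrt{-ip}\,s}|$, exactly as the paper itself does in Lemma \ref{p00}; on the part of $\Omega$ where $\Re\sqrt{-ip}\ge0$ the kernel is $\le 1$ and the stated bound holds with $\mu^{b}$ replaced by $1$. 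Note that the downstream contraction argument then requires $p_v$ (or the contraction constant) to depend on the size $x_1$ of the $x$-interval whenever the second-sheet region is included.
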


\begin{proof}
By (\ref{ttt}) we have
\begin{multline}
  |\mathcal{T}(fg)(x,p)|\leqslant \frac{1}{2|\sqrt{p}|}\int_{0}^b
  |e^{-\sqrt{-ip}u}||g(u+x)||f(u+x)|du\\
  +\frac{1}{2|\sqrt{p}|}\int_{-b}^0
  |e^{\sqrt{-ip}u}||g(u+x)||f(u+x)|du\\
  \leqslant \frac{\mu^b}{\sqrt{p_v}}\int_{-b}^{b}|g(s)||f(s)|ds \leqslant
  \frac{2b\mu^b \sup_{x\in[-b,b]}
|g(x)|}{\sqrt{p_v}}||f||
\end{multline}
\end{proof}
\begin{Lemma}\label{p00}
For compactly supported and twice differentiable $\psi_0$, we have
$$\mathcal{T}(i\psi_0)(x)=\frac{1}{p}\psi_0(x)+\frac{1}{p^{3/2}}G_1(x,p)$$
where $|G_1(x,p)|\leq 2M\sup |\psi_0''|
\sup_{s\in[0,M+|x|]}|e^{-\sqrt{-ip}s}|$, assuming
supp\,\,$\psi_0\in(-M,M)$.
\end{Lemma}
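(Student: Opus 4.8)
The plan is to evaluate $\mathcal{T}(i\psi_0)$ directly from its defining formula (\ref{ttt}) and to extract the leading behaviour by integrating by parts twice. Writing $\lambda=\sqrt{-ip}$ for brevity, I would first note that the two one-sided integrals in (\ref{ttt}) combine into the single convolution
\[
\mathcal{T}(i\psi_0)(x)=-\frac{i}{2\lambda}\int_{-\infty}^{\infty}e^{-\lambda|s-x|}\psi_0(s)\,ds,
\]
i.e.\ the Green's function of $\partial_x^2-\lambda^2$ applied to $i\psi_0$. Unlike Lemma \ref{fg}, which only bounds $\mathcal{T}$, here the exact leading term is needed, so the crude estimate will not suffice and integration by parts is the right tool.

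Splitting the integral at $s=x$ into $\int_{-\infty}^{x}e^{-\lambda(x-s)}\psi_0\,ds$ and $\int_{x}^{\infty}e^{-\lambda(s-x)}\psi_0\,ds$, I would integrate by parts once in each piece. The boundary contributions at $\pm\infty$ vanish because $\psi_0$ is compactly supported, while the boundary terms at $s=x$ each produce $\lambda^{-1}\psi_0(x)$; after multiplying by $-\tfrac{i}{2\lambda}$ these add up to $-\tfrac{i}{\lambda^{2}}\psi_0(x)$, which equals $p^{-1}\psi_0(x)$ since $\lambda^{2}=-ip$. This is the asserted leading term. A second integration by parts in each piece generates a term proportional to $\psi_0'(x)$; crucially these two first-derivative boundary terms carry opposite signs and cancel, which is why no $p^{-3/2}$ contribution survives from $\psi_0'$ and the remainder is genuinely controlled by $\psi_0''$. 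What is left is
\[
-\frac{i}{2\lambda^{3}}\int_{-\infty}^{\infty}e^{-\lambda|s-x|}\psi_0''(s)\,ds,
\]
and using $\lambda^{3}=(-i)^{3/2}p^{3/2}$ (fixed branch) I would identify the prefactor as a modulus-$\tfrac12$ constant times $p^{-3/2}$, so that $G_1(x,p)=\tfrac12(-i)^{-3/2}\int e^{-\lambda|s-x|}\psi_0''\,ds$ up to the stated normalisation.

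Finally I would bound $G_1$. Since $\mathrm{supp}\,\psi_0\subset(-M,M)$, the integral runs over an interval of length at most $2M$, on which $|s-x|$ ranges inside $[0,M+|x|]$; hence $|e^{-\lambda|s-x|}|=e^{-(\Re\lambda)|s-x|}\le\sup_{s\in[0,M+|x|]}|e^{-\sqrt{-ip}s}|$ for every relevant $s$, regardless of the sign of $\Re\lambda$. Combining this with $\sup|\psi_0''|$ and the length $2M$ yields the asserted estimate (in fact with constant $M$, the stated $2M$ being a safe overestimate). The computation is essentially routine; the only points needing attention are the bookkeeping of the branch of $\sqrt{-ip}$ in the prefactor and the elementary observation that $|s-x|\le M+|x|$, which lets the supremum be taken over the clean interval $[0,M+|x|]$ rather than over the shifted support. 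I expect this last step to be the only mild obstacle, since the cancellation of the $\psi_0'$ terms and the vanishing of the boundary terms at infinity both follow immediately from compact support.
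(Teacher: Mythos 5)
Your proposal is correct and follows essentially the same route as the paper's proof: two integrations by parts applied to (\ref{ttt}), with the boundary terms at infinity vanishing by compact support, the $\psi_0(x)$ boundary terms combining to give $p^{-1}\psi_0(x)$, the $\psi_0'(x)$ boundary terms cancelling, and the remaining $\psi_0''$ integral bounded using $|s-x|\le M+|x|$. The only cosmetic difference is that you merge the two one-sided integrals into the single kernel $e^{-\sqrt{-ip}|s-x|}$ before integrating by parts, whereas the paper keeps them as two separate terms; both arguments yield the stated estimate (in fact with constant $M$ rather than $2M$).
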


\begin{proof}
This is shown by repeated integration by parts to (\ref{ttt}). Note
that
\begin{multline}
\mathcal{T}(i\psi_0)(x)=\frac{1}{p}\psi_0(x)-\frac{1}{2p}e^{\sqrt{-ip}x}\int_{M}^x
  e^{-\sqrt{-ip}s}\psi_0'(s)ds\\-\frac{1}{2p}e^{-\sqrt{-ip}x}\int_{-M}^x
  e^{\sqrt{-ip}s}\psi_0'(s)ds\\=\frac{1}{p}\psi_0(x)+\frac{1}{2(ip)^{3/2}}e^{\sqrt{-ip}x}\int_{M}^x
  e^{-\sqrt{-ip}s}\psi_0''(s)ds\\-\frac{1}{2(ip)^{3/2}}e^{-\sqrt{-ip}x}\int_{-M}^x
  e^{\sqrt{-ip}s}\psi_0''(s)ds\\
  \end{multline}
  \end{proof}
\z  With
the above lemmas, we have
\begin{Proposition}
Let $\Omega_0=\Omega\bigcup \{p\in\mathbb{C}:|p|>p_v\}$ where
$p_v=9(\sup_{x\in[-1,1]} V(x)+\mu+1)^2$. Let $x_1>0$ be an arbitrary
real number. The integral equation (\ref{eq:hhh}) is contractive in
the space of functions analytic in $p\in\Omega_0$ equipped with the
sup norm $||f||=\sup_{p\in\Omega_0,x\in[-x_1,x_1]}|f(x,p)|$, within
a ball of size $$2\mu \sup_{x\in[-1,1]} |V(x)|+2M\sup |\psi_0''|
\sup_{s\in[0,M+|x_1|]}|e^{-\sqrt{-ip}s}|$$ In particular, the
solution $h$ is bounded as $x_1\to\infty$ if $\Re p>0$. See Fig. \ref{fig:2}.

\end{Proposition}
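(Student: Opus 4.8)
The plan is to read the right-hand side of (\ref{eq:hhh}) as a fixed-point map $\mathcal{N}(h):=h_0+\mathcal{T}(Vh)$, where $h_0:=p^{3/2}\mathcal{T}(V\cdot\mathcal{T}(i\psi_0))$ is a fixed inhomogeneous term and $h\mapsto\mathcal{T}(Vh)$ is linear, and to apply the contraction mapping theorem on the space of functions $f(x,p)$ analytic in $p\in\Omega_0$, continuous in $x\in[-x_1,x_1]$, with the stated sup norm; this space is complete since uniform limits of analytic functions are analytic. First I would record that $\mathcal{T}$ preserves it: its kernel $(2\sqrt{-ip})^{-1}e^{\pm\sqrt{-ip}(x-s)}$ is analytic in $p$ on $\Omega_0$ (where $\sqrt{-ip}\neq 0$), and differentiation in $p$ under the integral sign is legitimate because $V$ and the $\psi_0$-data are compactly supported and $e^{-\sqrt{-ip}}$ is bounded on $\Omega_0$. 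Hence $\mathcal{N}$ preserves analyticity and the fixed point will automatically lie in the space.

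The contraction estimate is immediate from Lemma \ref{fg}. For any $h_1,h_2$ one has $\mathcal{N}(h_1)-\mathcal{N}(h_2)=\mathcal{T}(V(h_1-h_2))$, and since $\mathrm{supp}\,V\subset[-1,1]$ the product $V(h_1-h_2)$ is supported in $[-1,1]$; Lemma \ref{fg} with $b=1$, $g=V$, $f=h_1-h_2$ then yields the Lipschitz constant $q:=2\mu\sup_{[-1,1]}|V|/\sqrt{p_v}$. The value $p_v=9(\sup V+\mu+1)^2$, i.e. $\sqrt{p_v}=3(\sup V+\mu+1)$, is chosen precisely so that $q<1$ (indeed bounded well away from $1$); this gives the contraction and simultaneously controls the linear part of the self-mapping bound.

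The substantive step is bounding $h_0$, since its prefactor $p^{3/2}$ is a priori dangerous. Here Lemma \ref{p00} is essential: writing $\mathcal{T}(i\psi_0)=p^{-1}\psi_0+p^{-3/2}G_1$ with $|G_1|\le 2M\sup|\psi_0''|\sup_{s\in[0,M+|x|]}|e^{-\sqrt{-ip}s}|$, one gets
\[
h_0=p^{1/2}\,\mathcal{T}(V\psi_0)+\mathcal{T}(VG_1).
\]
For the first term I would use the genuine $|p|^{-1/2}$ decay carried by the $(\sqrt{-ip})^{-1}$ prefactor of $\mathcal{T}$, rather than the coarse uniform bound: Lemma \ref{fg} (with $b=1$) shows $\mathcal{T}(V\psi_0)=O(|p|^{-1/2})$, so the outer $p^{1/2}$ is absorbed exactly and this term contributes the $2\mu\sup|V|$ piece of the radius. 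For the second term, Lemma \ref{fg} with $g=V$, $f=G_1$ gives $\|\mathcal{T}(VG_1)\|\le q\,\|G_1\|$, contributing the $2M\sup|\psi_0''|\sup_s|e^{-\sqrt{-ip}s}|$ piece. Combining, $\|h_0\|+qR\le R$ for the stated radius $R$, so $\mathcal{N}$ maps the ball into itself; this is where the numerical choice of $p_v$ is used a second time.

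It remains to justify the final assertion. For $\Re p>0$ one checks $\Re(\sqrt{-ip})>0$, hence $\Re(-\sqrt{-ip})<0$ and $\sup_{s\in[0,M+|x_1|]}|e^{-\sqrt{-ip}s}|=1$, attained at $s=0$ independently of $x_1$; therefore $R$ stays bounded as $x_1\to\infty$ and so does the fixed point $h$. I expect the main obstacle to be exactly the bookkeeping of powers of $p$ in $h_0$: the apparent $p^{3/2}$ growth must cancel identically, which is only possible because Lemma \ref{p00} isolates the sharp leading term $\mathcal{T}(i\psi_0)\sim\psi_0/p$ and because each application of $\mathcal{T}$ supplies a true factor $|p|^{-1/2}$. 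Everything else is a routine combination of Lemmas \ref{fg} and \ref{p00} with the engineered constant $p_v$.
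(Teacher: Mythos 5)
Your argument is correct and is essentially the paper's own proof, which in its entirety consists of citing Lemmas \ref{p00} and \ref{fg} (with $f=V$, $g=\psi_0$ and $g=G_1$ separately) for the inhomogeneous term $p^{3/2}\mathcal{T}(V\cdot\mathcal{T}(i\psi_0))$, Lemma \ref{fg} again (with $V$ in the role of the compactly supported bounded factor) for contractivity of $h\mapsto\mathcal{T}(Vh)$, and the remark that $\mathcal{T}$ preserves analyticity under sup-norm convergence. Your one refinement --- that the outer $p^{1/2}$ must be absorbed by the genuine $|p|^{-1/2}$ prefactor visible in the \emph{proof} of Lemma \ref{fg}, rather than the cruder constant $p_v^{-1/2}$ appearing in its statement --- is exactly the point the paper leaves implicit, and it is what makes the bookkeeping of powers of $p$ close.
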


\begin{proof}
The estimates of $p^{3/2}\mathcal{T}(V\cdot\mathcal{T}(i\psi_0))$
follow from lemma \ref{p00} and \ref{fg}, with $f=V$, $g=\psi_0$
and $g=G_1$ separately.

The contractivity of $\mathcal{T}$ follows from lemma \ref{fg}
with $f=V$, $g=h$. Note that analyticity in $p$ is preserved by
$\mathcal{T}$ and convergence in the sup norm.
\end{proof}

\z We therefore have the following results.
\begin{Proposition}\label{pro}
  (i) $\hat{\psi}$ (as in section \ref{main}) has the following
  decomposition:
$$\hat{\psi}(x,p)=\frac{1}{p}\psi_0(x)+\frac{1}{p^{3/2}}G_2(x,p)$$ where $G_2(x,p)$ is bounded in
$p\in\Omega_0,x\in[-x_1,x_1]$ where $x_1>0$ is arbitrary.\\
(ii) $$\psi(x,t)=\psi_0(x)+\frac{1}{2\pi
  i}\int_{a_0-i\infty}^{a_0+i\infty}\frac{G_2(x,p)}{p^{3/2}}e^{pt}dp$$
is the solution to (\ref{eq:ori}). Here $a_0>0$ is a constant.
\end{Proposition}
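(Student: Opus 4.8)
The plan is to read off part (i) directly from the decomposition introduced in the preceding subsection, and then to obtain part (ii) by applying the Bromwich inversion formula and peeling off the explicit $1/p$ term. For part (i), recall that we set $\hat{\psi}(x,p)=y(x)=\mathcal{T}(i\psi_0)(x)+p^{-3/2}h(x)$, and that Lemma~\ref{p00} gives $\mathcal{T}(i\psi_0)(x)=p^{-1}\psi_0(x)+p^{-3/2}G_1(x,p)$ with $G_1$ bounded on $\Omega_0$, $x\in[-x_1,x_1]$. The preceding Proposition shows that (\ref{eq:hhh}) is contractive in the space of functions analytic in $p\in\Omega_0$ under the sup norm, so its solution $h(x,p)$ is itself analytic and bounded there. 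Setting $G_2:=G_1+h$ I would then conclude
$$\hat{\psi}(x,p)=\frac{1}{p}\psi_0(x)+\frac{1}{p^{3/2}}\bigl(G_1(x,p)+h(x,p)\bigr)=\frac{1}{p}\psi_0(x)+\frac{1}{p^{3/2}}G_2(x,p),$$
with $G_2$ bounded and analytic on $\Omega_0$, $x\in[-x_1,x_1]$ for arbitrary $x_1>0$, which is exactly the claim.

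For part (ii), the strongly differentiable unitary propagator of \S\ref{main} guarantees that $\hat{\psi}(x,\cdot)$ is the Laplace transform of the continuous-in-$x$ solution $\psi$, analytic for $\Re\,p>0$ (the bound-state poles sit at the purely imaginary points $p=-iE_k$, while Proposition~\ref{asy} places all large resonance poles in the left half plane). Hence the inversion formula $\psi(x,t)=\frac{1}{2\pi i}\int_{a_0-i\infty}^{a_0+i\infty}\hat{\psi}(x,p)e^{pt}\,dp$ holds for any $a_0>0$. Substituting the decomposition of part (i), the inverse transform of $p^{-1}\psi_0(x)$ equals $\psi_0(x)$ for $t>0$ (the residue of $e^{pt}/p$ at the origin), while $\frac{1}{2\pi i}\int_{a_0-i\infty}^{a_0+i\infty}p^{-3/2}G_2(x,p)e^{pt}\,dp$ converges absolutely, since along $\Re\,p=a_0$ the boundedness of $G_2$ together with the factor $|p|^{-3/2}$ makes the integrand integrable in $\Im\,p$. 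This gives (ii).

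The only point that needs genuine care is the term-by-term splitting of the Bromwich integral: the piece $\int p^{-1}\psi_0\,e^{pt}\,dp$ is merely conditionally convergent, so I would justify it as the standard inverse Laplace transform of $1/p$, deforming the contour to the left and collecting the simple pole at the origin to produce the Heaviside (i.e.\ $\psi_0(x)$) contribution, whereas the $G_2/p^{3/2}$ integral is absolutely convergent and poses no difficulty. As an independent check one may instead verify directly that the right-hand side of (ii) solves (\ref{eq:ori}): differentiation under the integral sign is legitimate thanks to the $|p|^{-3/2}$ decay, and substituting into the equation reduces, via (\ref{eq:lap0}), to the identity already known for $\hat{\psi}$; uniqueness of the $L^2$ solution of the transformed problem then identifies the integral with $\psi$.
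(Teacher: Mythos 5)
There is a genuine gap in your part (i), and it is precisely the point to which the paper's own proof is devoted. You write ``recall that we set $\hat{\psi}(x,p)=y(x)=\mathcal{T}(i\psi_0)(x)+p^{-3/2}h(x)$'', but this identification is not a definition --- it is the claim that has to be proved. The function $\hat{\psi}$ is defined in \S\ref{main} by the Green's-function formula (\ref{eq:ps1}) (equivalently, as the Laplace transform of the actual wave function for $\Re\, p>0$, continued analytically), whereas the $y$ of \S\ref{ilt} is the solution of the integral equation $y=\mathcal{T}(Vy+i\psi_0)$ produced by the contraction argument on $\Omega_0$. These are a priori different objects: both solve the second-order ODE (\ref{eq:lap}), whose solutions form a two-parameter family, so all one knows at the outset is that $y(x,p)=\hat{\psi}(x,p)+c_1(p)y_+(x,p)+c_2(p)y_-(x,p)$ for some coefficients $c_1,c_2$. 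Since the decomposition with bounded remainder has been established only for $y$, part (i) for $\hat{\psi}$ is unproved until one shows $c_1=c_2=0$ throughout $\Omega_0$. The paper does this by observing that for $\Re\, p>p_v$ both $\hat{\psi}$ (via (\ref{eq:pso})) and $y$ are bounded in $x$, while $y_+$ is unbounded for $x<1$ and $y_-$ is unbounded for $x>1$, which forces $c_1=c_2=0$ there; the equality is then propagated to the rest of $\Omega_0$ --- in particular into the left half plane, where $\hat{\psi}$ exists only by analytic continuation and where the boundedness claim of (i) has its real content --- by uniqueness of analytic continuation. Without some such argument, your $G_2:=G_1+h$ is a decomposition of $y$, not of $\hat{\psi}$.

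Your closing remark invoking ``uniqueness of the $L^2$ solution of the transformed problem'' gestures at the right mechanism, but you deploy it for a different purpose (identifying the Bromwich integral with $\psi$ in part (ii)), and by itself it would only identify $y$ with $\hat{\psi}$ where $\Re\, p>0$; the extension to all of $\Omega_0$ still requires the continuation step. Part (ii) of your proposal --- inversion of the Laplace transform, splitting off the $\psi_0/p$ term, and absolute convergence of the $p^{-3/2}G_2$ integral --- is correct and matches what the paper means by ``follows immediately from properties of the inverse Laplace transform'', but it rests on part (i) and so inherits the gap.
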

\begin{proof}
  We only need to show that in $\Omega_0$ the solution $\hat{\psi}$ is identical
  to the solution $y$ obtained in this section, the decomposition of
  which has already been shown. Part (ii) then follows immediately from
  properties of the inverse Laplace transform.

  To this end, note that the general solution to (\ref{eq:lap}) can be
  written in the form of
$$y_{gen}(x,p)=\hat{\psi}(x,p)+c_1(p)y_+(x,p)+c_2(p)y_-(x,p)$$ where
$y_+$ and $y_-$ are the homogeneous solutions defined in section
\ref{main} (with a slight abuse of notation). This implies
$$y(x,p)=\hat{\psi}(x,p)+c_1(p)y_+(x,p)+c_2(p)y_-(x,p)$$ where
$y(x,p)$ is the solution obtained earlier in this section. %, and
% $c_{1,2}$ are analytic in $\Re(p)>p_v$ due to analyticity of both
% $y$ and $\hat{\psi}$.
Since in the region $\{\Re(p)>p_v$, $x<1\}$, $y_+$ is unbounded and
$y_-$ is bounded, and in $\{\Re(p)>p_v$, $x>1\}$ $y_+$ is bounded
and $y_-$ is unbounded, while both $y$ and $\hat{\psi}$ are bounded
(the boundedness of $\hat{\psi}$ follows easily from
(\ref{eq:pso})), we must have $c_1=c_2=0$ in $\Re(p)>p_v$. Thus
$\hat{\psi}$ and $y$ coincide in $\Re(p)>p_v$ and also in $\Omega_0$
by uniqueness of analytic continuation.

\end{proof}

\begin{Lemma}
In the expression $$\frac{1}{2\pi
  i}\int_{a_0-i\infty}^{a_0+i\infty}\frac{h_2(x,p)}{p^{3/2}}e^{pt}dp$$
we may deform the contour to one which goes from $-\infty$ below the
real axis, turns counterclockwise around the origin and goes towards
$-\infty$ above the real axis. In the process we collect all
residues from all the poles in the left half plane.
\end{Lemma}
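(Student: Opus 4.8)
The plan is to obtain the deformation as a limit of the residue theorem applied on a family of closed contours $\mathcal{C}_k$ exhausting the left half plane, so that in the limit the Bromwich integral equals the asserted contour integral plus the sum of the enclosed residues. For each large $k$ let $\mathcal{C}_k$ consist of four pieces: (I) the Bromwich segment on $\Re p=a_0$ truncated at height of order $k^2$; (II) an arc through the upper half plane joining its top endpoint to a point just above $\RR^-$; (III) the truncated Hankel contour — in just above $\RR^-$ toward $0$, a small loop around the origin, out just below $\RR^-$ — emerging near $p_k(1)$; and (IV) the curve $p_k(s)$ of the preceding lemma traversed from $p_k(1)\in\RR^-$ back to $p_k(0)$ on the negative imaginary axis, followed by a short fourth-quadrant arc closing at the bottom of the Bromwich segment. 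Because $p_k(0)$ lies between two consecutive zeros of $W_p$, and $\hat\psi$ is meromorphic on the Riemann surface of $\sqrt p$ with poles only at these zeros together with a possible branch point at $0$, the contour $\mathcal{C}_k$ encloses exactly the finitely many poles of modulus of order at most $k^2$; the residue theorem gives $\oint_{\mathcal{C}_k}=2\pi i\sum_{\text{enclosed}}\mathrm{Res}$, and every pole is eventually enclosed as $k\to\infty$.

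The bulk of the work is to show that, as $k\to\infty$, the only surviving contributions are the full Bromwich integral (from (I)) and the full contour integral around $\RR^-$ (from (III)). On the auxiliary arcs of (II) and the fourth-quadrant arc of (IV), which lie in the first, second and fourth quadrants, I would use that these regions sit in $\Omega_0$, where by Proposition \ref{pro} the integrand is $O(|p|^{-3/2})$; since $\Re p\le a_0$ there, $|e^{pt}|\le e^{a_0 t}$ on the portions with $\Re p\ge 0$ and decays where $\Re p<0$, so each such arc contributes $O(R_k\cdot R_k^{-3/2})=O(R_k^{-1/2})\to 0$, where $R_k\sim k^2$. The short link approaching $\RR^-$ from below is covered by the sector estimate $\hat\psi=O\!\left(e^{2|\Re\sqrt{-ip}|(|x|+M+2)}\right)$ valid for $\arg p\in[-\pi,\alpha]$, where in addition $e^{pt}=e^{t\Re p}$ decays like $e^{-|p|t}$ and dominates. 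The small loop around the origin stays on the Riemann surface of $\sqrt p$, where $\hat\psi$ is analytic, so no question of integrability at $0$ arises: the origin is merely encircled and retained as part of the final contour, not approached as a limit.

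The main obstacle is the integral along the curve $p_k(s)$, which threads the sector where $\hat\psi$ may grow sub-exponentially. Here I would combine two facts established earlier: along $p_k(s)$ the quotient $1/W_p$ is bounded by a fixed polynomial (the lemma producing the curves $p_k$), so that the collected estimate $\hat\psi=O\!\left(e^{2|\Re\sqrt{-ip}|(|x|+M+2)}\right)$ holds with only a polynomial prefactor, and the factor $e^{pt}$ supplies the decay. Near $s=0$, where $p_k$ meets the negative imaginary axis, one has $\Re\sqrt{-ip}=0$, so there is no exponential growth and the smallness comes from $|p|^{-3/2}\sim k^{-3}$; as $s$ increases and $p_k$ swings toward $\RR^-$, $\Re p$ becomes increasingly negative, so that $|e^{pt}|=e^{t\Re p}$ dominates the at-most-$e^{c\sqrt{|p|}}$ growth of the integrand for every fixed $t>0$. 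Parameterizing by $s$ and using these bounds then shows that $\int_{p_k}\to 0$ as $k\to\infty$.

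Assembling the pieces, the limit of the residue identity reads
\begin{equation}
\frac{1}{2\pi i}\int_{a_0-i\infty}^{a_0+i\infty}\frac{G_2(x,p)}{p^{3/2}}e^{pt}\,dp=\sum_{j}\mathrm{Res}_{p=p_j}\left[\frac{G_2(x,p)}{p^{3/2}}e^{pt}\right]+\frac{1}{2\pi i}\int_{\mathcal{H}}\frac{G_2(x,p)}{p^{3/2}}e^{pt}\,dp,
\end{equation}
where $\mathcal{H}$ is the contour coming in below $\RR^-$, turning counterclockwise around the origin, and leaving above $\RR^-$. The sum is over exactly the poles $p_j$ of $\hat\psi$ in the left half plane, and by Proposition \ref{P8} (simple poles with polynomially bounded residues of $1/W_p$, sub-exponential residues of $\hat\psi$) together with the analyticity of $\hat\psi$, this series converges. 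This is precisely the asserted deformation, collecting all left half plane residues in the process.
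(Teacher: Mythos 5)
Your proof is correct and takes essentially the same route as the paper's: the deformation through the upper half plane and fourth quadrant is justified by the $\Omega_0$ bound of Proposition \ref{pro}, while the third quadrant is threaded along the pole-avoiding curves $p_k(s)$, where the polynomial bound on $1/W_p$ and the decay of $e^{pt}$ (which dominates the $e^{c\sqrt{|p|}}$ growth of $\hat\psi$) make those contributions vanish as $k\to\infty$, leaving the Hankel contour around $\RR^-$ plus the residue series. The only difference is a technicality of closure: the paper follows $p_k(s)$ only for $s\in[1/k,1]$ and then descends a vertical line to $-i\infty$ inside the parabolic region $\{p:-\Im p>(\Re p)^2/9\}$, whereas you follow the curve all the way to the negative imaginary axis and close through the fourth quadrant at finite radius; both closures are controlled by the same estimates.
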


\begin{proof}
The deformation of the upper half of the contour is justified by
Proposition \ref{pro} since $\Omega$ contains the second quadrant.

In the third quadrant, recall that
$p_k(s)=-i(k\pi/2-\text{$\frac{1}{4}$}\arg c_0)^{2}(1-is)^2$ and
$\Re(\sqrt{-ip_k})=(k\pi/2 -\text{$\frac{1}{4}$}\arg c_0)s$. Thus
$\Re p_k(s)\sim const. k^2 s$ and $\Re(\sqrt{-ip_k})=O(\Re
p_k(s)/k)$ for all $s>0$. Therefore we may choose part of the curve
$p_k(s)$ where $s\in[1/k,1]$ and join it with a curve in
$\{p\in\mathbb{C}:-\Im p>(\Re p)^2/9\}\bigcap\Omega_0$, say a
vertical line downward to infinity. These two curves, along with the
one from below the real axis to the origin and lower half of the
original contour, surround all poles in the third quadrant as
$k\to\infty$. Decay along the $p_k(s)$ curve is ensured by the term $e^{pt}$, since $e^{p_k(s)t\pm
\sqrt{-ip_k(s)}M_0}=O(e^{-kt})$ for arbitrarily large $M_0$. Note
also that the length of $p_k(s)$ is of order $k^2$.
\end{proof}

\begin{figure}[ht!] \label{fig:2}
\includegraphics[scale=0.5]{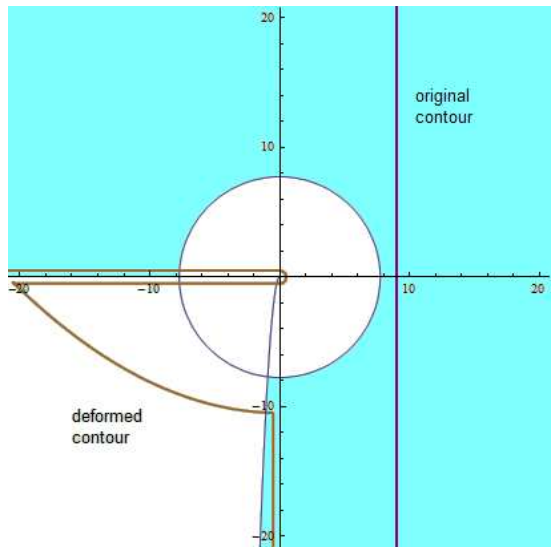}
\caption{A sketch of the contour deformation used. The shaded
area is the contractivity
region.}
\end{figure}

To prove Theorem \ref{T1}, we further write $\psi_0(x)=\frac{1}{2\pi
  i}\int_C \frac{1}{p}\psi_0(x)$ and combine it with $\frac{1}{2\pi
  i}\int_C\frac{h_2(x,p)}{p^{3/2}}e^{pt}dp$, where the contour of
  integration is the horizontal part around the negative real axis described
  above. This contour can be deformed to "$0$ to $-\infty$" in an
  upper and a lower sheets of the Riemann surface, which yields a Borel-summable power series in $t^{-1/2}$.

\subsection{Connection with Gamow Vectors}
Classically, Gamow vectors are obtained as solutions to
(\ref{eq:hom}) with ``purely outgoing boundary conditions'' as $x\to
\pm\infty$. In our case, this means such a solution (after
rescaling) equals $y_+(x)$ for $x>1$ and a nonzero constant times
$y_-(x)$ for $x<1$, $y_{\pm}$ being as in section (\ref{main}). The
existence of such a solution, therefore, is equivalent to the linear
dependence of $y_+$ and $y_-$ (cf. Lemma \ref{lemma1}), which in
turn is equivalent to the vanishing of the Wronskian: $W_p=0$.
Thus the $\gamma_k$
 found from the poles of $\hat{\psi}$ are exactly the resonances
corresponding to the Gamow vectors, a constant multiple of $y_+$.
The latter are is easily seen to be multiples of the residues of
$\hat{\psi}$ for example by simplifying (\ref{eq:pso}):

$$-iW_{p}\hat{\psi}(x,p)=cy_+(x)\int_{M}^{x}y_{+}(s)\psi_{0}(s)ds-y_{+}(x)\int_{-M}^{x}cy_{+}(s)\psi_{0}(s)ds$$
$$=-c\left(\int_{-M}^{M}y_{+}(s)\psi_{0}(s)ds\right)y_+(x)$$
\subsection{Proof of Proposition ~\ref{approx2}}
\begin{proof}
  This follows straightforwardly from Lemmas \ref{L18} and \ref{L19},
after extracting a suitable number of poles from $\hat{\psi}$. All poles
are simple, and the contribution of a pole of residue $r_k$ and
position $p_k$ is
$$J_k(t)=r_k\int_0^{\infty}\frac{e^{-pt}dp}{\sqrt{p}-p_k}$$
The representation of $J_k$ in terms of special functions is perhaps most conveniently
shown by solving the first order differential equation it satisfies,
and determining the free constant from the asymptotic
behavior in $p_k$.
\end{proof}
\section{Example: the square barrier}\label{SB}
Here we take as a simple example the Schr\"{o}dinger
equation with a square bump potential $V(x)=\chi_{[-1,1]}$, $\chi$
being the indicator function. (One of few cases where explicit solutions exist.)
$$y_+(x)=\left\{
           \begin{array}{ll}
             A_1 e^{\sqrt{-ip}x}+A_2 e^{-\sqrt{-ip}x}, & \hbox{$x\leqslant -1$;} \\
             A_3 e^{\sqrt{1-ip}x}+A_4 e^{-\sqrt{1-ip}x}, & \hbox{$-1<x<1$;} \\
             e^{-\sqrt{-ip}x}, & \hbox{$x\geqslant 1$.}
           \end{array}
         \right.
$$

$$y_-(x)=\left\{
           \begin{array}{ll}
            e^{\sqrt{-ip}x}, & \hbox{$x\leqslant-1$;} \\
             B_1 e^{\sqrt{1-ip}x}+B_2 e^{-\sqrt{1-ip}x}, & \hbox{$-1<x<1$;} \\
             B_3e^{\sqrt{-ip}x}+B_4e^{-\sqrt{-ip}x}, & \hbox{$x\geqslant1$.}
           \end{array}
         \right.$$
where the coefficients $A_j$, $B_j$ are determined by matching
solutions at the endpoints, $\pm1$. For example,
$$A_3=\frac{\sqrt{i+p}-\sqrt{p}}{2\sqrt{i+p}}(e^{-\sqrt{-ip}-\sqrt{1-ip}})$$
The other coefficients are similar (and obtained
in a similar way) and we omit them.

It follows that the Wronskian $W_p$ has an explicit expression
\begin{equation}
\frac{\sqrt{-i}e^{-2\sqrt{-ip}+2\sqrt{1-ip}}}{2\sqrt{i+p}}\left(e^{-4\sqrt{1-ip}}(i+2p-2\sqrt{p}\sqrt{i+p})-i-2p-2\sqrt{p}\sqrt{i+p}\right)\label{wro}\end{equation}

We may find the asymptotic positions of the resonances by iterating
\begin{equation}
  z_k=\frac{1}{4i}\log\left(\frac{i+2p_k+2\sqrt{p_k}\sqrt{i+p_k}}{i+2p_k-2\sqrt{p_k}\sqrt{i+p_k}}\right)\end{equation}
where $p_k=-i(k\pi/2+z_k)^{2}$.

We also calculate the residues of $1/W_p$ by differentiating
(\ref{wro}):
$$1/W_p\sim\frac{\sqrt{p_k}(i+p_k)(i+2p_k-2\sqrt{p_k}\sqrt{i+p_k})}{\sqrt{-i}e^{-2\sqrt{-ip_k}+2\sqrt{1-ip_k}}(1+\sqrt{-ip_k})}\frac{1}{p-p_k}$$

Here we calculate the positions and residues of a series of poles
using the above formulas and compare them  to the asymptotic
behavior $-\pi k \log(\pi k)-i\pi^2k^2/4$, as in
Proposition \ref{asy}. Then we plot these poles together with a
density graph.\\

The asymptotic pole location formula gives (increasingly) good
accuracy starting with the $15$th pole or so, where it predicts the
position $-181-555i$, whereas the exact value is about $-180-532i$.

\begin{figure}[ht!] \label{fig:4}
\includegraphics[scale=0.6]{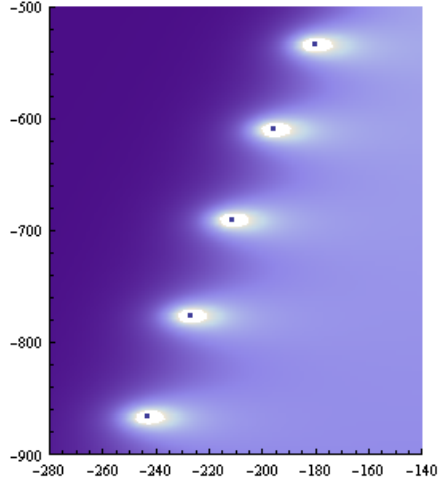}
\caption{Density graph of $1/W_p$. Dark dots indicate poles
calculated from the asymptotic formula.}
\end{figure}

The first resonance, the one closest to the imaginary line (in
$p$-plane), may have a visible effect on the wave function $\psi$
even if this resonance does not correspond to a (long-lived)
metastable state. We will demonstrate this phenomenon, as well as
the computational effectiveness of the Borel summation approach,
using (near-) optimal truncation,  see \S\ref{numcalc}, on the
example of the square barrier potential, where we choose the initial
condition to be $\psi_0(x)=\chi[-\frac{1}{2},\frac{1}{2}]$ for
simplicity.

In our example, the first pole of $1/W_p$ is located at
$p_0=-1.70018 - 0.805871i$. This can be found by standard iterative
arguments.

We will demonstrate the effect of this pole in the region $x>1$,
where (cf. (\ref{eq:ps1}))
$$\hat{\psi}(x,p)=-\frac{ie^{-\sqrt{-ip}x}}{W_p}\int_{-\frac{1}{2}}^{\frac{1}{2}}y_-(s;p)ds$$
and
\begin{multline}\label{fin}
\psi(x,t)=-\frac{1}{2\pi}\int_{-i\infty}^{i\infty}\frac{e^{pt-\sqrt{-ip}x}}{W_p}\int_{-\frac{1}{2}}^{\frac{1}{2}}y_-(s;p)dsdp\\
=-\frac{1}{2\pi}\int_{-\infty}^{0}\frac{e^{pt-\sqrt{-ip}x}}{W_p}\int_{-\frac{1}{2}}^{\frac{1}{2}}y_-(s;p)dsdp-\frac{1}{2\pi}\int_{0}^{-\infty}\frac{e^{pt-\sqrt{-ip}x}}{W_p}\int_{-\frac{1}{2}}^{\frac{1}{2}}y_-(s;p)dsdp\\
+\frac{1}{2\pi i}\lim_{p\to
p_0}\((p-p_0)\hat{\psi}(x,p)\)e^{p_0t}\(1+o(1)\)
\end{multline}

for large $t$.

We may calculate the power series by expanding
$\hat{\psi}(x,p)$ near $p=0$ and using Watson's Lemma. For instance,
for $x=8$ we obtain the series
\begin{multline}(0.735266+ 0.735266 i) \frac{1}{t^{
  3/2}} - (12.3883- 12.3883 i) \frac{1}{t^{
  5/2}} - (98.5277+ 98.5277 i) \frac{1}{t^{
  7/2}}\\ + (471.935- 471.935 i) \frac{1}{t^{
  9/2}} + (1429.08+ 1429.08 i) \frac{1}{t^{
  11/2}}\\ - (2690.72- 2690.72 i) \frac{1}{t^{
  13/2}} - (4000.95+ 4000.95 i) \frac{1}{t^{
  15/2}}+O(t^{-8})\end{multline}

\begin{Note}
 Taking $x=7$, the contribution of the first resonance to the power series
is visible,  about $3\%$,  for $t=7$, and evidently decreases exponentially thereafter. The overall precision
increases rapidly with $t$, if $x$ is fixed or does not increase faster than $t$,  unlike most direct numerical calculations.
\end{Note}

\section{Ecalle-Borel summation, exponential
  asymptotics}\label{transth}
An expansion is Borel summable if, by definition, it is the
asymptotic expansion of the Laplace transform $\mathcal{L}$ of a
function which is real-analytic on $\RR^+$, exponentially bounded,
and which has a convergent series at zero, in (ramified) powers of
the variable and possibly logs (Frobenius series). The Borel
summation operator, $\mathcal{LB}$, is essentially
$\mathcal{L}S\mathcal{L}^{-1}$ where $\mathcal{L}^{-1}$ applied to a
series is understood in the formal sense, as the term-by-term
transform, and $S$ is convergent summation. Since $\mathcal{LB}$ is
conjugated with usual summation, which commutes with virtually all
operations, the same is true for $\mathcal{LB}$. Uniqueness of the
Borel sum  stems from uniqueness of the sum of a
convergent series. Borel summation is a canonical extension of usual
summation. The decomposition of a function in a Borel summed part
and a sum of exponentials, when possible, is also unique and
canonical, see \S\ref{transth} and for a detailed analysis e.g.
\cite{CostinBook}.

\subsection{Borel summation and least term truncation}\label{numcalc} Borel summation
allows for exponentially accurate calculations of the associated
function by truncating the series near its least term. We first
briefly explain the reason and refer to \cite{OCMDK} for more details.
The accuracy, even for $t$ not so large, is illustrated
in \S\ref{SB}.

Here we consider a Borel summed series of the type that intervenes
in our problem, namely let
\begin{equation}
  \label{eq:bor1}
  f(t)=\int_0^{\infty} e^{-pt} F(p)dp
\end{equation}
where $F(p)=g(p^\beta); \Re\beta>0$ and $g$ is analytic at the origin and
meromorphic in $\CC$. The asymptotic behavior of $f$ is, by Watson's
lemma, of the form
\begin{equation}
  \label{eq:srf}
  f(t)\sim \tilde{f}(t)=\sum_{k=0}^{\infty}c_k t^{-k\beta}
\end{equation} We let $S_n(t;\tilde{f})$ be the truncate
of the power series, up to the power $t^n$:
\begin{equation}
  \label{eq:defS}
  S_n(t;\tilde{f})=\sum_{0\le k\Re\,\beta\le n}c_k t^{-k\beta}
\end{equation}
We want to estimate the error by calculating the function from its
power series by optimal truncation, or {\em truncation to the least
  term}. For a series in which the
coefficient of $t^b$ grows roughly like $ b!/c^b$,
means using the  truncation
$S_{ct}(t;\tilde{f})$.  For example using Stirling's formula we see
that for $F(p)=1/(p-p_0)$, the general term of $\tilde{f}$ grows like
$k!/|p_0|^k$ and its least term is near $n=t|p_0|$; this location of
the least term is the same regardless of the nature of the
singularity, for all algebraic-logarithmic type of singularities. 

We show that the error in approximating $f$ by $S_t(t;\tilde f)$ in this way is
of the same order of magnitude as this least term, which is
exponentially small in $t$. In similar contexts this is known quite
generally for Borel summed series, see \cite{OCMDK} and references
therein.  Least term truncation provides a practical way to calculate
functions with very high accuracy even for $t$ of moderate size. In
spite of the generality considered in \cite{OCMDK}, our case is not
covered (because of ramification at zero).  Instead of describing the
adaptation of that proof, for convenience of the reader, we provide a complete argument in our case.

\begin{Lemma}\label{L18}
  Assume $\beta\in (0,1)$ and $H$ is analytic in  disk
  $\overline{\mathbb{D}}_{A_2}$. Let $H_M$ be the maximum
of $H$ on a disk of radius $A\in(A_1,A_2) $. Then
\begin{equation}
  \label{eq:eqg11}
  \int_0^{A_1^{1/\beta}}H(p^{\beta})\exp(-tp)dp-S_{(n+1)\beta}(t,\tilde H(p^\beta))
=
O\left(\frac{H_M\Gamma(n\beta+\beta+1)}{A_1^{n+1}t^{n\beta+\beta+1}} \right)
\end{equation}
\end{Lemma}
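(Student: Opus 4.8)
The plan is to read (\ref{eq:eqg11}) as a quantitative version of Watson's lemma, in which the hypothesis that $H$ is analytic on a disk of radius $A_2>A>A_1$ supplies, via Cauchy's estimates, an explicit bound on the Taylor remainder; everything else reduces to elementary gamma integrals. Concretely, I would write $H(w)=\sum_{k=0}^{n}c_k w^k+R_n(w)$ with $|c_k|\le H_M A^{-k}$ (Cauchy's inequality on $|\zeta|=A$, where $H_M=\max_{|\zeta|=A}|H|$ by the maximum principle), and control the remainder through the integral representation
$$R_n(w)=\frac{w^{n+1}}{2\pi i}\oint_{|\zeta|=A}\frac{H(\zeta)}{\zeta^{n+1}(\zeta-w)}\,d\zeta,$$
which for $|w|\le A_1<A$ yields $|R_n(w)|\le \frac{A}{A-A_1}\,H_M\,|w|^{n+1}A^{-(n+1)}$. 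This is the only genuinely analytic input of the argument.

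Next I would substitute $w=p^{\beta}$ and split $\int_0^{A_1^{1/\beta}}H(p^{\beta})e^{-tp}\,dp$ into its polynomial part and its remainder part. For each polynomial term I would extend the upper limit to $+\infty$, producing $c_k\int_0^{\infty}p^{k\beta}e^{-tp}\,dp=c_k\Gamma(k\beta+1)t^{-(k\beta+1)}$ — precisely the terms kept in $S_{(n+1)\beta}$ — at the cost of the tails $\int_{A_1^{1/\beta}}^{\infty}p^{k\beta}e^{-tp}\,dp$. Since on this range the integrand is bounded by $e^{-tA_1^{1/\beta}}$ times an algebraic factor, each tail is exponentially small in $t$. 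For the remainder part I would again extend to $+\infty$ and estimate
$$\left|\int_0^{A_1^{1/\beta}}R_n(p^{\beta})e^{-tp}\,dp\right|\le \frac{A}{A-A_1}\,\frac{H_M}{A^{n+1}}\int_0^{\infty}p^{(n+1)\beta}e^{-tp}\,dp=\frac{A}{A-A_1}\,\frac{H_M\,\Gamma((n+1)\beta+1)}{A^{n+1}t^{(n+1)\beta+1}}.$$
Because $A>A_1$ and $(n+1)\beta+1=n\beta+\beta+1$, this already has the asserted order — indeed with the sharper factor $A^{-(n+1)}$, so a fortiori with the stated $A_1^{-(n+1)}$ — and assembling it with the exponentially small tails gives (\ref{eq:eqg11}). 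The condition $\beta\in(0,1)$ is used only to guarantee integrability of $p^{k\beta}$ at the origin and convergence of all the gamma integrals, which is routine.

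The genuinely delicate point is not the bound at fixed $n$, where the exponentially small tails are trivially dominated by the algebraic term as $t\to\infty$, but the fact that the intended application (least-term truncation, \S\ref{numcalc}) takes $n$ growing with $t$. There one must verify, via Stirling's formula, that the $e^{-tA_1^{1/\beta}}$-type tails stay below $\Gamma((n+1)\beta+1)A^{-(n+1)}t^{-(n+1)\beta-1}$ \emph{uniformly} in the relevant range of $n$, and that the multiplicative constant $A/(A-A_1)$ is kept independent of $n$ and $t$. Securing this uniformity, rather than the term-by-term estimates above, is the step I expect to require the most care, and it is what makes the estimate usable when combined with Lemma~\ref{L19}.
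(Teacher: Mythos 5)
Your proof is correct and follows essentially the same route as the paper's own: a Taylor expansion of $H$ with Cauchy integral remainder over the circle $|s|=A$, followed by the elementary bound $\int_0^{\infty}p^{(n+1)\beta}e^{-tp}\,dp=\Gamma(n\beta+\beta+1)\,t^{-n\beta-\beta-1}$, with $A>A_1$ absorbing the constant. The only difference is presentational: you explicitly account for the exponentially small tails from extending the truncated integrals to $+\infty$ (and flag the uniformity issue when $n$ grows with $t$), points the paper's terser proof leaves implicit.
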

\begin{proof}
  Indeed, by using Taylor series with Cauchy integral  remainder we have
  \begin{multline}
    \label{eq:eqH1}
 \int_0^{A_1^{1/\beta}}H(p^{\beta})\exp(-tp)dp\,\,-\,\,S_n(H(p^\beta))\\=
\frac{1}{2\pi i}\int_0^{A_1^{1/\beta}}\exp(-tp)p^{(n+1)\beta} \oint_C\frac{H(s)ds}{s^{n+1}(s-p^\beta)}dp=J
  \end{multline}
where $C$ is a circle of radius $A$.
If $H_M$ is the maximum of $H$ on $C$, We have
\begin{equation}
  \label{eq:eqJ}
  |J|\le \frac{H_M}{2\pi A_1^{n}(A_1-A)}\int_0^{\infty} e^{-np}p^{(n+1)\beta}dp
\end{equation}
and the result follows.
\end{proof}
\begin{Lemma}
  Assume $\beta\in (0,1)$ and let $F(p)=(p^\beta-p_0)^{-1}$  where $p_0\notin\RR^+$ and let  $f=\mathcal{L}F$. Then,
  \begin{equation}
    \label{eq:stP}
    |f(t)-S_{(n+1)\beta}(t,\tilde f)|\le \frac{\Gamma(n\beta+\beta+1)}{p_0^{n+1}t^{n\beta +\beta +1}|\Im\,p_0|}
  \end{equation}
\end{Lemma}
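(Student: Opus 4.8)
The plan is to avoid treating this as a genuine Watson's-lemma asymptotic statement and instead to exploit that here $F$ is \emph{exactly} a geometric series plus an explicit remainder, so that every retained term can be Laplace-transformed exactly. Since $p_0\notin\RR^+$ we have $p_0\neq 0$, and the finite geometric sum identity (valid algebraically for every $p>0$, not merely where the series converges) gives
\begin{equation*}
  \frac{1}{p^\beta-p_0}=-\sum_{k=0}^{n}\frac{p^{k\beta}}{p_0^{k+1}}+\frac{1}{p_0^{n+1}}\,\frac{p^{(n+1)\beta}}{p^\beta-p_0}.
\end{equation*}
This is the analogue of the Cauchy-remainder identity used in Lemma \ref{L18}, but with the residue structure made completely explicit.

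First I would apply $\mathcal{L}$ to the polynomial part term by term. Using the exact Euler integral $\int_0^\infty e^{-pt}p^{k\beta}\,dp=\Gamma(k\beta+1)\,t^{-k\beta-1}$, the transform of $-\sum_{k=0}^n p^{k\beta}/p_0^{k+1}$ is $-\sum_{k=0}^n \Gamma(k\beta+1)\,p_0^{-(k+1)}\,t^{-k\beta-1}$, which is by construction the truncation $S_{(n+1)\beta}(t,\tilde f)$ (the retained indices are exactly $k=0,\dots,n$, matching the convention of Lemma \ref{L18}, where the remainder carries the factor $p^{(n+1)\beta}$). Consequently
\begin{equation*}
  f(t)-S_{(n+1)\beta}(t,\tilde f)=\frac{1}{p_0^{n+1}}\int_0^\infty e^{-pt}\,\frac{p^{(n+1)\beta}}{p^\beta-p_0}\,dp,
\end{equation*}
so the entire problem collapses to estimating this single remainder integral.

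The remaining step is a uniform pointwise lower bound on the denominator. For $p>0$ the quantity $p^\beta$ is a nonnegative \emph{real} number, so $\Im(p^\beta-p_0)=-\Im p_0$ and therefore $|p^\beta-p_0|\ge|\Im p_0|$ for all $p>0$; this is precisely where the hypothesis $p_0\notin\RR$ enters. Pulling this bound out of the integral and evaluating $\int_0^\infty e^{-pt}p^{(n+1)\beta}\,dp=\Gamma(n\beta+\beta+1)\,t^{-(n+1)\beta-1}$ yields
\begin{equation*}
  \bigl|f(t)-S_{(n+1)\beta}(t,\tilde f)\bigr|\le\frac{1}{|p_0|^{n+1}\,|\Im p_0|}\cdot\frac{\Gamma(n\beta+\beta+1)}{t^{n\beta+\beta+1}},
\end{equation*}
which is (\ref{eq:stP}).

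I expect no serious obstacle: the lemma is essentially a sharpened, fully explicit special case of Lemma \ref{L18} in which the meromorphic singularity of $H$ is a single simple pole. The only points demanding care are bookkeeping -- verifying that the term-by-term transform of the geometric partial sum matches the author's $S_{(n+1)\beta}$ convention -- and recognizing that the estimate $|p^\beta-p_0|\ge|\Im p_0|$ is what replaces the generic contour-maximum constant $H_M/A_1^{n+1}$ of Lemma \ref{L18} by the sharper factor $|p_0|^{-(n+1)}|\Im p_0|^{-1}$. The reason one does not simply quote Lemma \ref{L18} is that that lemma integrates only over $[0,A_1^{1/\beta}]$ with $A_1<|p_0|$, whereas the exact geometric remainder here lets us integrate over all of $\RR^+$ and read off the optimal constant directly.
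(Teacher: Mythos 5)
Your proof is correct and takes essentially the same route as the paper: the paper's own proof consists of exactly your finite geometric-series identity (its Eq.~(\ref{eq:p0}), modulo a sign typo that you implicitly correct) followed by ``straightforward integration and estimates,'' which are precisely your term-by-term evaluation $\int_0^\infty e^{-pt}p^{k\beta}\,dp=\Gamma(k\beta+1)t^{-k\beta-1}$ and the pointwise bound $|p^\beta-p_0|\ge|\Im\,p_0|$ for $p>0$. One pedantic remark: the hypothesis is $p_0\notin\RR^+$ rather than $p_0\notin\RR$, but for $p_0\in\RR^-$ the right-hand side of (\ref{eq:stP}) is infinite and the inequality is vacuously true, so your argument loses nothing.
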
\label{L19}
\begin{proof}
 Writing
\begin{equation}
    \label{eq:p0}
    \frac{1}{p^b-p_0}=\sum_{j=0}^n \frac{p^{j\beta}}{p_0^{j+1}}+
\frac{p^{(n+1)\beta}}{p_0^{n+1}(p^\beta-p_0)}
\end{equation}
this follows from straightforward integration and estimates.
\end{proof}
\subsection{A class of level one transseries}\label{Sec1.2} We only need an
especially simple subclass of transseries, exponential power series of
the type
\begin{equation}
  \label{eq:deftr}
  \tilde{f}(t)=\sum_{k=0}^{\infty} e^{-\gamma_k t} t^{\alpha_k}\tilde{f}_k(t)
\end{equation}
where $\tilde{f}_k(t)$ are formal power (integer or noninteger) series
in $1/t$, where, for
disambiguation purposes, the real part of the leading power of $1/t$ in $\tilde{f}_k(t)$
is chosen to be $1$. Agreeing that no $
\tilde{f}_k(t)$ is exactly zero and the $ \gamma_k $ are distinct, it
is required that the exponentials $ e^{-\gamma_k t}$ are well
ordered, in the sense that $\Re(\gamma_k)\ge \Re(\gamma_{k'})$ if
$k\ge k'$, and every $\Re(\gamma_k)$ has a predecessor, the smallest
$\Re(\gamma_j)$ greater than it. In our context the sets
$\{j:\Re(\gamma_j)=\Re(\gamma_k)\}$ turn out to be finite.

The transseries $\tilde{f}$  is Ecalle-Borel summable if
(a) $\tilde{f}_k(t),k\in\NN$ are simultaneously Ecalle-Borel summable (in
fact, simply Borel summable, in our case), and (b) upon replacing each
$\tilde{f}_k(t)$ by its sum, the resulting function series is
uniformly convergent. We give precise definitions in \S\ref{transth}.
Transseries and Ecalle-Borel summability were introduced by Ecalle in
the 1980s and there has been substantial development since. For an
elementary introduction see \cite{CostinBook}.

The transseries is (Ecalle-Borel) summable if for some $T>0$
we have the following.

(i) $\tilde{f}_k(t)$ are simultaneously Borel summable, that is there
exists a $T$ so that $\tilde{f}_k(t)$ are the asymptotic expansions
for large $t$ of Laplace transforms,
\begin{equation}
  \label{eq:lptr}
f_k(t)=  \int_0^{\infty}F_k(p)e^{-pt}dp=:\mathcal{LB}\tilde{f}_k(t)
\end{equation}
where

(ii) $F_k$ are ramified-analytic at zero, and real analytic on $\RR^+$
with the uniform bound $\|F_k(p)\|\le C_k e^{|p|T}$.

(iii) For some $\nu\in\RR$ we have have $|F_k(t)|\le C_k e^{\nu|p|}$.

(iv) The series
\begin{equation}
  \label{eq:convtr}
  \sum_{k=0}^{\infty} |e^{-\gamma_k T}| C_k
\end{equation}
converges for some $T>0$ (and thus for all $t\ge T$). We recall that,
by convention, $F_k(p)=c_k(1+o(1))$ as $p\to 0$, where $c_k\ne 0$.

Therefore, the sum
\begin{equation}
  \label{eq:sumtrans}
  f= \mathcal{LB} \sum_{k=0}^{\infty} t^{\alpha_k}e^{-\gamma_k t} \tilde{f}_k(t):=\sum_{k=0}^{\infty} e^{-\gamma_k t} t^{\alpha_k}\mathcal{LB}\tilde{f}_k(t)=\sum_{k=0}^{\infty} e^{-\gamma_k t} t^{\alpha_k}{f}_k(t)
\end{equation}
converges absolutely for $t>T$.

The operator $\mathcal{LB}$ is a proper extension of the Borel
summation operator. In particular, it allows for non-accumulating
singularities on the axis of summation, in which case analytic
continuation is replaced by Ecalle's universal
averaging. Superexponential growth of $F$ of a controlled type is
allowed, using Ecalle's acceleration operators.

 With these extensions,
Borel summation is an extended isomorphism between series, or more
generally transseries, and a class of functions ({\em analyzable}
functions), commuting essentially with all operations with which
analytic continuation does. In this sense, Ecalle-Borel summable
transseries substitute successfully for convergent expansions; in
particular the Ecalle-Borel sum of a formal solution of a problem
(within certain known classes of problems such as ODEs and PDEs) is an
actual solution of the same problem. It is known that the fundamental
decaying solution of a nonlinear differential equation at a generic
singularity is given, uniquely, by Borel summable transseries
\cite{Duke}.

\subsection{Uniqueness of the transseries representation} In the
same way as the asymptotic power series of a function, when a series
exists, is unique one function can only have one transseries
representation, if at all.
We sketch a proof that a representation of the form
(\ref{eq:sumtrans}) of a given $f$ is unique. We assume
of course that the transseries are in canonical form, as explained
above.  By linearity, it suffices to show that if $f$ given in
(\ref{eq:sumtrans}) is identically zero, then all $\tilde{f_k}$, and thus all
$f_k$ are identically zero.  We assume
by contradiction
that some $\tilde{f}_k$ are nonzero. Since the
$\Re(\gamma_k)$ are well ordered, cf. \S\ref{Sec1.2}, we choose the
largest $\Re(\gamma_k)$ such that $\tilde{f}_k\not\equiv 0$. There are
only finitely many $\gamma_k$ with the same $\Re(\gamma_k)$, cf. again
\S\ref{Sec1.2}. We can assume without loss of generality that
these $\lambda$s have indices $0,...,n$, and assume that we have ordered the terms
in the transseries so that $\Re \gamma_i\le \Re\gamma_{i+1}$ for all $i$.
We write
\begin{equation}
  \label{eq:sumtrans2}
  f= \sum_{k=0}^{n} e^{-\gamma_k t} t^{\alpha_k}{f}_k(t)+\sum_{k=n+1}^{\infty} e^{-\gamma_k t} t^{\alpha_k}{f}_k(t)
\end{equation}
Note that for any $\epsilon>0$ small enough we have
\begin{equation}
  \label{eq:sumtrans3}
 \left|\sum_{k=n+1}^{\infty} e^{-\gamma_k(T+\tau)} t^{\alpha_k}{f}_k(t)\right|\le const  |e^{-\gamma_{n+1}\tau}|=o\left( |e^{-\gamma_{0}(T+\tau)}|\right)
\end{equation}
as $\tau \to\infty$, since $\Re(\gamma_0)<\Re(\gamma_{n+1})$. Dividing
(\ref{eq:sumtrans2}) by $e^{-\Re \gamma_0 t}$ we get
  \begin{equation}
    \label{eq:eq56}
   \sum_{k=0}^{n} e^{-i\alpha_k t}  t^{\alpha_k}{f}_k(t) =o(1), \ \ \ (t\to \infty)
  \end{equation}
  where $\alpha_k=\Im\gamma_k$.  For each $k$ we choose $\beta_k$ to be
 the smallest power of $p$ (in
  absolute value) with nonzero coefficient, $c_k$ in the expansion
  of $F_k$. Of course, if all coefficients in the Puiseux series
of $F_k$ vanish, then $F_k$ vanishes near zero, and thus everywhere
by analyticity. We arrange that there is no $k$ such that $F_k\equiv 0$.
Then, by Watson's lemma, $F_k=c_k\Gamma(\beta_k+1)t^{-\beta_k-1}(1+o(1))$ for large $t$. We choose the largest $\beta_j$, in the sense above, and divide
by $\Gamma(\beta_j+1)t^{-\Re \beta_j-1}$. We get, by Watson's Lemma,
\begin{equation}
  \label{eq:58}
   \sum_{\Re \beta_j=\Re \beta_k;k\le n} c_k e^{-i\alpha_k t}t^{-i\theta_k}=o(1)
\end{equation}
where $\theta_k=\Im\beta_k$.
We now prove a lemma in more generality than needed here, in view of
future generalizations to time dependent potentials.

\begin{Lemma}
  Assume $\sum_{k=0}^{\infty}|c_k|^2<\infty$ and that
  \begin{equation}
    \label{eq:eqas1}
   f(t)=  \sum_{k=0}^{\infty} c_k e^{-i\alpha_k t}t^{-i\theta_k}=o(1)
  \end{equation}
where $\alpha_k,\theta_k\in\RR$, as $t\to \infty$. Then $f(t)\equiv 0$.
\end{Lemma}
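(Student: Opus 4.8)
The plan is to exhibit a linear mean functional under which the generalized exponentials $\chi_k(t):=e^{-i\alpha_k t}t^{-i\theta_k}$ are orthonormal, and then to evaluate the mean of $f$ against each $\chi_j$ in two independent ways. Since $t^{-i\theta_k}=e^{-i\theta_k\log t}$ has modulus one, the linear Ces\`aro mean $T^{-1}\int_0^T\cdots dt$ fails to separate terms sharing the same $\alpha$ but differing in $\theta$: the factor $t^{-i\theta}$ is slowly oscillating and contributes a bounded but non-convergent amount. The remedy is the logarithmic mean
\[
 M[g]:=\lim_{T\to\infty}\frac{1}{\log T}\int_1^T g(t)\,\frac{dt}{t},\qquad M_T[g]:=\frac{1}{\log T}\int_1^T g(t)\,\frac{dt}{t}.
\]
Passing to the variable $u=\log t$ turns $t^{-i\theta_k}$ into the genuine Fourier mode $e^{-i\theta_k u}$ while $e^{-i\alpha_k t}$ becomes the rapidly oscillating $e^{-i\alpha_k e^{u}}$, and a short computation gives $M[\chi_k\overline{\chi_\ell}]=\delta_{k\ell}$: the diagonal term is identically $1$; if $\alpha_k=\alpha_\ell$ but $\theta_k\neq\theta_\ell$ the mean is $O(1/\log T)$; and if $\alpha_k\neq\alpha_\ell$ the integral $\int_1^T e^{-i(\alpha_k-\alpha_\ell)t}t^{-1-i(\theta_k-\theta_\ell)}dt$ stays bounded in $T$ by Dirichlet's test, hence vanishes after division by $\log T$. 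We may assume the pairs $(\alpha_k,\theta_k)$ are distinct, merging coefficients otherwise, so that the $\chi_k$ are genuinely distinct.

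With this device the argument is a Parseval computation. On the one hand $|f(t)\overline{\chi_j(t)}|=|f(t)|=o(1)$, so splitting $\int_1^T$ at a large $t_0$ and using $(\log T)^{-1}\int_1^{t_0}\to 0$ gives $M[f\,\overline{\chi_j}]=0$; the same splitting applied to $|f|^2$ gives $M[|f|^2]=0$. On the other hand, interpreting $f=\sum_k c_k\chi_k$ as an element of the Besicovitch space $B^2$ with inner product $\langle g,h\rangle=M[g\bar h]$, orthonormality of $\{\chi_k\}$ together with the hypothesis $\sum_k|c_k|^2<\infty$ makes the partial sums Cauchy in $B^2$, and continuity of the inner product yields $M[f\,\overline{\chi_j}]=c_j$, equivalently the Parseval identity $M[|f|^2]=\sum_k|c_k|^2$. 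Comparing the two evaluations forces $\sum_k|c_k|^2=0$, hence every $c_j=0$ and $f\equiv 0$.

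The main obstacle is precisely the infinite-sum step: extracting $c_j=M[f\,\overline{\chi_j}]$ (or the Parseval identity) requires interchanging an infinite series with the mean using only square-summability rather than absolute summability, so dominated convergence is not directly available. This is exactly the content of the generalized-almost-periodic framework: one controls the tail $R_N=\sum_{k>N}c_k\chi_k$ by Cauchy--Schwarz, $|M_T[R_N\overline{\chi_j}]|^2\le M_T[|R_N|^2]$, and checks that $\limsup_{T\to\infty}M_T[|R_N|^2]=\sum_{k>N}|c_k|^2$, which tends to $0$ as $N\to\infty$; the uniformity in $T$ lets the estimate pass through $M$. For the finite sums actually needed in the application --- the reduced identity \eqref{eq:58} has only finitely many terms with a common $\Re\gamma_k$ and common $\Re\beta_k$ --- this bookkeeping is immediate from linearity and orthonormality, so the delicate $\ell^2$ control is required only for the stated generalization.
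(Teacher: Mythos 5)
Your mechanism is, at its core, the same as the paper's: the paper also passes to the logarithmic variable (it sets $t=e^s$, which is exactly your logarithmic mean in disguise), shows that all cross terms $\int e^{-i(\alpha_k-\alpha_\ell)e^u}e^{-i(\theta_k-\theta_\ell)u}\,du$ stay bounded (its formula (\ref{eq:eqf3}) plus the rapid-oscillation estimate, which are your two off-diagonal orthogonality cases), and concludes by the same Parseval-type computation: the mean of $|f|^2$ is forced to be both $\sum_k|c_k|^2$ and $0$. Your orthonormality table is correct, the evaluations $M[f\,\overline{\chi_j}]=0$ and $M[|f|^2]=0$ from $f=o(1)$ are correct, and for the finite sums actually needed in the application (\ref{eq:58}) your argument is complete and coincides with the paper's.

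The gap is in the step you yourself flag as the main obstacle, and your proposed repair does not close it. You claim one can ``check'' that $\limsup_{T\to\infty}M_T[|R_N|^2]=\sum_{k>N}|c_k|^2$. This cannot be checked from the hypotheses: $R_N$ is an infinite series of exactly the same type as $f$, so expanding $|R_N|^2$ and applying $M_T$ term by term is precisely the infinite-sum/mean interchange under dispute; linearity plus orthonormality deliver it only for finite sums. Worse, the decomposition $R_N=f-S_N$, which \emph{is} legitimate, combined with the facts you already established ($M[|f|^2]=0$, $M[f\,\overline{\chi_k}]=0$, finite orthonormality) gives
\begin{equation*}
\lim_{T\to\infty}M_T[|R_N|^2]=0-0+\sum_{k\le N}|c_k|^2 ,
\end{equation*}
which is \emph{increasing} in $N$, not tending to zero; your asserted identity is compatible with this only when every $c_k$ vanishes. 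In other words, the ``check'' is not a lemma on the way to the conclusion --- it is equivalent to the conclusion, and assuming it is circular. To get Parseval for genuinely infinite sums with only $\ell^2$ coefficients one needs an additional ingredient, e.g.\ a separation condition on the frequencies $\alpha_k$ together with a Hilbert/Montgomery--Vaughan type bound on $\sum_{k\ne\ell}|c_k||c_\ell|/|\alpha_k-\alpha_\ell|$, or uniform (rather than pointwise) convergence of the series on half-lines, either of which holds in the paper's setting where the $\alpha_k$ come from well-separated resonances. (To be fair, the paper's own appeal to ``dominated convergence'' at the analogous point is equally thin; but that is where the real work lies, and your write-up, like the paper's, does not supply it.)
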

\begin{proof}
  We first look at the simpler case where all $\theta_k=0$; as we shall
see, the general case is similar.
We see, by explicit integration and
dominated convergence,  that for large $t_0$ and $t\to\infty$ we get
from (\ref{eq:eqas1}) that
\begin{equation}
  \label{eq:equint}
  \int_{t_0}^t |f(s)|^2ds=\sum_{k=0}^{\infty}|c_k|^2t+O\left[\left(\sum_{k=0}^{\infty}|c_k|^2\right)^2\right]=o(t)
\end{equation}
which is only possible if
\begin{equation}
  \label{eq:sum2}
  \sum_{k=0}^{\infty}|c_k|^2=0
\end{equation}
To generalize to the case $\theta_k\ne 0$, we simply note that (\ref{eq:eqas1}) implies
 \begin{equation}
    \label{eq:eqas1}
   f(e^s)=  \sum_{k=0}^{\infty} c_k e^{-i\alpha_k e^s}e^{-i\theta_k s}=o(1)
  \end{equation}
as $s\to \infty$ and that, still as $s\to \infty$ we have (e.g.
by integration by parts) that, for $\theta\ne 0$, we have
\begin{equation}
  \label{eq:eqf3}
  \int_{s_0}^s e^{-i\alpha e^u}e^{-i\theta u}du=\frac{i}{\theta}e^{-i\alpha e^s}e^{-i\theta s}(1+o(1))
\end{equation}
\end{proof}
{\em Borel summation and usual summation: the underlying isomorphism.}
Furthermore, there is the following important point.  When a Borel
summable transseries of a function exists, functions and their
transseries have the same properties. That is, there exists an
extended isomorphism between transseriable functions and transseries
similar in many ways to the one between germs of analytic functions,
and their local convergent Taylor series regarded as formal algebraic
objects. This latter isomorphism is so flawless that we do not
distinguish notationally a convergent sum as a formal sum, from its
sum as a function.  Borel summation is a proper extension of usual
summation, carrying further these isomorphism features.

 The isomorphism, provided by Ecalle-Borel summability, which recovers
the function from its transseries, justifies the usage of the term
{\em complete asymptotics}. Borel summation is a  canonical way to sum factorially
divergent series, cf. also \cite{CostinBook}.

{\em Independence of method.} Finally, the nontrivial terms in the
transseries of a function  can be exhibited by many other exponential
asymptotic techniques some of which having
of substantial calculational  value, such as
hyperasymptotics, a set of methods \hyphenation{Hyper-asymptotics}
improving and refining optimal truncation of series, cf.
\cite{Howls},\cite{CK}, \cite{Daalhuis}, and references therein.

In the language of generalized Borel summability,
the wave function asymptotics is given in all amplitude
regimes by an Ecalle-Borel summable transseries, valid for $t>0$, and
this transseries turns out to rest on a Gamow vector decomposition.
\begin{Note}\label{N6} {\rm
Sometimes a given series can be Borel summed with respect to different
powers, or more generally functions, of the variable. For instance,
\begin{equation}
  \label{eq:eqnnu}
  \int_0^{\infty}e^{-pt}e^{-ip^2/4}dp=\int_0^{\infty}e^{-pt^2}
\frac{dp}{2\sqrt{\pi p}(i-p)}
\end{equation}
Since the integrals are equal, they have the same asymptotic series
for large $t$; both integrals are Borel sums of the {\em same
  asymptotic series}, on the left interpreted as a series in $1/t$, while
 on the left it is thought of as a series in $1/t^2$.

Using the connection with Gevrey asymptotics,
\cite{CostinBook}, it is easy to see that a series has a unique Borel
sum, with respect to any variable {\em in which it is Borel summable}, even
when allowing for ramified-analytic functions.   Ramified-analytic
 functions  are real analytic,
and near $p=0$
of the form $F(p_1,...,p_m)$ where $p_j=p^{a_j}\log p^{b_j}$, $\Re(a_i)>0$, and $F$ is analytic at $\bf 0$ .  It is an easy exercise to show that
$ e^{-t^\alpha}, \,\,\Re\alpha>0$ {\em cannot} be represented as a Laplace transform of a ramified analytic
function.

But beyond ramified analyticity  uniqueness of the
representation as a ``continuum'' (to use physics terminology)  plus exponentials does not hold. We have, e.g.,
\begin{equation}
  \label{eq:expr}
  e^{-t}=\pi^{-1/2}\int_0^{\infty} p^{-3/2}e^{-1/p} e^{-pt^2}dp
\end{equation}
which is a continuum type integral.

The Borel sum gives consistent results, and remaining
exponential terms are uniquely defined. 

Ramified
analyticity of $\hat \psi$ follows from the formulas in \cite{Newton} and \cite{Calderon}. However, the techniques \cite{Calderon} are
more involved and,  along those lines,  there appear to be significant gaps in estimates leading
to a mathematical
proof of Borel summability (which, in fact,  is not the intention
of those works). The purpose of the analysis
in \cite{Calderon}, \cite{Newton}   and related literature is different: the extension of a spectral-like theory and a ``spectral calculus'' beyond the continuous spectrum.

}
\end{Note}

\subsection{Analytic potentials}\label{Ap} Since the wave function is the solution of a PDE which mixes space and time information, finding the detailed
time behavior of $\psi$ is contingent on
detailed information about $V(x)$ and $\psi_0(x)$. It seems likely
that generalized (multi-) summability of the large time
(trans-)series of $\psi$ should hold whenever $V$ has a
multisummable transseries as well. In this paper though we consider
potentials analytic at infinity
 and with sufficient decay.  For simplicity, we
write $-ip=\epsilon^2$. Equation (\ref{eq:hom}) reads:
\begin{equation}
  \label{eq:hom1}
   y''(x)=\left(V(x)+\epsilon^2\right)y(x)
\end{equation}
From the form of the Green's function, it is clear that the analytic
properties of the Green's function at $p=0$ follow from those of the
Jost functions (defined as in (\ref{eq:defypm}):  $y^+$
is the solution that behaves like $e^{-\sqrt{-ip}x}$ as $x\to\infty$,
when $p\in\RR^+$). We analyze potentials of the form $V(x)=a/x^m,
m\in\NN$, and we discuss how essentially the same arguments would
work for any potential which is analytic at infinity and
$O(x^{-m})$. The value $m=2$ marks in a sense a threshold, making
the transition between convergent and divergent expansions in energy
at the bottom of the continuous spectrum. For $m=1, 2$ the equation
can be solved in terms of simple special functions; the slow decay
in the case $m=1$ implies that zero is an accumulation point of
poles; no convergent Frobenius expansion is possible.
\begin{Proposition}
  For $m\ge 2$ the Jost functions have convergent Frobenius expansions
  in $\epsilon$ (series in ramified powers of $\epsilon$ and
  $\epsilon\log\epsilon$ and are bounded by $const e^{|\epsilon ||x|}$
  uniformly in a sector $\arg\epsilon\in(-3\pi/4,3\pi/4)$.
\end{Proposition}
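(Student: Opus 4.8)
The plan is to realize the Jost function $y^+$ as the solution of a Volterra integral equation at $+\infty$ and to read off its behavior at the bottom of the continuous spectrum ($\epsilon=0$) from that equation, using the exactly solvable cases as a guide. Normalizing $y^+(x)\sim e^{-\epsilon x}$ as $x\to+\infty$, the Green's function for $\partial_x^2-\epsilon^2$ gives
\[
  y^+(x)=e^{-\epsilon x}+\int_x^\infty\frac{\sinh(\epsilon(s-x))}{\epsilon}\,V(s)\,y^+(s)\,ds,
\]
which, differentiated twice, reproduces \eqref{eq:hom1}. Factoring $y^+=e^{-\epsilon x}M$ turns this into $M(x)=1+\int_x^\infty\kappa(x,s)V(s)M(s)\,ds$ with $\kappa=(1-e^{-2\epsilon(s-x)})/(2\epsilon)$. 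For $\Re\,\epsilon\ge 0$ one has $|\kappa|\le s-x$, so the Neumann iteration is controlled by $\int_x^\infty(s-x)|V(s)|\,ds$; since $V=a/x^m$ with $m\ge 2$ this is finite for $m>2$ and only logarithmically divergent at the threshold $m=2$. Thus on the half-sector $|\arg\epsilon|\le\pi/2$ the iteration yields existence, uniqueness and analyticity of $y^+$ in $\epsilon$, together with the bound $|y^+|\le C\,e^{|\epsilon||x|}$ (using $|e^{-\epsilon x}|\le e^{|\epsilon||x|}$).

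Next I would extract the Frobenius expansion at $\epsilon=0$ by expanding $\sinh(\epsilon(s-x))/\epsilon$ in powers of $\epsilon$ and iterating: the moment integrals $\int^\infty(s-x)^jV\,ds$ generate the coefficients, and the barely non-integrable tails at $m=2$ convert these into logarithms, producing the $\epsilon\log\epsilon$ terms, while the scaling structure produces the genuinely ramified powers. This is transparent in the model after the substitution $z=\epsilon x$, under which \eqref{eq:hom1} becomes $y_{zz}=(1+a\,\epsilon^{m-2}z^{-m})y$: for $m=2$ this is exactly the modified Bessel equation, so $y^+\propto\sqrt{\epsilon x}\,K_\nu(\epsilon x)$ with $\nu=\sqrt{a+\tfrac14}$, whose convergent small-argument expansion contains $\epsilon^{1/2\pm\nu}$ and, for $\nu\in\ZZ$, logarithms; for $m\ge 3$ the potential enters with an explicit factor $\epsilon^{m-2}$, so the ramified and logarithmic corrections appear at correspondingly higher order as a regular perturbation of the free equation $y_{zz}=y$. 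In every case the defining series is a product of entire Bessel-type series with ramified prefactors, hence genuinely convergent rather than merely asymptotic, and the same majorants that prove convergence furnish the $O(e^{|\epsilon x|})$ bound.

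The convergent Frobenius series then provides the analytic continuation of $y^+$ from $|\arg\epsilon|\le\pi/2$ to the full sector $\arg\epsilon\in(-3\pi/4,3\pi/4)$, where the Volterra estimate itself fails (for $\Re\,\epsilon<0$ the ``wrong'' exponential defeats the tail integral); term by term the prefactors $\epsilon^{1/2\pm\nu}$ are bounded and the Bessel-type series stay $O(e^{|\epsilon x|})$ there. This is precisely the range on the Riemann surface of the log where the continued $K_\nu$, written through $I_{\pm\nu}$, remains $O(e^{|z|})$ before the branch cut at $\arg z=\pi$ is reached, which fixes the opening $3\pi/4$. \emph{The main obstacle is the threshold} $m=2$: the first-moment condition $\int^\infty s|V|\,ds<\infty$ of classical short-range Jost theory just barely fails, so $y^+$ is not analytic in $\epsilon^2$ at $\epsilon=0$, and one must simultaneously track the logarithmic and ramified corrections and prove that the resulting series converges uniformly in the enlarged sector. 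The Bessel model is the essential guide here, identifying the exponents $\tfrac12\pm\nu$ and supplying the majorant that controls both convergence and the exponential bound for the general potential analytic at infinity and $O(x^{-m})$.
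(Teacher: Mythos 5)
Your treatment of $m=2$ coincides with the paper's: since the first moment $\int^\infty(s-x)|V(s)|\,ds$ diverges logarithmically for $V=a/x^2$, the Volterra iteration gives nothing at the threshold, and both you and the paper fall back on the exact solution $y^+=\sqrt{2\epsilon x/\pi}\,K_\nu(\epsilon x)$, $\nu=\sqrt{a+1/4}$, whose convergent expansion exhibits the ramified powers $\epsilon^{1\pm\nu}$ and the $O(e^{|\epsilon||x|})$ growth away from the cut of $K_\nu$. The genuine gap is the case $m\ge 3$. There the rescaled equation $y_{zz}=(1+a\epsilon^{m-2}z^{-m})y$ is \emph{not} of Bessel type, and your key assertion --- that ``the defining series is a product of entire Bessel-type series with ramified prefactors, hence genuinely convergent'' --- has no basis: no such product representation exists. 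Consequently nothing in your argument proves (a) that each coefficient of the perturbation series in $\epsilon^{m-2}$ has a logarithmic expansion at $z=0$ (your moment heuristic breaks down precisely because the moments $\int^\infty(s-x)^jV\,ds$ diverge for $j\ge m-1$, and reorganizing those divergences into logs is the actual work); (b) that after substituting $z=\epsilon x$ the doubly infinite rearrangement into powers of $\epsilon$ and $\epsilon\log\epsilon$ converges, rather than being merely asymptotic; or (c) that the bound $const\,e^{|\epsilon||x|}$ holds uniformly in the sector $|\arg\epsilon|<3\pi/4$ --- your Volterra estimate covers only $\Re\,\epsilon\ge0$, and a convergent expansion at $\epsilon=0$ continues $y^+$ only for small $|\epsilon|$, which is not a uniform sectorial bound. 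A further mis-step: you suggest ramified powers persist for $m\ge3$ at higher order, whereas the paper shows that for $m\ge3$ no ramified powers occur; the expansion involves only $\epsilon$ and $\epsilon\log\epsilon$.

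Items (a)--(c) are exactly what the paper's proof supplies, by a different mechanism: writing $y^+=e^{-\epsilon x}(1+s(x;\epsilon))$, inverse Laplace transforming in $x$, and solving the Borel-plane equation (\ref{eq:eq7}) by contraction; the expansion $F=\epsilon^2F_2+\sum_{j\ge3}\epsilon^jF_j(\tau)$ then converges because of the factorial bounds (\ref{eq:bdF}), $|F_j|\le 4\tau^j/j!^3$ --- this quantitative ingredient is what your proposal is missing --- while the inductive structure $F_j(s)=s^jW_j(s^{-1},s^{-1}\log s)$ combined with Lemma \ref{InvW} converts logarithms at infinity in the Borel plane into $(\epsilon,\epsilon\log\epsilon)$-analyticity of the Laplace transform, and the ODE bound (\ref{eq:eqC2}) plus contour deformation of the Laplace representation yields the uniform exponential estimate in the sector. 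To salvage your route for $m\ge3$ you would need an analogue of (\ref{eq:bdF}) for the iterated Volterra kernels together with a genuine continuation argument past $\arg\epsilon=\pm\pi/2$; neither is routine, and together they constitute the real content of the proposition.
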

For $V(x)=a/x^2$, (\ref{eq:hom1}) is solved by Bessel functions; the
solution that decays like $e^{-\epsilon x}$ as $x\to \infty$
is given in terms of the Bessel function $K$ as
$$y^+=\sqrt{2\epsilon x/\pi}K_{\alpha}(\epsilon x);\ \alpha =\sqrt{a+1/4}$$
For small $\epsilon$ and fixed $x$, $y^+$ has the form
\begin{equation}
  \label{loc}
  C_1(x) \epsilon^{1+\alpha}A_1(\epsilon)+
 C_2(x) \epsilon^{1-\alpha}A_2(\epsilon)
\end{equation}
with $A_1,A_2$ analytic.
For $m\ge 3$, there are no ramified powers of $\epsilon$ in the expansions,
but all powers of $\epsilon\log\epsilon$ intervene.
\begin{Proposition}
  For fixed $x$ and $m\ge 3$, the function $s(x;\epsilon)$ is of the form $G(\epsilon,\epsilon\log\epsilon)$ where $G(u,v)$ is analytic for small $(u,v)$. The Jost functions
are bounded by $const e^{|\epsilon ||x|}$ uniformly in a sector
$\arg\epsilon\in(-3\pi/4,3\pi/4)$.
\end{Proposition}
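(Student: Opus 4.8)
The plan is to realize $s(x;\epsilon)$ — the Jost solution of (\ref{eq:hom1}), normalized so that $s(x;\epsilon)\sim e^{-\epsilon x}$ as $x\to+\infty$ — as the fixed point of the Volterra equation
$$s(x;\epsilon)=e^{-\epsilon x}+\int_x^\infty \frac{\sinh(\epsilon(\sigma-x))}{\epsilon}\,V(\sigma)\,s(\sigma;\epsilon)\,d\sigma,$$
which one checks by direct differentiation solves $s''=(V+\epsilon^2)s$ and has the stated asymptotics. Both assertions will be read off from the Born (Neumann) series $s=\sum_{n\ge0}s_n$, with $s_0=e^{-\epsilon x}$ and $s_{n+1}=\mathcal{K}s_n$, where $\mathcal{K}$ denotes the integral operator above.

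First I would dispose of the exponential bound. Using $|\sinh(\epsilon u)/\epsilon|\le u\,e^{|\epsilon|u}$ for $u\ge 0$, and $V(\sigma)=a\sigma^{-m}$ with $m\ge 3$, the first moment $\int_1^\infty \sigma|V(\sigma)|\,d\sigma$ is finite — this is the crucial role of $m>2$. A standard Volterra estimate then gives $|s_n(x;\epsilon)|\le e^{|\epsilon||x|}(n!)^{-1}\left(\int_x^\infty\sigma|V(\sigma)|\,d\sigma\right)^n$, so that $|s(x;\epsilon)|\le \mathrm{const}\cdot e^{|\epsilon||x|}$ uniformly for $\arg\epsilon\in(-3\pi/4,3\pi/4)$; the sector opening is fixed by the region in which $e^{-\epsilon x}$ and the kernel stay controlled as one continues onto the second sheet. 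This establishes the bound and simultaneously the uniform convergence of the Born series.

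The structural claim is where the logarithms enter. The elementary building block is
$$\int_x^\infty \sigma^{j-m}e^{-2\epsilon\sigma}\,d\sigma=(2\epsilon)^{\,m-j-1}\,\Gamma(j-m+1,2\epsilon x),$$
an incomplete Gamma function with \emph{integer} second index. Because $m\ge 3$, the indicial exponents at infinity are those of the free equation (producing $e^{\pm\epsilon x}$), and the only non-analytic contribution near $\epsilon=0$ is a single logarithm, carried by $E_1(2\epsilon x)=-\gamma_E-\log(2\epsilon x)+O(\epsilon)$. Tracking this through $s_1=\mathcal{K}s_0$ one finds that the potential $1/\epsilon$ singularity cancels (the relevant bracket vanishes at $\epsilon=0$) and that the surviving logarithm enters only in the combination $\epsilon^{m-2}\log\epsilon=\epsilon^{m-3}(\epsilon\log\epsilon)$, which for $m\ge 3$ is analytic in the pair $(\epsilon,\epsilon\log\epsilon)$. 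This is precisely the mechanism separating $m\ge 3$ from the threshold case $m=2$, where the scale-invariant potential forces genuine ramified powers $\epsilon^{1\pm\alpha}$ rather than logs. The plan is then to introduce the algebra $\mathcal{A}$ of germs $G(\epsilon,\epsilon\log\epsilon)$ with $G$ analytic at the origin, verify that $\mathcal{A}$ is a Banach algebra in a suitable weighted norm, and show that $\mathcal{K}$ maps $\mathcal{A}$ into itself while raising the order in $\epsilon^{m-2}$ at each step. Fed into the geometric convergence of $\sum s_n$ already obtained, this yields $s(x;\epsilon)\in\mathcal{A}$, i.e. the asserted form $G(\epsilon,\epsilon\log\epsilon)$, with $x$-dependent and, for fixed $x$, bounded coefficients.

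The hardest part is the logarithmic bookkeeping inside the induction: I must show that repeated application of $\mathcal{K}$ never produces a $\log\epsilon$ unaccompanied by a compensating power of $\epsilon$, so that each extra factor $(\log\epsilon)^b$ is paired with at least $\epsilon^b$ and the iterates truly remain in $\mathcal{A}$. Since $\mathcal{K}$ integrates $\sigma$ out to $+\infty$ while $\epsilon$ is small, the two-variable analyticity — a small-$\epsilon$, fixed-$x$ statement — has to be propagated through kernels in which $\epsilon\sigma$ is \emph{not} small; the clean way to do this is to work from the incomplete-Gamma representation above, whose small-$\epsilon$/fixed-$x$ expansion is explicit, rather than from naive termwise integration, which diverges exactly at the order where the logarithm is born. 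Combining this log structure with the geometric norm bound from the first step closes both parts of the Proposition at once.
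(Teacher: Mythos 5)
Your proposal takes a genuinely different route from the paper (which never touches the Born series: it inverse-Laplace transforms the equation for the correction in the variable $x$, proves contraction in the Borel plane, and reads the $(\epsilon,\epsilon\log\epsilon)$ structure off the behavior of the Borel-plane coefficients at infinity), but as written it has a concrete gap that affects both halves of the statement. Your Volterra/Born iteration converges only in the closed half-plane $\Re\,\epsilon\ge 0$. For $\Re\,\epsilon<0$ the relevant combination is $\frac{\sinh(\epsilon(\sigma-x))}{\epsilon}e^{-\epsilon\sigma}=\frac{e^{-\epsilon x}-e^{-2\epsilon\sigma+\epsilon x}}{2\epsilon}$, and $|e^{-2\epsilon\sigma}|=e^{2|\Re\,\epsilon|\sigma}$ grows exponentially in $\sigma$ while $V(\sigma)=a\sigma^{-m}$ decays only polynomially; hence already the first iterate $s_1(x;\epsilon)=\int_x^\infty \frac{\sinh(\epsilon(\sigma-x))}{\epsilon}V(\sigma)e^{-\epsilon\sigma}\,d\sigma$ is a divergent integral there. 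So the claimed estimate $|s_n|\le e^{|\epsilon||x|}(n!)^{-1}\big(\int\sigma|V|\big)^n$ cannot hold beyond $\arg\epsilon\in[-\pi/2,\pi/2]$, and neither the bound nor the analytic structure is established on the stated sector $(-3\pi/4,3\pi/4)$ --- which is exactly where the content of the proposition lies, since $|\arg\epsilon|>\pi/2$ corresponds to the continuation of the Green's function onto the second sheet in $p$. Reaching that region requires genuine analytic continuation: either rotate the $\sigma$-contour into the complex plane (legitimate here because $V$ is analytic at infinity --- essentially complex scaling, but you never invoke it), or do what the paper does: represent the correction as a Laplace transform $\int_0^\infty e^{-qx}F(\epsilon q)\,dq$ with $F$ analytic for $\Re\,\tau>-2$ and subexponentially bounded (the bound $C|\tau|^{2/3}e^{3|\tau|^{1/3}}$), and obtain both the sector and the bound $const\, e^{|\epsilon||x|}$ by deforming the $q$-contour.

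Second, the inductive ``logarithmic bookkeeping'' is the heart of the structural claim, and your sketch does not close it; you correctly identify the difficulty but the proposed fix does not meet it. The incomplete-Gamma/$E_1$ computation handles only $s_1$, whose integrand is explicit. For $s_{n+1}=\mathcal{K}s_n$ you must integrate $s_n(\sigma;\epsilon)$ over $\sigma$ up to $\infty$, and membership of $s_n(\sigma;\cdot)$ in your algebra $\mathcal{A}$ for each \emph{fixed} $\sigma$ says nothing about the integral: the $(\epsilon,\epsilon\log\epsilon)$-expansion of $s_n(\sigma;\epsilon)$ is not uniform in $\sigma$ (at large $\sigma$ the natural variable is $\epsilon\sigma$), so the step ``$\mathcal{K}:\mathcal{A}\to\mathcal{A}$'' needs a uniform-in-$\sigma$ structure theorem for the iterates with quantitative control of how all coefficients depend on $\sigma$. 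This is precisely the large-$x$/small-$\epsilon$ interchange problem. The paper resolves it by proving, by induction in the Borel plane, that $F_j(s)=s^jW_j(s^{-1},s^{-1}\log s)$ with $W_j$ analytic, and then applying Lemma \ref{InvW}, which converts exactly this large-variable log structure into $(\epsilon,\epsilon\log\epsilon)$ analyticity of the Laplace transform; your write-up has no analogue of that lemma. Until you supply (i) the continuation past $\Re\,\epsilon=0$ and (ii) the uniform inductive structure lemma, the proposal proves the bound only for $\Re\,\epsilon\ge 0$ and the structure claim only at the level of the first Born approximation.
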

\begin{proof}
  The question is the dependence of the Jost function in $\epsilon$,
  for small $\epsilon\in\CC$. As a mathematical question, this is a
  connection problem: the definition of the Jost function is given in
  terms of the asymptotic behavior as $x\to\infty$ while the analyticity properties in
  $\epsilon$ are sought globally in $x$.

  It is convenient to treat this problem by Borel summation once
  again, this time in $x$, to transform it into a pure analyticity
  question. We analyze the Jost function given, for $\epsilon>0$, by
  $y(x)\sim e^{-\epsilon x}(1+s(x;\epsilon))$ where $s(x;\epsilon)$ is
  an $o(1)$ power series in $1/x$, as $x\to\infty$. It is easy to see
  that such a solution (whose existence is known and also follows from
  the argument below) is unique.

For Borel summability, we have to extract $s$.  We thus write $y(x;\epsilon)= e^{-\epsilon x}(1+s(x;\epsilon))$ and obtain
\begin{equation}
  \label{eq:eqs}
  s''-2\epsilon s'-V(x)s=V(x)
\end{equation}
 To simplify even further the presentation we take $m=3$, but there is nothing special
about this choice, and the extension to other values of $m$ is immediate.

We inverse Laplace transform (\ref{eq:eqs})  (the legitimacy of which is justified ``backwards''
by showing that the Laplace transform of the solution of the thus obtained
equation solves (\ref{eq:eqs}), which has a unique small solution) and obtain
\begin{equation}
  \label{eq:eqL}
  H(q)=\frac{1}{2}p^2+\frac{1}{2}\mathcal{P}^3\frac{H(q)}{q(q+2\epsilon)}
\end{equation} where $\mathcal{P}F$ is the definite antiderivative
of $F$ which is zero at zero.  With the change of variable
${\tau}=\epsilon q$, $H(q)=F({\tau})$, we obtain
\begin{equation}
  \label{eq:eq7}
  F({\tau})=\frac{\epsilon^2 {\tau}^2} {2}+\epsilon \mathcal{P}^3 \frac{F({\tau})}{{\tau}({\tau}+2)}
\end{equation}
We look for a solution which are $O(\epsilon^2 {\tau}^2)$ for small ${\tau}$. Consider
the space $\mathcal{B}$ of functions of the form $f({\tau})={\tau}^2 G({\tau})$ where $G$ is analytic
for, say,  $|{\tau}|<1$ with the norm  $\|f\|=\sup_{|{\tau}|<1}|G({\tau})|$. We see
that this is a Banach space, and eq. (\ref{eq:eqL}) is contractive
in $\mathcal{B}$. It is also unique in the space of functions of the form
${\tau}^2 G({\tau})$ with $G$ defined in  $L^1[0,1]$.  The solution
of (\ref{eq:eqL}) is unique, and  analytic for small ${\tau}$.
As a differential equation this reads
\begin{equation}
  \label{eq:eqdiff}
  F'''=\frac{\epsilon F}{{\tau}({\tau}+2)}
\end{equation}
The argument above, or Frobenius theory, shows that (\ref{eq:eqdiff}) also has
a unique solution which is of the form $\frac{1}{2}\epsilon^2 {\tau}^2(1+o(1))$ for small ${\tau}$. The solution is obviously analytic for
$\Re\, {\tau}>-2$, since there the only singularity of the equation is ${\tau}=0$.

By standard ODE asymptotic results \cite{Wasow} we see that any
solution of (\ref{eq:eqdiff}) is uniformly bounded in $\CC$ by
\begin{equation}
  \label{eq:eqC2}
 C |{\tau}|^{2/3}e^{3|{\tau}|^{1/3}}
\end{equation}
for some $C$. This ensures the necessary
(sub)exponential bounds for taking the Laplace transform.

On the other hand, we look for solutions of (\ref{eq:eqdiff}) in the
form
\begin{equation}
  \label{eq:ser2}
  F=\epsilon^2 F_2+\sum_{j\ge 3}\epsilon^j F_j({\tau})
\end{equation}
 The functions $F_j$ satisfy the recurrence
\begin{equation}
  \label{eq:recd}
 F_{j+1}'''=\frac{F_j}{{\tau}({\tau}+2)}, j\ge 3
\end{equation}
With our initial condition, we get $F_2={\tau}^2/2$ and
\begin{equation}
  \label{eq:reci}
 F_{j+1}=\mathcal{P}^3  \frac{F_j}{{\tau}({\tau}+2)}, j\ge 3
\end{equation}
For now, we take ${\tau}$ in the right half plane, $\mathbb{H}$.
 It can be checked by induction that  $F_j$ are analytic in $\mathbb{H}$
and at zero, and
\begin{equation}
  \label{eq:bdF}
|F_j|\le 4 \frac{{\tau}^j}{j!^3}
\end{equation}
It follows that the series (\ref{eq:ser2}) converges uniformly
on any compact set in $\mathbb{H}$. Moreover, we see
that
the function series
\begin{equation}
  \label{eq:ser2}
 H(q)=\frac{q^2}{2}+\sum_{j\ge 3}\epsilon^j F_j(q/\epsilon)
\end{equation}
also converges uniformly in $q$
on any compact set in $\mathbb{H}$.
The Laplace transform of $H$ reads
\begin{equation}
  \label{eq:eqH2}
  \int_0^{\infty}e^{-qx}H(q)dq=\frac{1}{x^3}+\sum_{j\ge 3}\epsilon^j \int_0^{\infty}e^{-qx} F_j(q/\epsilon)dq=:\frac{1}{x^3}+\sum_{j\ge 3}f_j(x;\epsilon)
\end{equation}
where, once more, the interchange of summation and integration
is justified by the bound (\ref{eq:bdF}). We fix $x$, drop
it from the notations, and note that
in the last sum we have $|f_j|\le const (j!)^{-2}$. We thus
need to study the analyticity of $f_j$. We claim that  $f_j(\epsilon)=G_j(\epsilon,\epsilon\log\epsilon)$ where $G_j(u,v)$
is analytic in small $(u,v)$.  Dominated convergence ensures that the integral
on the left side of (\ref{eq:eqH2}) is of the same form.

We will use the following Lemma which applies at the other end of Watson's Lemma setting.
\begin{Lemma}\label{InvW}
  Assume $F$ is bounded on $0,M$ and analytic in a sector $\{z:|z|>R;\arg(z)\in (a,b)\}$
and  $f$ is of the form $z^NG(z^{-1},z^{-1}\log z)$ where
$G(u,v)$ is analytic in the polydisk $\{(u,v):|u|<M_1^{-1}, |v|<M_1^{-1}\}$,
let $M_1>M$ and consider
$$f(s)=\int_0^{\infty}e^{-sp}F(p)dp$$ Then
\begin{equation}
  \label{eq:eqf5}
  f(s)=s^{-N-1}H(s,s\log s)
\end{equation}
where $H(u,v)$ is analytic for small $u,v$. (In a very similar way,
the lemma could accommodate for fractional powers of $z^{-1}$.)
\end{Lemma}
Note that the convergence of the series in $1/x$, $x^{-1}\log x$ entails
that $F$ extends analytically on the Riemann surface of the log in
a neighborhood of infinity.
\begin{proof}
The proof is elementary.   Let $M>M_1$, and first note that $\int_0^{M_1} e^{-sp}F(s)ds$ is entire, and we only need to consider the integral
from $M$ to infinity. Since for some $C>0$ and all $(l,j)>(0,0)$ we have
  \begin{equation}
    \label{eq:34}
    \int_{M}^{\infty} e^{-sp}p^{N-n-l}\log^j(p)dp\le const M^{N-n-l}\log^j(N)\frac{1}{s}
  \end{equation}
the series
\begin{equation}
  \label{eq:F5}
  F(p)=p^N\sum_{k,l}c_{kl} x^{-k} (x^{-1}\log x)^l
\end{equation}
can be integrated term by term and uniform convergence easily entails that
it is enough to show the property for a single term, of the form
\begin{equation}
  \label{eq:oneterm}
  Q(s)=\int_{M}^{\infty} e^{-sp}p^{N-n-l}\log^j(p)dp
\end{equation}
A finite number of integrations by parts, multiplications by $s$
and differentiations in $s$ brings (\ref{eq:F5}) to
\begin{equation}
  \label{eq:F6}
  \int_{M}^{\infty} e^{-sp}dp =e^{-Ms}/s=\frac{1}{s} +entire(s)
\end{equation}
Undoing the operations above on the last expression in (\ref{eq:F6}) easily
completes the proof.
\end{proof}
For large enough $a$ we now write
\begin{equation}
  \label{eq:eqf2}
  f_j=\epsilon^{j+1} \int_0^\infty e^{-sx\epsilon } F_j(s)ds=
\epsilon^{j+1} \int_0^M e^{-sx\epsilon } F_j(s)ds+\epsilon^{j+1} \int_M^{\infty} e^{-sx\epsilon } F_j(s)ds
\end{equation}
where we choose $M$ large enough.

The first integral is manifestly entire in $\epsilon$.  For the second term, we have the following.
\begin{Lemma}
   $F_j(s)=s^jW_j(s^{-1},s^{-1}\log s)$
where $W_j(u,v)$ is analytic for small $u,v$.
\end{Lemma}
\begin{proof}
  Induction from (\ref{eq:reci}): The right side operations on the right
side consist in multiplication by ${\tau}^{-1}({\tau}+1)^{-1}$, and three
definite antiderivatives (from zero).  It is sufficient to show that
each of these operations preserves the structure above, aside from the
leading order behavior which follows from straightforward power
counting. Multiplication by ${\tau}^{-1}({\tau}+1)^{-1}$ clearly preserves the structure
mentioned.

\begin{equation}
  \label{eq:eq5}
  \int_0^{\tau}=\int_0^M+\int_M^{\tau}
\end{equation}
where the first integral is a mere constant, and $M$ is chosen
so that $W_j$ is analytic for $|u|<1/M$ and $|v|<1/M$. We then write
$F_j=s^j\sum_{k\le j+1,l} c_{j;kl}u^kv^l+F_{j1}$ with $u=s^{-1}$, $v=s^{-1}\log s$
and where we see that $F_{j1}=O(s^{-2}\log s^k)$. The sum contains finitely
many terms, and for it the structure follows by explicit integration.
For the second we write
\begin{equation}
  \label{eq:eq5}
  \int_M^{\tau}=\int_M^{\infty}-\int_{\tau}^\infty
\end{equation}
where the first integral on the right is a constant. For the
second one, the structure for large $t$ follows from
term by term integration and straightforward estimates.
\end{proof}
The rest of the proof follows from Lemma \ref{InvW}, noting that, for
fixed $l$, there are only finitely many terms in the expansion at
infinity of $f$ for which the total power of ${\tau}$ exceeds $-l$.

  It is clear that all the arguments above go through if $x^{-3}$
is replaced by $x^{-m},m>3$, except for (\ref{eq:eqC2}) where
the exponent will be $p^{1/m}$ and the power of the prefactor changes.
 The bounds for the Jost functions follow immediately
from  the Laplace representation of $s(x)$ and contour
deformation.

If the potential is analytic and $O(x^{-m})$ at infinity, then the
function $H$ will, in general, have exponential order one, rather than
fractional, and the bounds (\ref{eq:bdF}) are ``worse'', the power of
the factorial becoming one. This can be shown similarly, using
a roughly similar recurrence. In the $O(x^{-3})$ case, one would
get recurrence of the form
\begin{equation}
  \label{eq:reci2}
 F_{j+1}=\mathcal{P}^3  \frac{F_j}{{\tau}({\tau}+2)}+a_1\mathcal{P}^4  \frac{F_{j-1}}{{\tau}({\tau}+2)}+\cdots+a_{j-2}\mathcal{P}^{j+1}  \frac{F_2}{{\tau}({\tau}+2)}, j\ge 3
\end{equation}
where $a_j$ grow at most geometrically. The rest of the proof is roughly the same, but the details are more cumbersome.
\end{proof}
\section{Acknowledgments.}
This work was supported in part by the National Science Foundation
 DMS-0601226 and  DMS-0600369. We are grateful to S.
Goldstein, J. Lebowitz, R. Tumulka and  J. Lukkarinen for very useful
comments and suggestions.

\end{document}